\documentclass[12pt, a4-paper]{amsart}
\usepackage{a4wide,color}

\usepackage{bbm,dsfont}
\usepackage[english]{babel}
\usepackage{amsmath, amsfonts, amsthm, amssymb}
\usepackage{mathrsfs, mathtools}
\usepackage{booktabs}
\usepackage[ruled,vlined]{algorithm2e} 
\usepackage{algpseudocode}
\usepackage{cases}
\usepackage{dblfloatfix}
\usepackage{multirow}
\usepackage{comment}
\usepackage{subfigure, float}

\mathtoolsset{showonlyrefs}

\usepackage{hyperref, url, natbib}
\hypersetup{
  colorlinks   = true,  
  urlcolor     = blue,  
  linkcolor    = blue,  
  citecolor    = red    
}

\usepackage[foot]{amsaddr}

\newcommand{\tr}{\operatorname{tr}}

\newcommand{\cO}{\mathcal{O}}

\newcommand{\cX}{\mathcal{X}}
\newcommand{\cY}{\mathcal{Y}}
\newcommand{\cZ}{\mathcal{Z}}
\newcommand{\cU}{\mathcal{U}}
\newcommand{\R}{\mathbb{R}}
\newcommand{\C}{\mathbb{C}}

\newcommand{\mf}[1]{\mathfrak{#1}}

\numberwithin{equation}{section}

\newcommand{\kl}{\pl \le \pl}
\newcommand{\gl}{\pl \ge \pl}

\newcommand{\lel}{\pl = \pl}

\newcommand{\rz}{{\mathbb R}}
\newcommand{\zz}{{\mathbb Z}}

\newcommand{\cz}{{\mathbb C}}

\newcommand{\ten}{\otimes}

\newcommand{\pl}{\hspace{.1cm}}
\newcommand{\pll}{\hspace{.3cm}}

\newcommand{\ran}{\rangle}
\newcommand{\lan}{\langle}

\newcommand{\al}{\alpha}

\newcommand{\La}{\Lambda}
\newcommand{\la}{\lambda}
\newcommand{\eps}{\varepsilon}

\newcommand{\A}{{\mathcal A}}

\newcommand{\mz}{{\mathbb M}}

\newcommand{\PP}{{\rm I\! P}}

\newcommand{\si}{\sigma}

\newtheorem{lemma}{Lemma}[section]
\newtheorem{theorem}[lemma]{Theorem}
\newtheorem{cor}[lemma]{Corollary}
\newtheorem{corollary}[lemma]{Corollary}

\newtheorem{rem}[lemma]{Remark}

\newtheorem{prop}[lemma]{Proposition}
\newtheorem{proposition}[lemma]{Proposition}
\newcommand{\re}{\begin{rem}\rm}
  \newcommand{\mar}{\end{rem}}

\newcommand{\qd}{\end{proof}\vspace{0.5ex}}
\newcommand{\prf}{\begin{proof}[\bf Proof:]}
\newtheorem{defi}[lemma]{Definition}
\usepackage{xcolor}

\begin{document}

\title{Universal computation with magic Hamiltonians}

\author{Marius Junge}
\address{Department of Mathematics, University of Illinois at Urbana-Champaign, Urbana, IL 61801, USA}
\author{Jason Pollack}
\address{Department of Electrical Engineering \& Computer Science and Institute for Quantum \& Information Sciences, Syracuse University, New York, USA}
\author{Luke Visser}
\address{Department of Applied Mathematics, Eindhoven University of Technology, Eindhoven, The Netherlands}
\email{lukevisser99@gmail.com}

\thanks{MJ is partially supported by NSF DMS 2247114 and NSF DMS 2556195. LV is financially supported by Netherlands Organisation for Scientific Research (NWO) under Grant NGF.1582.22.009.}

\maketitle

\section*{Abstract} \label{ch:abstract}
In the conventional theory of quantum computation, universality is discussed in terms of properties of unitary gate sets. In many experimental setups, however, we instead have access to a parametrized set of Hamiltonians, which can be exponentiated for any desired time. Accordingly, we formulate a continuous analogue of universality, where adding one expensive Hamiltonian control to a cheap set of local control operators allows for universal computation. We find several concrete sets of 1- or 2-local Pauli generators $S$ and $n$-qubit ``magic" Hamiltonians $H$ that together generate the full Lie algebra $\mf{su}(2^n)$, with a special focus on the Ising Hamiltonian $J$. We propose a method to directly implement any unitary by exponentiating these generator sets, and give bounds on the needed time of application of the magic Hamiltonian $H$. We find that the number of applications of $H$ needed to approximate any unitary up to accuracy $\epsilon$ scales exponentially in the number of qubits as expected from the standard application of the Solovay-Kitaev theorem. 
We also find phase transitions in the size of the generated Lie algebras as the parameters of the Ising Hamiltonian are varied. We find an algorithmic application of the Chow/Rashevskii technique to implement unitaries corresponding to iterated commutators.

\newpage
\tableofcontents

\section{Introduction} \label{ch:jason_introduction}

Computer scientists often conceptualize quantum computation as gate-based, by analogy to the classical circuit elements that make up electronic processors. In practice, however, in a particular experimental implementation of a quantum computer some gates may be more expensive or difficult to perform. Quantum information theorists often work within a model \cite{Bravyi:2004isx} in which gates implementing unitary operators in the Clifford group are effectively free while other `magic' gates are expensive. In this paper we develop a continuous analogue: a so-called magic Hamiltonian combined with local cheap resource Hamiltonians 
allows for universal computation. Mathematically speaking, a small Lie algebra is enhanced by one additional generator to generate the full Lie algebra of $n$ qubits.
In quantum information theory, combining different Hamiltonians and the geometry of tangent vectors to the unitary group is used in Nielsen complexity \cite{Nielsen:2005mkt} and dynamical Lie algebras \cite{DAlessandro:2024eff,Aguilar2024FullClassificationLie}.

We are motivated by a physical fact: in actual experimental platforms the starting point is typically not a gate but a Hamiltonian, or, more precisely, a set of Hamiltonians with some control parameters such as driving strengths, magnetic fields, etc. Specific gates are then engineered by evolving the system with one of these Hamiltonians for a particular length of time \cite{DeFouquieres2011SecondEngineering, 2021PulseSystems}.
A transmon qubit, for example, is controlled using only a single type of pulse as well as ``virtual'' changes of basis \cite{McKay:2017rej,Kjaergaard202617SuperconductingQubits}. Multi-qubit gates can similarly be implemented by exponentiated Hamiltonians, such as nonlinear inductive elements (e.g.\ SNAILs) \cite{McKinney2023Co-DesignedComputers}, although some platforms such as neutral atoms may also implement some multi-qubit gates (like SWAP operations) in more fundamentally discrete ways \cite{Henriet2020QuantumAtoms}.

To fully realize a quantum computer a universal gate set in the sense of Solovay-Kitaev is needed, in which any unitary operator can be approximated up to some universal threshold by a circuit composed of gates in the universal gate set \cite{KSV, dawson2006SolovayKitaev}. 
An often-used example of a universal gate set is the combination of Clifford gates enhanced by a single $T$ gate (see e.g. \cite{Bravyi:2004isx}). In this example, the Clifford gates are often easy to implement with high fidelity, while the $T$ gate is the bottleneck. One can then draw a parallel back to the control Hamiltonians, where one requires a complete control set to get the same universality. On different quantum platforms, some implementable control gates may be considered less expensive than others and the cost of a circuit counts the number of times an expensive gate is used. 

In this paper, we use this parallel to develop a Hamiltonian resource theory of universal quantum computation. We are particularly interested in ``magic'' Hamiltonians that, when combined with various easy-to-implement Hamiltonian resource sets, allow one to approximate any unitary on the system. 
\begin{defi}[Magic Hamiltonian] Let $S$ be a fixed resource set of $n$-qubit Hamiltonians. A Hamiltonian $H\in \mz_{2^n}$ is called \emph{magic with respect to $S$} if the Lie algebra generated by $iS$ and $iH$ is $\mathfrak{su}(2^n)$.  
\end{defi}

At least in the current era, where different quantum platforms are still competing to take the lead for feasible quantum computation, the characterization of cheap versus expensive Hamiltonians is not carved in stone. In this paper we imagine setups where an Ising-type Hamiltonian \begin{equation}\label{eq:ising_intro}
    J_{\gamma, \lambda} = \sum_j (1+\gamma) X_j X_{j+1} + (1-\gamma)Y_j Y_{j+1} + \lambda \sum_j Z_j
\end{equation}
acting on all qubits in a system is the expensive resource, and one or a few low-weight Pauli strings acting on only a few qubits in the system are cheap resources. In our resource sets, we will often use Pauli strings acting only on a single qubit or two neighbouring qubits. We denote by $\cZ(1), \cX(1)$, (defined formally in Def.~\ref{def:control_sets}) the sets of generators acting on each single qubit as $Z_i$ and $X_i$ respectively. Similarly, $\cZ(2), \cX(2)$ are the sets of generators acting on each two neighbouring qubits as $Z_i Z_{i+1}$ and $X_i X_{i+1}$ respectively. Other choices using collective Hamiltonians are discussed in Hu et al.\ \cite{hu}.

One main contribution of this paper is to present several natural examples of magic Hamiltonians (see section 3 for a more comprehensive list). 
\begin{theorem}\label{sample} The following table, excerpted from Theorem \ref{thm:controlsets}, provides examples and non-examples of magic Hamiltonians: 
    \[ \begin{array}{c|c|c|c|c}
 \text{No.}& \text{local set } S & \text{Dim}(L(S)) & \text{Hamiltonian} & \text{magic} \\  \hline
 1)& \mathcal{X}(2)\cup \mathcal{Z}(1)& 2n^2-n& X_1+Z_{1}Z_2 & \text{yes}  \\ 
  5)&\{X_1,Z_1Z_2\}\cup \mathcal{Z}(1) & n+5 & J_{\gamma,\lambda} &  \text{yes}\\
  8)&\{X_1\}\cup \mathcal{Z}(2)  & n+1 & J_{\gamma,\lambda} & \text{yes}^* \\ 
  \end{array} \] 
Here yes$^*$ is under the condition $\la\neq 0$, $\gamma\notin\{-1,0,1\}$.
\end{theorem}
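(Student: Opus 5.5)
Each row of the table is a statement about Lie algebras generated by Pauli strings and one extra Hamiltonian, so I will treat the rows separately. The common strategy has three parts: (i) identify $L(S)$ and check its dimension; (ii) show that adding $iH$ produces enough new Pauli strings through iterated commutators; (iii) conclude $\mathfrak{su}(2^n)$ by a generation criterion. For (iii) I use the standard fact that if a set of Pauli strings generates an algebra acting irreducibly and containing $X_1, Z_1$ together with nearest-neighbour couplings that connect all qubits, then it is all of $\mathfrak{su}(2^n)$.

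\textbf{Row 1.} The set $\mathcal{X}(2)\cup\mathcal{Z}(1)$ is the free-fermion (matchgate) set. By Jordan–Wigner it maps to quadratic Majorana monomials, so $L(S)\cong\mathfrak{so}(2n)$, which has dimension $n(2n-1)=2n^2-n$. The added term $X_1+Z_1Z_2$ splits into $X_1$ (odd in Majoranas) and $Z_1Z_2$ (quartic). Commuting with $Z_1$ separates the two pieces: $[Z_1,X_1+Z_1Z_2]\propto Y_1$ lies outside $\mathfrak{so}(2n)$, and iterating yields $X_1$ and $Z_1Z_2$ separately. Adding $X_1=\gamma_1$ to the quadratic algebra gives $\mathfrak{so}(2n+1)$. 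Adding the quartic $Z_1Z_2$ breaks the free-fermion structure, and the known classification of Pauli Lie algebras then gives $\mathfrak{su}(2^n)$. As a concrete check, $X_1$, $Z_1$ and the $X_iX_{i+1}$, $Z_i$ generate every $X_i$ by conjugation chains, and $Z_1Z_2$ together with swaps generates every $Z_iZ_{i+1}$. This contains a standard universal Pauli set.

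\textbf{Row 5.} First I verify $\dim L(S)=n+5$. The sets $\{X_1,Z_1,Z_1Z_2\}$ generate $Y_1$ and $Y_1Z_2, X_1Z_2$, giving an $\mathfrak{su}(2)\oplus\mathfrak{su}(2)$-type algebra of dimension 6 on qubit 1 tensored with $\{I,Z_2\}$, plus commuting $Z_2,\dots,Z_n$. Up to bookkeeping of overlaps this gives $n+5$.

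Next, adding $J$: subtract its $\lambda\sum Z_j$ part, which lies in $L(S)$. Commuting the remaining $XX/YY$ part with $Z_1$ and $X_1$ isolates the boundary terms $X_1X_2$ and $Y_1Y_2$, with $\gamma$-dependent coefficients. Commuting repeatedly with $Z_j$ and with the $\mathrm{ad}_{Z_1}$ projections then extracts the individual two-site terms and propagates control along the chain. I expect this to work for all $\gamma,\lambda$, because $X_1$ and $Z_1Z_2$ already break all symmetries at the edge.

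\textbf{Row 8.} Here $L(\{X_1\}\cup\mathcal{Z}(2))$ is abelian except for $X_1$ against $Z_1Z_2$. It has dimension $n+1$, consisting of the $n$ generators plus $Y_1Z_2$.

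This row is the main obstacle: with $H=J$ the only single-qubit field is inside $J$. I would compute $\mathrm{ad}_{Z_jZ_{j+1}}$ on $J$. These maps act on $X_jX_{j+1}$ and $Y_jY_{j+1}$ via neighbouring terms. The condition $\gamma\notin\{0,\pm1\}$ ensures that $XX$ and $YY$ appear with distinct nonzero weights, so the double commutator with the $\mathcal{Z}(2)$ elements separates $(1+\gamma)$-terms from $(1-\gamma)$-terms. In particular, it produces $X_jY_{j+1}+Y_jX_{j+1}$-type and $X_jY_{j+1}-Y_jX_{j+1}$-type combinations independently; at $\gamma=0$ the $U(1)$ symmetry $\sum Z_j$ is preserved, which is the expected obstruction. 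The condition $\lambda\neq0$ supplies the $Z_j$ fields, which are needed to break $Z_2$-parity-type symmetries such as conservation of $\prod X_j$.

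With these pieces obtained, together with $X_1$, I would reduce to the Row 1 situation, then check irreducibility, which gives $\mathfrak{su}(2^n)$. The non-examples in the full table would be handled by exhibiting a conserved symmetry. The hardest step is the careful separation argument in Row 8, which may require a Vandermonde-type argument on the eigenvalues of $\mathrm{ad}$.
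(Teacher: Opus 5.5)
Your Row 1 follows the paper's route: commute $X_1+Z_1Z_2$ twice with $Z_1$ to split off $X_1$ and $Z_1Z_2$, then use the extension result for $\mathcal{X}(2)\cup\mathcal{Z}(1)\cup\{X_1,Z_1Z_2\}$. Your dimension counts are also right.

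\textbf{Row 8: the separation step fails.} Each $\mathrm{ad}_{Z_jZ_{j+1}}$ maps a Pauli string to a multiple of a Pauli string. Moreover, $X_kX_{k+1}$ and $Y_kY_{k+1}$ anticommute with exactly the same elements of $\mathcal{Z}(2)$, namely $Z_{k-1}Z_k$ and $Z_{k+1}Z_{k+2}$. So every iterated $\mathcal{Z}(2)$-commutator of the bond term $a_k=(1+\gamma)X_kX_{k+1}+(1-\gamma)Y_kY_{k+1}$ is a multiple of $a_k$ itself or of a $Z$-dressed copy. Examples are $Z_{k-1}[(1+\gamma)Y_kX_{k+1}-(1-\gamma)X_kY_{k+1}]$ and $Z_{k-1}[(1-\gamma)X_kX_{k+1}+(1+\gamma)Y_kY_{k+1}]Z_{k+2}$. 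The swapped coefficients only ever sit on Pauli strings different from those of $a_k$. Hence no linear combination isolates $X_kX_{k+1}$, and you do not get $XY+YX$ and $XY-YX$ independently either.

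The missing idea is to commute $a_k$ twice with the neighbouring bond term $a_{k+1}$, itself extracted from $J$. One has $[a_{k+1},a_k]\propto(1-\gamma^2)(X_kZ_{k+1}Y_{k+2}-Y_kZ_{k+1}X_{k+2})$. A second commutator with $a_{k+1}$ gives $(1-\gamma)X_kX_{k+1}+(1+\gamma)Y_kY_{k+1}$, on the same strings as $a_k$. This is where the hypotheses enter: $\gamma\neq\pm1$ makes $[a_k,a_{k+1}]\neq0$, and $\gamma\neq0$ makes the result independent of $a_k$. Then $J$ minus the bond terms is $\lambda\sum_jZ_j$, and commuting with the now available $X_jX_{j+1}$ isolates each $Z_j$. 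You are then in $\mathcal{X}(2)\cup\mathcal{Z}(1)\cup\{X_1,Z_1Z_2\}$, which generates $\mathfrak{su}(2^n)$.

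\textbf{Row 5: the propagation step is only asserted} (``I expect this to work''). For $\gamma\neq0$ your $\mathcal{Z}(1)$ idea does work: $\mathrm{ad}_{Z_{k+1}}\mathrm{ad}_{Z_k}$ isolates bond $k$ of $J$ and sends $aX_kX_{k+1}+bY_kY_{k+1}$ to $-4(bX_kX_{k+1}+aY_kY_{k+1})$. At $\gamma=0$, however, $J_{0,\lambda}$, all $Z_j$ and $Z_1Z_2$ commute with $\sum_jZ_j$, while $X_kX_{k+1}$ does not. So the symmetry breaking by $X_1$ has to be transported into the bulk, and neither $Z_j$-commutators nor $\mathrm{ad}_{Z_1}$ can do that. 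The paper does it by induction:
\begin{enumerate}
\item from $X_kX_{k+1}$, form $X_kY_{k+1}$;
\item isolate bond $k+1$ of $J$ as $(1+\gamma)Y_{k+1}Y_{k+2}+(1-\gamma)X_{k+1}X_{k+2}$;
\item commute twice with $X_kY_{k+1}$, which annihilates $Y_{k+1}Y_{k+2}$ and returns $X_{k+1}X_{k+2}$.
\end{enumerate}

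\textbf{Your closing criterion is false.} The ``standard fact'' in your step (iii) fails for $\langle\mathcal{X}(2)\cup\mathcal{Z}(1)\cup\{X_1\}\rangle\cong\mathfrak{so}(2n+1)$. This algebra contains $X_1$, $Z_1$ and every $X_iX_{i+1}$. It acts irreducibly, since it contains all Majorana strings $Z_1\cdots Z_{j-1}X_j$ and $Z_1\cdots Z_{j-1}Y_j$. Yet it has dimension $2n^2+n$. So ``check irreducibility'' proves nothing, and the conclusion must come from an explicit reduction as above.

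The same example shows your Row 1 ``concrete check'' is wrong as stated: $X_1,Z_1,X_iX_{i+1},Z_i$ never produce $X_i$ for $i\ge2$. You need $Z_1Z_2$, e.g.\ $[Y_1Z_2,Y_1X_2]\propto Y_2$.
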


The first example 1) demonstrates that even a local Hamiltonian can be magic with respect to local resource sets. Example 5) shows that Ising Hamiltonians are magic even with respect to nearly-commutative resources augmented by full control of one qubit. Example 8) is inspired by the work of Araki and Matsui on phase transitions \cite{AM}: small changes of parameters have significant impact on the global system. Whereas the work of Araki and Matsui considers the properties of the infinite-dimensional ground state, our phase transitions concern the ability to perform universal computation (see Appendix \ref{app:operator_algebra} for more discussion of the relation of our work to operator-algebra theory).

Given a particular choice of magic Hamiltonian, another main contribution of our work is to give explicit algorithmic recipes for constructing any desired gate as a sequence of Hamiltonian applications. Indeed, for a set $S$ of Hamiltonians one can generate gates of the form  
 \begin{equation}\label{prodd} 
 u\lel e^{it_1h_1}e^{it_2h2}\cdots e^{it_mh_m} \pl ,\pl t_j\in \rz, h_j\in S
 \end{equation}  
by choosing the control parameters $t_h$ and the order of the Hamiltonians $h_j$. The closure of the set of all such controllable unitaries is called the Lie group generated by the Lie algebra of iterated commutators of the resource set $S$. The work of Chow \cite{chow1940systeme} and Rashevskii \cite{rashevsky1938connecting} 
from the forties showed geometrically that any gate can be obtained by a continuous version of \eqref{prodd} (used in \cite {hu}). Rashevkii's method, also used to prove the related famous Ball-Box theorem, even provides an algorithmic construction of this approximation procedure, which we do not believe is widely known in the quantum computing literature. The key feature is to consider the zigzag
 \[ e^{ith_1}e^{ith_2}e^{-ith_1}e^{ith_2}\lel e^{t^2[h_1,h_2]}+O(t^3) \pl .\]
Here $[a,b]=ab-ba$ is the usual commutator. The proof of Theorem \eqref{sample} provides a quasi-algorithmic method for universal computation:  
 \begin{enumerate}
 \item[1)] Calculate $L(S)$, the span of all iterated commutators $[H_{i_1},[H_{i_2},\cdots [H_{i_{k-1}},H_{i_k}]$ for elements in $S$,
  \item[2)] Show that adding the magic Hamiltonian boosts $L(S)$ to $L(S\cup \{H\})=\mathfrak{su}(2^n)$, the Lie algebra of the unitaries for qubits.
  \item[3)] Apply Rashveskii's zigzagging method to approximate $e^{ta}$ for a basis of elements in $L(S\cup\{H\})$.
  \item[4)] Use Trotterization to approximate every unitary $u=e^{\sum_j \la_j a_j}$.
   \end{enumerate}
   
In our context all resource sets are Pauli strings. Let us recall that Pauli strings are products
 \[ P\lel V_{j_1}\ten \cdots \ten V_{j_m} \pl ,\pl V_j\in \{I,X,Y,Z\} \pl .\] 
The key insights enabling 1) are the orthogonality of Pauli strings and the commutator formula (see also \cite{Aguilar2024FullClassificationLie})  
\begin{equation}\label{pq}
  [P,Q] \lel 2 \delta_{[P,Q]=0} PQ \pl .
 \end{equation} Finding the right order of iterated commutators is then a task for a classical computer. Step 2) is essentially a repetition of step 1 if $H$ is a linear combination of Pauli strings.   
Rashevekii's method 3) and Trotterization 4) are essentially algorithmic. Moreover, the cost of Trotterization is well-studied \cite{Childs:2019hts}. It therefore appears that every individual step, although non-trivial, should be feasible to allow explicit algorithmic constructions of approximating unitaries. (Indeed, we write a formal algorithm for this procedure in Appendix \ref{app:alg-computation}.)  

In the examples in Theorem \ref{sample} (and Theorem \ref{thm:controlsets}) the dimensions of $L(S)$, the Lie algebra generated by the resource sets alone, are polynomial in the number of qubits. (In most of our examples we can even explicitly construct the Lie algebras $L(S)$ and their presentations as Pauli matrices, going slightly beyond the classification results of \cite{Aguilar2024FullClassificationLie}.) Since the dimension of the full Lie algebra of all Pauli strings is $4^{n}-1$, the magic Hamiltonian has to enable a gigantic boost, enabling the construction of many more gates. Our final main contribution is to show that most of these gates are very expensive. To be more formal, define (anticipating Definition \ref{def:alt_approx_length}) the alternating approximation length $l_H(\eps|S)$ as the smallest $m$ such that  any arbitrary $u$ can be approximated up to $\eps$ by a sequence alternating $m$ times between evolution with a resource and evolution with the magic Hamiltonian: 
 \[ \|u-u_1\cdots u_{2m+1}\|<\eps \pl ,\pl u_{2j+1}\in G(S) \pl ,\pl u_{2j}=e^{it_jH} \pl.\]
This alternating approximation length can be computed for arbitrary unitaries in the Lie group $G(S)=\exp(L(S))$ given by the resources and control in $H$.       

\begin{theorem} All the magic Hamiltonians in Theorem \ref{sample} and section 3 have exponential approximation length (in the number of qubits).
\end{theorem}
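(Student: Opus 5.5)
The plan is a dimension-counting argument, in the spirit of lower bounds for universal gate sets. It compares the dimension of the set of products with $m$ alternations to the dimension of $SU(2^n)$, and converts this into an $\eps$-net count.

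\emph{Step 1: reachable set.} Fix $m$ and consider the map
\[ \Phi_m : G(S)^{m+1}\times \rz^m \to SU(2^n),\qquad (g_0,\dots,g_m,t_1,\dots,t_m)\mapsto g_0 e^{it_1H} g_1 \cdots e^{it_mH} g_m . \]
This map is smooth, in fact real-analytic. Its image is the set of unitaries reachable with $m$ alternations. Its domain has dimension $(m+1)\dim L(S)+m$.

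\emph{Step 2: reduce the domain.} The times $t_j$ may be taken in a compact interval: the closure of $\{e^{itH}\}$ is a torus, so it suffices to use the parameter $t$ restricted to a compact set, for instance $[0,T]$ with $T$ depending on $H$ and $\eps$. Also $G(S)$ is compact, since it is a closed subgroup of $SU(2^n)$. Hence the domain is a compact manifold of dimension $D_m=(m+1)d_S+m$, where $d_S=\dim L(S)$. By the examples in Theorem~\ref{sample} and Theorem~\ref{thm:controlsets}, $d_S$ is polynomial in $n$: at most $2n^2$ in the cases listed.

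\emph{Step 3: covering-number comparison.} $\Phi_m$ is Lipschitz with respect to the operator norm. The Lipschitz constant is at most $C(m,T,\|H\|)$, roughly linear in $m$ and in $T\|H\|$. Therefore the image can be covered by at most $(C'/\eps)^{D_m}$ balls of radius $\eps$. On the other hand, any $\eps$-net of $SU(2^n)$ in operator norm has at least $(c/\eps)^{4^n-1}$ elements, by a volume comparison with Haar measure. If every $u$ is to be $\eps$-approximated by the image, we need
\[ D_m\log(C'/\eps)\gtrsim (4^n-1)\log(c/\eps), \]
so that
\[ m\gtrsim \frac{4^n-1}{d_S+1}-O(1). \]
This is exponential in $n$ because $d_S=\mathrm{poly}(n)$. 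For a fixed $\eps$, the logarithmic factors must be handled by letting $C'$ depend polynomially on $m$; this affects the count only by $\log m$ terms, which is harmless. Alternatively, for $\eps$ small, use Sard's theorem: $D_m<4^n-1$ forces the image to have measure zero. Combined with compactness and the fact that $l_H(\eps|S)$ is monotone in $\eps$, this gives the bound for all sufficiently small $\eps$ without relying on a volume argument.

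\emph{Step 4: upper bound.} To show that the approximation length is genuinely exponential, and not merely bounded below by an exponential, I would invoke the constructive procedure: Rashevskii zigzags followed by Trotterization, as in steps 1) to 4) and the Solovay--Kitaev scaling. This gives $l_H(\eps|S)\le \exp(O(n))\,\mathrm{polylog}(1/\eps)$.

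\emph{Main obstacle.} I expect the main obstacle to be making the Lipschitz/covering estimate uniform. The magic-Hamiltonian times $t_j$ are unbounded a priori, and the norm $\|H\|$ (for example $\|J_{\gamma,\lambda}\|\sim n$) enters the constant. The compact-torus reduction in Step 2 handles the first issue. The dependence on $\|H\|$ only contributes logarithmic factors in $n$, so it does not affect the exponential rate $4^n/\mathrm{poly}(n)$.
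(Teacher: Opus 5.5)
\textbf{Verdict.} Your skeleton is the paper's: Szarek's lower bound $N(U(2^n),\eps)\ge(c/\eps)^{4^n-1}$ played against a covering of the $m$-alternating products, as in Theorem \ref{entropy}. The gap is that the covering-number \emph{upper} bounds needed to run it are missing.

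\textbf{The $e^{itH}$ factor.} Restricting to $t\in[0,T]$ and counting this factor as one dimension with Lipschitz constant $\sim T\|H\|$ puts everything into $T$. But $T$ is not controlled by $\|H\|$ or $n$; it depends on the arithmetic of the spectrum. For example, eigenvalues $0,1,\eta$ with tiny irrational $\eta$ already force $T\gtrsim 1/\eta$. Moreover, this parametrization can never beat covering the closure directly. A curve of operator-norm length $T\|H\|$ is covered by $\lesssim T\|H\|/\delta$ balls, so $\delta$-density in $\overline{G_H}$ forces $\ln(T\|H\|/\delta)\gtrsim \ln N(\overline{G_H},2\delta)$. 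Each $H$-factor therefore effectively contributes $\dim\overline{G_H}$ dimensions, not one, and this is $2^n$ when the eigenvalues of $H$ are rationally independent. So $m\gtrsim(4^n-1)/(d_S+1)$ does not follow. What your count actually gives is $m\gtrsim 4^n/\big(d_S\ln(\cdot)+\ln N(G_H,\eps/m)\big)$, which says nothing until $N(G_H,\cdot)$ is bounded explicitly. The paper supplies this bound in Lemma \ref{lem:covering_exp(tH)}: the eigenbasis torus $\mathbb{T}^{2^n}$ in general, $2\pi$-periodicity for integer spectrum, and rank or locality otherwise. With only the general torus bound it gets $l_H\gtrsim 2^n/n$, which is all it claims for cases 7)--9). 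For $J_{\gamma,\lambda}$ one could do better: it is quadratic in the Majorana operators, so $e^{itJ_{\gamma,\lambda}}\in Spin(2n)$ and Lemma \ref{covspin} applies. That would have to be argued, though.

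\textbf{The local group $G(S)$.} The step ``compact of dimension $d_S$, hence $(C'/\eps)^{d_S}$ balls'' hides a constant that depends on the embedding. The closed circle $\{\mathrm{diag}(e^{it},e^{iLt})\}\subset U(2)$ is one-dimensional but needs about $\min(L/\eps,\eps^{-2})$ balls. This is why Remark \ref{Lipschitz} says only the lower bound comes for free. The paper proves $N(G_0,\delta)\le(Ca/\delta)^d$ with $a=O(n)$ case by case, via Jordan--Wigner and the spin representation (Lemma \ref{covspin}) or via product/torus decompositions (Lemma \ref{prod}). For Pauli-generated groups you could instead argue as follows. A maximal commuting set of Pauli strings in $L(S)$ spans a Cartan subalgebra, so every element of $G(S)$ is $e^{X}$ with $X\in iL(S)$ and $\|X\|\le\pi d_S$. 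Since $\exp$ is $1$-Lipschitz on skew-Hermitian matrices, this gives the needed bound.

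\textbf{The Sard alternative.} It fails as written. With $t\in\rz$ the image is null but may be dense, and the compact reduction uses $T=T(\eps)$. Done correctly on $G(S)^{m+1}\times\overline{G_H}^{\,m}$, it only gives $l_H(\eps)>m$ for $\eps<\eps_0(n,m)$ with $\eps_0$ uncontrolled, so there is no bound at a fixed accuracy.

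\textbf{Step 4.} It is not needed, since the statement is a lower bound. Its claim $l_H\le e^{O(n)}\mathrm{polylog}(1/\eps)$ is also unsupported. Zigzags for commutators of depth up to $4n^3$ (Theorem \ref{thm:commutators_algorithmically}) cost exponentially many factors in the depth, and the closing remarks of Section \ref{sec:approx_length} leave sharpness open.
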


The Solovay-Kitaev theorem guarantees that the dependence on $\eps$ in the definition of the alternating length is not really essential below a certain threshold. The quasi-algorithm above also shows that a finite number of ``time stamps'', i.e.\ durations by which the Hamiltonians can be evolved, are sufficient for the construction of universal gate sets with the help of finite nets in $G(S)$.

The remainder of this paper is organized as follows. In section \ref{sec:math_prelim}, we study abstract properties of Lie groups and Lie algebras generated by Pauli strings. Then in section \ref{sec:generated-Lie-Alg} we explicitly calculate the Lie algebras generated by specific resource sets and find corresponding magic Hamiltonians which, together with their resource sets, enable all possible unitaries to be created. These results are collected in Theorem~\ref{themagic}. Armed with these results on generated Lie algebras, we pass to the corresponding Lie groups and their implementation in section \ref{sec:algorithmic_implementation} using Trotterization and the Chow-Rashevskii theory. One important component of this implementation is the number of commutators needed to generate an arbitrary Pauli string $P$, which we calculate in Theorem~\ref{thm:commutators_algorithmically}. These implementations using Trotterization and Chow-Rashevskii use alternating applications of the given generators to create a given unitary. In section \ref{sec:approx_length}, we investigate how often one has to alternate between the magic Hamiltonian and elements of the local generator set to approximate any unitary with precision $\epsilon$, where we find in Theorem~\ref{thm:exponential_m} that the number of Hamiltonian applications grows exponentially in the number of qubits for all our examples based on Szarek's covering number results \cite{Szarek1998MetricSpaces}. We conclude in section \ref{sec:conclusion}.

\paragraph{\textit{Acknowledgements}} Part of this research was performed while the authors were visiting the Institute for Pure and Applied Mathematics (IPAM), which is supported by the National Science Foundation (Grant Nos. DMS-1925919 and DMS-2422832). The authors acknowledge discussions with Ethan Arnault, Oliver Tse, David Garcia-Perez, Aaram Harrow, and Hong-Ye Hu.

\section{Lie-algebra preliminaries}\label{sec:math_prelim}

On a quantum computer with $n$ qubits, the possible gates are unitary operators, elements of $\cU(2^n)$. The control operators, which can be exponentiated to give the implementable gates, live in the Lie algebra, which consists of all Hermitian matrices in $\C^{2^n \times 2^n}$, with a binary operation, the Lie bracket, given by the commutator
\begin{equation}
    [A,B] = AB - BA.
\end{equation}

In quantum computing, the set of Pauli matrices plays a fundamental role. The Pauli matrices are given by 
\begin{equation}
\begin{aligned}
        I & = \begin{pmatrix}
        1 & 0 \\0 & 1
    \end{pmatrix} & X = \sigma_x & = \begin{pmatrix}
        0 & 1 \\1 & 0
    \end{pmatrix} \\
    Y = \sigma_y & = \begin{pmatrix}
        0 & -i \\i & 0
    \end{pmatrix} & Z = \sigma_z  & = \begin{pmatrix}
        1 & 0 \\0 & -1
    \end{pmatrix}.
\end{aligned}
\end{equation}
The commutator between two Pauli matrices is given by 
\begin{equation}
    [\sigma_i, \sigma_j] = 2 i \sigma_k,
\end{equation}
for $i, j, k$ a cyclic permutation of $x, y, z$, and zero otherwise. The matrices $\{I, i\sigma_x, i\sigma_y, i\sigma_z\}$ span the Lie algebra acting on a single qubit. For $n$ qubits, the Lie algebra is spanned by Pauli strings, the tensor products of Pauli matrices $\alpha_j \in \{I,X,Y,Z\}$ acting on qubit $j$ as
\begin{equation}
    V\lel  \alpha_1 \otimes \alpha_2 \otimes \cdots \otimes \alpha_n.
\end{equation}
The well-known equality (e.g., \cite{Aguilar2024FullClassificationLie})
\begin{equation}\label{comP}
  [V,W] \lel \begin{cases} 2VW & [V,W]\neq 0 \\ 
                              0 & [V,W] \lel 0 \end{cases} 
                              \end{equation}
 for arbitrary Pauli strings $V,W$ as above will play a crucial role in this paper, allowing us to use group-theoretic arguments. 
 
 In many cases we will take our starting resource set to be a subset $S\lel \{V_1,...,V_m\}$ of Pauli strings. By finding the sets of all iterated Lie brackets, we can determine the corresponding Lie algebra. Such a system is called a H\"ormander system \cite{10.1007/BF02392081} (bracket generating) if the iterated commutators generate all non-trivial Pauli strings. In particular, the Lie algebra generated by Pauli strings will have a basis also consisting of Pauli strings. 
 
 In the Lie algebra setting, anti-selfadjoint generators $X^{*}=-X$ are the preferred choice. So formally, we should talk about the Lie algebra generated by $iS$, whereas calculations for groups prefer to work with selfadjoint strings.  We will work with Pauli strings and freely switch between these two perspectives. For readers familiar with the stabilizer formalism, we note that a given set of Pauli strings also generates a discrete subgroup inside the unitary group, and we will investigate the relation between these objects and their representation theory. Nevertheless, the concrete representation in $\mz_{2^n}$ perfectly reflects the linear stricture. Even better, subsets of Pauli strings are linearly independent if and only if they are orthogonal in the Hilbert-Schmidt sense. 

In the following we collect some abstract facts about the groups and Lie algebras generated by Pauli stings. It turns out that maps which send Pauli's to Pauli's and preserve the commutation relation can be extended to homomorphisms of the abstract group generated by the given representation, and even to Lie-algebra and $C^*$-isomorphisms.  For commuting Pauli generators, this is a fairly obvious, often-used fact; we need the non-commuting counterpart. These abstract arguments will be used to help identify Lie groups and Lie algebras: for example, by mapping a set of Pauli strings to a different Pauli strings with the same commutation relations but using fewer qubits.

\begin{defi} \label{def:paulie} Let $\{P_1,...,P_m\}\subset U(2^n)$ be a set of Pauli strings. 
 \begin{enumerate}
 \item[i)] We denote by $ D=D(P_1,...,P_m) \lel \{[P_{i_1},[P_{i_2},\cdots ]]\cdots ]] | 1\le  i_1,...,i_k\le m \}$  
the set of their iterated commutators.  
\item[ii)] We denote by $G(P_1,...,P_m)\subset U(2^n)$ the abstract subgroup they generate. 
\item[iii)] Let $K=\{1,-1\}\subset U(2^n)$. By $G(K,P_1,...,P_m)$ we denote the subgroup generated by $K$ and $P_1...,P_m$. 
  \item[ii)] A map  $\si:\{P_1,...,P_m\}\to \mz_{2^m}$ is said to be \textbf{PauLie} if $\si(P_j)$ is again a Pauli string and
 \[ [P_j,P_k]\lel 0 \pll \Leftrightarrow \pll [\si(P_j),\si(P_k)] \lel 0 \pl .\]
 \end{enumerate}
 \end{defi}

\begin{lemma}\label{lem:PauLie_homomorphism} Let $P_1,..,P_m$'s be Pauli strings. The group 
$G(K,P_1,...,P_m)$ is uniquely specified by the  matrix of coefficients $\eps_{ij}$ in the relation 
 \[ P_iP_j\lel \eps_{ij}P_jP_i \pl .\]  
In particular, a  PauLie map induces a group homomorphism.
\end{lemma}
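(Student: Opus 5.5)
The plan is to realize $G(K,P_1,\dots,P_m)$ as a quotient of an abstract group that depends only on the matrix $(\eps_{ij})$, and then compare two such quotients. Let $\Gamma_\eps$ be the abstract group generated by a central element $c$ and symbols $p_1,\dots,p_m$, subject to the relations
\[ c^2\lel 1 ,\pll p_j^2\lel 1 ,\pll p_ip_j \lel c^{\delta_{ij}}p_jp_i ,\]
where $\delta_{ij}\in\{0,1\}$ is defined by $\eps_{ij}=(-1)^{\delta_{ij}}$. Since Pauli strings square to the identity and $P_iP_j=\eps_{ij}P_jP_i$ with $\eps_{ij}\in\{\pm1\}$, the assignment $c\mapsto -1$, $p_j\mapsto P_j$ respects every relation. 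It therefore extends to a surjective homomorphism $\pi:\Gamma_\eps\to G(K,P_1,\dots,P_m)$.

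Next I establish a normal form in $\Gamma_\eps$. Using the commutation relations, any word can be reordered to $c^{b}p_1^{a_1}\cdots p_m^{a_m}$ with $a_j,b\in\{0,1\}$, at the cost of collecting a power of $c$. Hence $|\Gamma_\eps|\le 2^{m+1}$, and $\Gamma_\eps$ is a central extension of $\zz_2^m$ by $\langle c\rangle$ whose commutator pairing is $\delta$. This part is routine bookkeeping of signs, and it shows that $\Gamma_\eps$ depends only on $(\eps_{ij})$.

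The main step, and the real obstacle, is showing that $\pi$ is injective, i.e.\ that the normal forms have distinct images. The image of $c^bp_1^{a_1}\cdots p_m^{a_m}$ is $(-1)^bP_1^{a_1}\cdots P_m^{a_m}$. Two normal forms collide exactly when some nontrivial product $P_1^{a_1}\cdots P_m^{a_m}$ is a scalar in $\{\pm1\}$; in general the product can even be $\pm i$, as in $XYZ=i$. To handle this I would pass to the binary symplectic representation: a Pauli string corresponds, up to phase, to a vector $v_j\in\zz_2^{2n}$, and the product above is a scalar iff $\sum a_jv_j=0$.

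So the uniqueness statement holds cleanly when the $v_j$ are linearly independent, a condition I would add explicitly since the example $X,Y,Z$ versus $X\otimes I, Z\otimes I, Y\otimes Z$ shows it cannot be dropped. In that case $\pi$ is an isomorphism $\Gamma_\eps\cong G(K,P_1,\dots,P_m)$. Without independence, $G$ is $\Gamma_\eps$ modulo the subgroup of relations $c^bp^a$ with $(-1)^bP^a=1$. Even when $\pi$ is injective, the group $G(K,P_1,\dots,P_m)$ could be enlarged by $\pm i$ if such phases appeared, but independence rules this out.

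Finally, let $\si$ be a PauLie map. Its images $\si(P_j)$ have the same matrix $\eps$, so both $G(K,P_j)$ and $G(K,\si(P_j))$ are quotients of the same $\Gamma_\eps$. The map $\si$ then induces the homomorphism $\pi'\circ\pi^{-1}$ whenever $\ker\pi\subseteq\ker\pi'$. This holds automatically when the $P_j$ are symplectically independent, since then $\ker\pi$ is trivial. In general, one has to check that every scalar relation $(-1)^b\prod P_j^{a_j}=1$ among the $P_j$ is also satisfied by the $\si(P_j)$, and this is the point I would verify, or impose as a hypothesis, in the applications.
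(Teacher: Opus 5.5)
Your route differs from the paper's and is more careful. The paper argues inductively that $G_{k+1}=G_k\rtimes_\alpha\mathbb{Z}_2$, with $\alpha$ conjugation by $P_{k+1}$. It extends $\si$ one generator at a time, checking only that $\si(P_{k+1})$ induces the same automorphism. You instead realize every such group as a quotient of one presented group $\Gamma_\eps$, use the normal form $c^bp_1^{a_1}\cdots p_m^{a_m}$, and isolate the only obstruction as $\ker\pi$.

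This exposes a genuine hole in the stated lemma. The decomposition $G_{k+1}=G_k\rtimes\mathbb{Z}_2$ is valid only if $P_{k+1}\notin G_k$. The condition ``$P_{k+1}\notin G(K,P_1,\dots,P_k)$ for every $k$'' is equivalent to your condition that no nontrivial product $P_1^{a_1}\cdots P_m^{a_m}$ equals $\pm I$, i.e.\ $\ker\pi=1$. Under that hypothesis your argument is a correct proof, as is the homomorphism criterion $\ker\pi\subseteq\ker\pi'$; without it the lemma is false. Once the hypothesis is added, the paper's induction is a shorter path to the same conclusion via the universal property of semidirect products. Your presentation makes the universal object and the exact obstruction explicit. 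Note also that symplectic independence of the $v_j$ is sufficient but not necessary for $\ker\pi=1$.

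Your supporting counterexample, however, does not work, and the remark that $G$ ``could be enlarged by $\pm i$'' is confused.
\begin{itemize}
\item For $\{X,Y,Z\}$ the nontrivial ordered products are $Z,\,Y,\,iX,\,X,\,-iY,\,iZ$ and $XYZ=iI$, none equal to $\pm I$.
\item For $\{X\otimes I,\,Z\otimes I,\,Y\otimes Z\}$ the full product is $-i\,I\otimes Z$, and again no product is $\pm I$.
\item By your own collision criterion both maps $\pi$ are injective. Both groups are isomorphic to $\Gamma_\eps$ (order $16$), and the PauLie map induces a group isomorphism sending $iI\mapsto -i\,I\otimes Z$.
\item What this example breaks is trace preservation. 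That matters for Lemma~\ref{cstar} and the corollary after it, not for this lemma.
\item $G=\pi(\Gamma_\eps)$ by definition, so it cannot be enlarged. A scalar $iI$ in $G$ is just the image of a central element of order four in $\Gamma_\eps$.
\end{itemize}

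A correct counterexample is $\{Z_1,Z_2,Z_1Z_2\}$ versus $\{Z_1,Z_2,Z_3\}$. Both have the same all-commuting matrix $\eps$, but the groups $G(K,\cdot)$ have orders $8$ and $16$. The PauLie map with $Z_1Z_2\mapsto Z_3$ is not a homomorphism, since $Z_1\cdot Z_2\cdot(Z_1Z_2)=I$ would have to map to $Z_1Z_2Z_3\neq I$. This kind of dependency occurs in the paper's own generating sets, e.g.\ $\cZ(1)\cup\{Z_1Z_2\}$ in Theorem~\ref{thm:XdoubleZsingle_extensions}. So your extra hypothesis is not vacuous there.
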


\begin{proof} Let $G_k=G(K,P_1,...,P_k)$ be the group generated by the first $k$ elements. We note that for a fixed $P_{k+1}$  the map 
 \[ \al(Q) \lel P_{k+1}QP_{k+1} \pl \]
is a group homomorphism. Therefore $G_{k+1}=G_k\rtimes_{\al}\zz_2$ can be interpreted 
as  a semi-direct product consisting of pairs $(Q_1,Q_2P_{k+1})$ with $Q_{1,2}\in G_k$ using 
$Q_1P_{k+1}Q_3=Q_1\al(Q_3)P_{k+1}$. Given a PauLie-map $\si$ we can proceed by induction to show that the canonical extension $\si(P_{i_1}\cdots P_{i_l})=\si(P_{i_1})\cdots \si(P_{i_l})$ is a well-defined group homomorphism. Indeed, since the matrix $\eps_{ij}$ is preserved, we see that 
 \[  \si(P_{k+1})\si(Q)\si(P_{k+1}) \lel \si(\al(Q))   \]
 induces exactly the same automorphism.
 \end{proof}

In the following we use the standard notation  $\cz[G]=\{\sum_g \al_g g| \al_g\in\cz\}$ for the group algebra of a group $G$. Here $g$ serves as a symbol denoting linear independent vector indexed by group elements and we only allow finitely many $\al_g$ to be non-zero. The multiplication rule is determined by bilinearity and the group law.  
\begin{cor}\label{LGG} Let $\si$ be a PauLie map, $\si^{G}$  be the unique extension of $\si^G$  to the group algebra $\cz[G(K,P_1,...P_m)]$, and $\si^{Lie}$ be the unique Lie algebra extension. Then
 \[ \si^{G}|_{D(P_1,...,P_m)} \lel \si^{Lie} \pl .\]
\end{cor}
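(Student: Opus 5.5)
We prove the corollary by induction on the length of iterated commutators, comparing the two extensions directly on the elements of $D=D(P_1,\dots,P_m)$. We use that $\si^{G}$ is the algebra homomorphism $\cz[G(K,P_1,\dots,P_m)]\to\mz$ obtained by linearly extending the group homomorphism of Lemma~\ref{lem:PauLie_homomorphism}, with $\si(-1)=-1$. We use that $\si^{Lie}$ is the Lie homomorphism on $L(P_1,\dots,P_m)$ determined by $\si^{Lie}(P_j)=\si(P_j)$. The basic identity \eqref{comP} shows that every element of $D$ is either $0$ or $2^{k-1}$ times a product $\pm P_{i_1}\cdots P_{i_k}$. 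Every element of $D$ is therefore a scalar multiple of an element of $G(K,P_1,\dots,P_m)$, so $\si^{G}$ makes sense on $D$. Here we read elements of $D$ as elements of the group algebra via their defining words, and $\si^{G}$ is well defined on them because the group law already encodes the matrix $\eps_{ij}$.

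The key step is the following claim, proved for every word $Q=\pm P_{i_1}\cdots P_{i_k}\in G$ and every generator $P_j$:
\[ \si^G([P_j,Q]) \lel [\si(P_j),\si^G(Q)] \pl .\]
Since $\si^G$ is an algebra homomorphism on the group algebra, the left side equals $\si^G(P_jQ-QP_j)=\si(P_j)\si^G(Q)-\si^G(Q)\si(P_j)$, which is exactly the right side. This holds regardless of whether the commutator vanishes. We also need that vanishing is respected. Since $P_j$ and $Q$ are Pauli strings up to sign, they either commute or anticommute. Whether they commute is determined by the parity of the number of $i_l$ with $\eps_{j i_l}=-1$. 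The PauLie condition preserves all $\eps_{ij}$, so $\si(P_j)$ commutes with $\si^G(Q)$ exactly when $P_j$ commutes with $Q$. Both sides therefore vanish simultaneously, and otherwise both equal $2$ times the image of the product, by \eqref{comP}.

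Now we induct on $k$. For $k=1$, $\si^G(P_j)=\si(P_j)=\si^{Lie}(P_j)$. For the inductive step, write $d=[P_j,d']\in D$ with $d'\in D$ of length $k-1$. Since $d'=c\,Q$ with $c$ a scalar and $Q\in G$, linearity and the claim give
\[ \si^G(d)=[\si(P_j),\si^G(d')] \pl .\]
The induction hypothesis turns this into $[\si^{Lie}(P_j),\si^{Lie}(d')]$, which equals $\si^{Lie}(d)$ because $\si^{Lie}$ is a Lie homomorphism.

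The main obstacle is well-definedness. Both $\si^{G}$ and $\si^{Lie}$ must not depend on the chosen word representing an element, since different iterated commutators can coincide as matrices, and $\si^{Lie}$ must actually exist. For $\si^G$ this is Lemma~\ref{lem:PauLie_homomorphism}. For $\si^{Lie}$ we plan to construct it by restricting $\si^G$ to $L$, spanned by $D$. The difficulty is whether linear relations among elements of $D$ in $\mz_{2^n}$ persist under $\si^G$. Pauli strings are linearly dependent only if they are proportional, and proportional group elements differ by a factor in $K$ or in the center. We would address this by passing to the quotient group modulo elements acting as scalars, and by checking that $\si$ maps such central elements to scalars with the same phase. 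We would impose this check as an implicit hypothesis if necessary. The induction above then shows that this restriction is the Lie extension.
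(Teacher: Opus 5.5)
Your proof is correct and takes essentially the paper's route: you extend the group homomorphism linearly to $\cz[G]$, use multiplicativity together with the PauLie condition (and $\si(-1)=-1$) to get $\si^G([P_j,Q])=[\si(P_j),\si^G(Q)]$, and iterate along the commutator word, matching $\si^{Lie}$ through its Lie-homomorphism property. Your closing worry about well-definedness is legitimate but not needed for the statement as posed: since $\si^{Lie}$ is assumed to exist, your induction already shows that $\si^G$ of any word-lift of $d\in D$ equals $\si^{Lie}(d)$, so it depends only on the matrix $d$; by contrast, constructing $\si^{Lie}$ by restricting $\si^G$, as the paper's proof claims to do, can genuinely fail, e.g.\ for the one-qubit PauLie map $X\mapsto X$, $Y\mapsto Z$, $Z\mapsto Y$, whose group homomorphism sends $iI=XYZ$ to $XZY=-iI$ and which admits no Lie extension, because linearity would send $[X,Y]=2iZ$ to $2iY$ while $[\si(X),\si(Y)]=[X,Z]=-2iY$.
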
 

\begin{proof} By linear independence and linearity, the group homomorphism $\si^{G}$ extends to $\cz[G]$. Then 
\[ \si^G([P,Q]) \lel 2\delta_{[P,Q]\neq 0}
\si^G(PQ) \lel  2\delta_{[P,Q]\neq 0} \si^G(P)\si^G(Q)
\lel [\si^G(P),\si^G(Q)] \pl . \] 
By applying this relation iteratively and observing that iterated commutators are linear multiples of Pauli strings, we deduce that the restriction of the group homomorphism induces an extension on the Lie algebra which preserves commutators. By the uniqueness assumption, this coincides with the Lie group extension of the PauLie map.
\end{proof}

\begin{lemma}\label{cstar} Let $G\subset U(n)$ be a group and $\pi:G\to U(m)$ be a group homomorphism such that $\tau(\pi(g))=\tau(g)$ holds for the normalized trace $\tau(x)=\frac{tr(x)}{tr(1)}$. Then $\pi$ extends to a $C^*$-homomorphism.
\end{lemma}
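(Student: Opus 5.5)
Let $A\subset \mz_n$ be the linear span of $G$. Since $G$ is a group closed under adjoints ($g^*=g^{-1}\in G$), $A$ is a unital $*$-subalgebra of $\mz_n$, hence a finite-dimensional $C^*$-algebra. The plan is to define $\Pi:A\to \mz_m$ by linear extension,
\[ \Pi\Big(\sum_g \al_g g\Big) \lel \sum_g \al_g \pi(g), \]
and to prove that it is well defined. Multiplicativity and $*$-preservation then follow immediately from $\pi(gh)=\pi(g)\pi(h)$ and $\pi(g^{-1})=\pi(g)^*$.

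First, $\Pi$ is well defined on the group algebra $\cz[G]$, and there it is a $*$-homomorphism. The only issue is that the canonical map $q:\cz[G]\to A$ may have a kernel, so we must show $\ker q\subset \ker \Pi$. For this we use the trace condition. Let $x=\sum_g \al_g g\in \cz[G]$ and write $q(x)$ and $\Pi(x)$ for its images in $\mz_n$ and $\mz_m$. Since both maps are $*$-homomorphisms out of $\cz[G]$, we have
\[ \tau\big(q(x)^*q(x)\big) \lel \sum_{g,h}\overline{\al_g}\al_h\, \tau(g^{-1}h) \lel \sum_{g,h}\overline{\al_g}\al_h\, \tau(\pi(g^{-1}h)) \lel \tau\big(\Pi(x)^*\Pi(x)\big). \]
The middle equality uses $\tau(\pi(k))=\tau(k)$ for $k=g^{-1}h\in G$. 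Hence $\|q(x)\|_{2}=\|\Pi(x)\|_2$ in the normalized Hilbert–Schmidt norms. This norm is faithful on matrices, so $q(x)=0$ forces $\Pi(x)=0$. Therefore $\Pi$ factors through $A$ and gives a unital $*$-homomorphism $A\to \mz_m$ extending $\pi$.

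Finally, a $*$-homomorphism between finite-dimensional $C^*$-algebras is automatically contractive, since it maps positive elements to positive elements and $\|a\|^2=\|a^*a\|$ is the largest spectral value. So $\Pi$ is a $C^*$-homomorphism. The same isometry identity also shows $\Pi$ is injective on $A$: it is an isometry for the $2$-norms, hence a trace-preserving $*$-isomorphism onto its image. This is the form needed later to transport Lie algebras along PauLie maps, since Pauli strings satisfy $\tau(P)=0$ for $P\neq \pm I$.

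The only step with real content is well-definedness of the linear extension. It is handled entirely by the trace-preservation identity above, so I expect no serious obstacle; the rest is routine.
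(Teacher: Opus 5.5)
Your proof is correct and follows the same route as the paper: you extend $\pi$ linearly and use the identity $\tau(\Pi(x)^*\Pi(x))=\tau(x^*x)$ to obtain well-definedness and injectivity, and finite dimensionality then gives the $C^*$-homomorphism. Your version is somewhat more careful than the paper's one-line argument, since it states explicitly that well-definedness means $\ker(\cz[G]\to A)\subset\ker\Pi$ and it also justifies contractivity.
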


\begin{proof} We note that $A=\{\sum_g \al_g g|\al_g \in \cz\}$ is a $^*$-algebra. Then 
 $\pi(\sum_g\al_g g)=\sum_{g}\al_g \pi(g)$. Note that for $x=\sum_g a_gg$ we have  
  \[ 
  \tau(\pi(x)^*\pi(x)) \lel \sum_{g,g'}\bar{\al}_g\al_{g'}\tau(\pi(g^{-1}g') )
  \lel \tau(x^*x) \pl. 
  \] 
This implies that $\pi$ is injective and well-defined. Since we are in the finite-dimensional case, we obtain a $^*$-homomorphism defined on the $C^*$-algebra $A$. \end{proof}  

Our next result reveals the metric properties 
of a PauLie map in terms of the Lie group generated by Pauli generators. 

\begin{cor} Let $P_1,...,P_m$ be a set of Pauli strings and $\si$ a PauLie map with values in $\mz_d$. Then the extension $\si^G$ to the group algebra of $\cz[G(K,P_1,...,P_m)]$ extends to an injective $C^*$-homomorphism and also induces a group homomorphism on the Lie group  which preserves the Finsler metric given by the operator norm (see section \ref{sub:univeral_sets})  and an isometry for the distance given by the operator norm. 
\end{cor}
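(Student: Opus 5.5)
The plan is to reduce the statement to Lemma \ref{cstar}, with the group homomorphism supplied by Lemma \ref{lem:PauLie_homomorphism}. Put $G=G(K,P_1,\dots,P_m)$ and let $\si^G:G\to U(d)$ be the homomorphism from Lemma \ref{lem:PauLie_homomorphism}, normalised so that $\si^G(-1)=-1$. The main task is the trace condition $\tau(\si^G(g))=\tau(g)$ for every $g\in G$.

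\textbf{Trace condition.} Every element of $G$ has the form $g=\pm i^{k}Q$, where $Q$ is a Pauli string (possibly the identity). Products of Pauli strings are Pauli strings up to a phase, so $\tau(g)=0$ unless $g$ is a scalar, in which case $\tau(g)$ is that scalar. The same holds for $\si^G(g)$, since it is a product of Pauli strings in $\mz_d$. It therefore suffices to show that $g$ is a scalar exactly when $\si^G(g)$ is, and that the two scalars agree.

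\textbf{Main obstacle.} This equivalence is where I expect the difficulty. A product $P_{i_1}\cdots P_{i_l}$ can be a scalar because of a multiplicative dependence among the $P_j$. For example $P_3=\pm P_1P_2$ is such a relation, and it is not visible from the commutation matrix $\eps_{ij}$ alone. The proof of Lemma \ref{lem:PauLie_homomorphism} asserts that $G$ is determined by $\eps_{ij}$, which amounts to treating $G$ as the abstract group presented by the generators, the relations $P_j^2=1$, the commutation relations, and $K$ central. I would take this reading as given.

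\begin{itemize}
\item Well-definedness of $\si^G$ then already requires that $\si$ send relations to relations.
\item Applying the same argument to $(\si^G)^{-1}$ on the image requires that the image Pauli strings satisfy no extra relations.
\end{itemize}

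So I would make explicit that $\si$ respects multiplicative (in)dependence of the generators, i.e.\ the $\mathbb{F}_2$-linear independence of their symplectic vectors modulo phases, including signs. Under this hypothesis $\si^G$ is an isomorphism onto its image that fixes $\pm1$. Hence $g$ is a scalar exactly when $\si^G(g)$ is, with the same scalar, and the trace condition follows. If this hypothesis cannot be arranged, I would restate the corollary for independent generators. Lemma \ref{cstar} then gives an injective $C^*$-homomorphism $\pi$ from the $C^*$-algebra $A=\mathrm{span}\, G\subset\mz_{2^n}$ into $\mz_d$. The identity $\tau(\pi(x)^*\pi(x))=\tau(x^*x)$ shows that $\pi$ is isometric in the $2$-norm. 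It is also isometric in the operator norm, because an injective $^*$-homomorphism between finite-dimensional $C^*$-algebras preserves spectra.

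\textbf{Lie algebra, Lie group and metric.} By Corollary \ref{LGG}, $\pi$ restricted to $L=\mathrm{span}_\rz\{iD(P_1,\dots,P_m)\}$ is the Lie homomorphism $\si^{Lie}$. The Lie group generated by $L$ lies in the unitary group of $A$, and $\pi$ commutes with the holomorphic functional calculus, so $\pi(e^{x})=e^{\pi(x)}$. This gives a continuous group homomorphism $\exp(L)\to U(d)$ extending $\si^{Lie}$. Since $\pi$ is an operator-norm isometry on $A$:

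\begin{itemize}
\item $\|\pi(u)-\pi(v)\|=\|u-v\|$ for all $u,v$ in the group;
\item the Finsler length of a smooth path $\gamma$, namely $\int\|\gamma(t)^{-1}\gamma'(t)\|\,dt$, is unchanged when $\gamma$ is replaced by $\pi\circ\gamma$, because $\pi$ is linear and multiplicative.
\end{itemize}

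The length of every path in the group is therefore preserved, and $\pi$ preserves the Finsler metric.
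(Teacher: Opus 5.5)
Your route is the paper's route. You take the group homomorphism from Lemma~\ref{lem:PauLie_homomorphism}, verify trace preservation, and apply Lemma~\ref{cstar}. Then an injective $^*$-homomorphism is isometric, the Lie group lies inside $A=\mathrm{span}\,G$, and path lengths are transported. The paper gets the reverse Finsler inequality by applying the same bound to the inverse PauLie map; that is your invertibility of $\pi$ on its image.

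The one real difference is the step you singled out, and there you are right and the paper is not. Its proof only asserts ``since the map $\sigma$ is PauLie it is also trace preserving'', which is false for PauLie maps as defined. Your example type already breaks Lemma~\ref{lem:PauLie_homomorphism}: the map $Z_1,Z_2,Z_1Z_2\mapsto Z_1,Z_2,Z_3$ preserves all commutation relations but not the relation $P_1P_2P_3=1$.

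Even when the homomorphism exists, the trace can fail. The map $X,Z,Y\mapsto X\otimes I,\,Z\otimes I,\,Y\otimes Z$ is PauLie and induces an injective group homomorphism. However, it sends the scalar $XZY=-i$ to $-i\,I\otimes Z$, so $\tau$ is not preserved and no linear (let alone $C^*$-) extension to $\mathrm{span}\,G$ exists. So the corollary genuinely needs an extra hypothesis of the kind you propose. With it your argument is complete, and the concrete instances used later in the paper (e.g.\ $X_1\mapsto X_0X_1$ in Lemma~\ref{covspin}) satisfy it.

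One precision point: ``fixes $\pm1$'' does not give ``with the same scalar'' once $G$ contains $\pm i$. This happens as soon as the generators satisfy a dependence relation with phase $\pm i$. On one qubit, the map $X,Z,Y\mapsto Z,X,Y$ has the following properties:
\begin{itemize}
\item it preserves all commutation relations and the $\mathbb{F}_2$-relation;
\item it induces a well-defined automorphism of the Pauli group fixing $\pm1$;
\item it nevertheless sends $XZY=-i$ to $ZXY=i$, acting as complex conjugation on scalars.
\end{itemize}
The hypothesis should therefore read: for every $a\in\{0,1\}^m$,
\[
P_1^{a_1}\cdots P_m^{a_m}=c\,1 \iff \sigma(P_1)^{a_1}\cdots\sigma(P_m)^{a_m}=c\,1,
\]
with the same $c\in\{\pm1,\pm i\}$. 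Because the commutation matrix is preserved, every word reduces to $\pm P^a$ with the same sign on both sides, so this condition covers all words. In your fallback version with $\mathbb{F}_2$-independent generators, independence is needed for both the $P_j$ and the $\sigma(P_j)$. Then $G\cap\mathbb{C}1=\{\pm1\}$, and fixing $\pm1$ does suffice.
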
 

\begin{proof} Let $\si^{G}:\cz[G(K,P_1,...,P_m)]\to \mz_d$ be the induced $^*$-algebra homorphism. Since the map $\si$ is PauLie it is also trace preserving and extends to a trace preserving map. By Lemma \ref{cstar}, we deduce that $\si^{G}$ is a $^*$-homomorphism on $A=\cz[G]\subset \mz_n$. Note that the identity map is also PauLie, and hence the concrete embedding of $\cz[G]$ is indeed injective, hence $\si:A\to \mz_d$ is an injective $^*$-homomorphism.  Since we are  in finite dimension, $A$ is also closed.  Recall that injective $^*$-homorphisms are indeed isometries, see \cite{TAk1}. Then we observe that for all $j$, the operator $P_j$ is in the image of  the group algebra. The same is true for $e^{itP_j}$ for all $t$ and iterated products. This means the Lie group generated by $G$ is also in $A$. The Finsler metric on $G$ is given by 
 \[ d_{op}(g,h) \lel \inf \int_0^1 \|\gamma'(t)\|_{op} dt, \]
where the infimum is taken over all piecewise differentiable paths $\gamma(t)\in G$. Then $\gamma'(g)=h(t)\gamma(t)$ is given by a vector $h(t)$ in the Lie algebra. In particular 
 \[ d_{op}(\si^G(g),\si^G(h)) \kl \int_0^1 \|\si^G(h(t))\|_{op} dt \pl. \]
Taking the infimum shows that $d(\si^G(g),\si^G(h))\kl d(g,h)$. We have equality by applying this to the inverse of $\si$, again a PauLie map.   
\end{proof}

The corollary shows that PauLie maps can automatically extended to metric-preserving Lie group homomorphisms $\pi=\si^G$ such that the canonical  differential $d\pi_{1}=\si^{Lie}$ is the unique Lie algebra extension. As shown by the canonical tensor-product representation $\pi(g)=g^{\ten_n}$, metric preservation is by no means automatic. The uniqueness of $\si^{Lie}$ also shows that the number of iterated commutators required to build  iterated Pauli strings is a PauLie invariant. Let us formalize this length function:
\begin{defi}\label{def:length_function}
    Define the length function $l(R)$ as the minimal number of commutators with elements from a generator set $G$ that is required to make $R$. If we define $G_k$ iteratively, with $G_1 = G$ and $G_{k+1}$ as 
    \begin{equation}
        G_{k+1} = G_k \cup \{[P_i, P_j] ~:~ P_i \in G_1,~ P_j \in G_k \},
    \end{equation}
    we can formally define $l(R)$ as
    \begin{equation}
        l(R)=\inf \big\{ k ~:~ R \in \operatorname{span}(G_k)\}.
    \end{equation}
\end{defi}
The next observation extends the commutation relation of \eqref{comP} to three Pauli strings. We implicitly used such an iterative formula in Corollary \ref{LGG}. 
\begin{lemma}\label{lem:paulis_closed} Let $R,S,T$ be Pauli strings. Then 
 \[ [R,[S,T]] \lel 4 \delta_{[S,T]\neq 0}(\delta_{[R,S]\neq 0}\delta_{[R,T]=0}+
 \delta_{[R,S]= 0}\delta_{[R,T]\neq 0}) RST \pl .  
  \]
\end{lemma}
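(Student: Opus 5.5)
Apply the two-string formula \eqref{comP} twice, tracking the commutation pattern through a sign bookkeeping. First I will set up the sign notation. Any two Pauli strings either commute or anticommute, since each single-qubit factor does and the signs multiply. So I write $RS=\eps_{RS}SR$ with $\eps_{RS}\in\{\pm1\}$, and similarly $\eps_{RT}$, $\eps_{ST}$. In this notation \eqref{comP} reads $[V,W]=(1-\eps_{VW})VW=2\delta_{[V,W]\neq0}VW$.

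Next I compute the inner commutator. If $[S,T]=0$, both sides vanish because of the factor $\delta_{[S,T]\neq 0}$, and we are done. Otherwise $[S,T]=2ST$. The product $ST$ is a scalar multiple of a Pauli string, the scalar being a power of $i$. Hence $[R,[S,T]]=2[R,ST]$, and the dichotomy above still applies to $R$ and $ST$.

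The key step is to determine whether $R$ commutes with $ST$. I compute
\[ R(ST)=\eps_{RS}SRT=\eps_{RS}\eps_{RT}(ST)R .\]
So $R$ anticommutes with $ST$ exactly when $\eps_{RS}\eps_{RT}=-1$, that is, when exactly one of $[R,S]$, $[R,T]$ is nonzero. Since these two cases are disjoint, the indicator of this event is precisely $\delta_{[R,S]\neq0}\delta_{[R,T]=0}+\delta_{[R,S]=0}\delta_{[R,T]\neq0}$. In that case, applying \eqref{comP} to $R$ and the string $ST$ (the scalar passes through linearly) gives $[R,ST]=2RST$. In the other case $[R,ST]=0$. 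Combining the factors $2\cdot 2$ yields the claimed identity $[R,[S,T]]=4\,\delta_{[S,T]\neq0}(\cdots)RST$.

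There is no real obstacle here. The only point needing care is justifying the use of \eqref{comP} for $ST$, which is a Pauli string only up to a phase. This follows because the commutator is bilinear and the phase is a scalar.
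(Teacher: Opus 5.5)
Your proof is correct, but it takes a different route from the paper's.

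The paper expands $[R,[S,T]]=[[R,S],T]+[S,[R,T]]$ via the Jacobi identity. It then does a case analysis on whether $[R,S]$ and $[R,T]$ vanish, applying \eqref{comP} only to pairs among $R,S,T$. The nontrivial case is when $R$ anticommutes with both $S$ and $T$: the two terms give $4(RTS+RST+SRT+RST)$, and one has to check by hand that this cancels.

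You instead collapse the inner bracket to $2ST$ and use multiplicativity of the commutation sign, $\eps_{R,ST}=\eps_{RS}\eps_{RT}$. This settles all cases at once and explains the cancellation conceptually: two anticommutations compose to a commutation. Your identity $R(ST)=\eps_{RS}\eps_{RT}(ST)R$ in fact already gives $[R,ST]=(1-\eps_{RS}\eps_{RT})RST$ directly. So the ``point needing care'' about applying \eqref{comP} to the phased string $ST$ is not actually needed.

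Your version also iterates immediately. For Pauli strings $R_1,\ldots,R_k$, the nested commutator $[R_1,[R_2,\ldots,[R_{k-1},R_k]]]$ is either $0$ or $2^{k-1}R_1\cdots R_k$. It is nonzero exactly when each $R_j$ anticommutes with $R_{j+1}\cdots R_k$. This is the iterative formula the paper uses implicitly in Corollary \ref{LGG}, and it makes visible that only the sign matrix $\eps_{ij}$ of Lemma \ref{lem:PauLie_homomorphism} matters.

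The paper's Jacobi-based computation is longer, but it uses the same technique as the subadditivity Lemma \ref{additive}.
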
 
\begin{proof} Recall the Jacobi identity
 \begin{align*}
  [R,[S,T]] &= [R,ST]-[R,TS] \lel [R,S]T+S[R,T]-[R,T]S-T[R,S] \\
  &= 
   [[R,S],T]+[S,[R,T]] \pl .
  \end{align*}
Because commutators of Paulis obey \eqref{comP} we have only to consider the cases where each commutator is resepctively zero or nonzero. If $[R,S]=0=[R,T]$, we find $0$. If $[R,S]\neq 0=[R,T]$ we find
 \[ 2[RS,T] \lel 2[R,T]S+2R[S,T] \lel 4\delta_{[S,T]\neq 0} RST \pl .\] 
If $[R,S]=0\neq [R,T]$, we get
 \[ 2[S,RT] \lel 2[S,R]T+2R[S,T] \lel  4 \delta_{[S,T]\neq 0} RST \pl .\] 
If $[R,S]\neq 0\neq [R,T]$ and $[S,T]\neq 0$, we find 
  \begin{align*}
   2([RS,T]+ [S,RT])&= 2([R,T]S+R[S,T]+[S,R]T+R[S,T] \\
   &=  4 (RTS+RST+SRT+RST) \lel 0 \pl .
   \end{align*}
Collecting all cases yields the asertion.   
\end{proof}

An easy application of the Jacobi identity shows that this length function is subadditive. This is a very useful tool for the concrete calculations we will carry out in the next section. 

\begin{lemma}\label{additive} $l([P,Q])\kl l(P)+l(Q)$.
\end{lemma}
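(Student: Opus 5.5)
We prove the statement by induction on $l(P)$. The tools are the Jacobi identity together with the fact, from \eqref{comP} and Lemma \ref{lem:paulis_closed}, that iterated commutators of Pauli strings are scalar multiples of Pauli strings. Since the span of $G_k$ is what enters Definition \ref{def:length_function}, it is convenient to first prove the stronger claim
\[ [\operatorname{span}(G_a),\operatorname{span}(G_b)] \subset \operatorname{span}(G_{a+b}) \qquad \text{for all } a,b\ge 1. \]
Granting this, if $P\in\operatorname{span}(G_{l(P)})$ and $Q\in\operatorname{span}(G_{l(Q)})$, then $[P,Q]\in\operatorname{span}(G_{l(P)+l(Q)})$, which gives $l([P,Q])\le l(P)+l(Q)$. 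The infimum in the definition is attained because the $G_k$ are nested, so this step needs no further argument. By bilinearity of the commutator, it suffices to check the claim on spanning elements, namely on $A\in G_a$ and $B\in G_b$.

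The induction runs on $a$ with $b$ arbitrary. For $a=1$, every $A\in G_1$ is a generator, and $[A,B]\in G_{b+1}$ holds by the very definition of $G_{b+1}$. For the inductive step, take $A\in G_{a+1}$. Either $A\in G_a$ already, in which case the induction hypothesis together with $G_{a+b}\subset G_{a+b+1}$ finishes, or $A=[P_i,C]$ with $P_i\in G_1$ and $C\in G_a$. In the second case the Jacobi identity (in the form used in the proof of Lemma \ref{lem:paulis_closed}) gives
\[ [[P_i,C],B] \lel [P_i,[C,B]] - [C,[P_i,B]] . \]
By the induction hypothesis, $[C,B]\in\operatorname{span}(G_{a+b})$, so the first term lies in $[G_1,\operatorname{span}(G_{a+b})]\subset\operatorname{span}(G_{a+b+1})$ by definition of $G_{a+b+1}$ and linearity. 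In the second term, $[P_i,B]\in G_{b+1}$, and the induction hypothesis applied to $C\in G_a$ places $[C,[P_i,B]]$ in $\operatorname{span}(G_{a+b+1})$. The induction hypothesis must therefore be stated for all $b$ simultaneously, so that it can be invoked with $b+1$ in place of $b$. The sign arising from the antisymmetry of the commutator is harmless inside a span.

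The main point requiring care is this second Jacobi term. The induction must be on the left index $a$ alone, uniformly in $b$, because the right index increases by one. One also has to work with spans rather than with the sets $G_k$ themselves, since commutators produce scalar multiples (factors $2$, $2i$) of Pauli strings and the Jacobi identity produces sums. Once the bilinear claim on spans is in place, the lemma follows immediately.
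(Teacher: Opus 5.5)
Your proof is correct and follows the same route as the paper. Both argue by induction on the length of the left argument, uniformly in the right one, and use the Jacobi identity to split $[[P_i,C],B]$ into $[P_i,[C,B]]$ and $[C,[P_i,B]]$, each controlled by the induction hypothesis (the second with $b+1$ in place of $b$, just as the paper applies it to $[Q,P_{i_m}]$).

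The only difference is that you prove the filtration statement $[\operatorname{span}(G_a),\operatorname{span}(G_b)]\subset\operatorname{span}(G_{a+b})$ directly. This removes the need for the paper's tacit reduction of $P$ to a single nested commutator, and for its appeal to orthogonality of Pauli strings to identify $[P,Q]$ with one of the two Jacobi terms. As a result your argument holds for arbitrary generator sets, including non-Pauli ones such as $J_{\gamma,\lambda}$.
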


\begin{proof} For a generator $P_{i_j}$ we obviously have
  \[ l([P_{i_j},Q])\kl 1+ l(Q) \pl. \]
This is the needed induction step in proving $l(P)\le k$ implies $l([P,Q])\kl l(P)+l(Q)$. Thus we may assume $P=[P_{i_k},\hat{P}]$ with $\hat{P}$ of length $k-1$. Then the Jacobi idenity tells us that 
 \[ [P,Q] \lel [[P_{i_m},\hat{P}],Q] 
 \lel [P_{i_m},[\hat{P},Q]]+ [\hat{P},[Q,P_{i_m}] ]\pl. \] 
By the induction hypothesis, $l([\hat{P},Q])\kl k-1+l(Q)$, and hence we find an upper estimate $k+l(Q)$ for the first term. Of course, $l([P_{i_m},Q])\kl l(Q)+1$, and again the induction hypothesis applies for the second term (because we ignore scalar coefficients). However, $[P,Q]=2PQ$ is a non-trivial multiple of a Pauli string and this also true for the two terms on the right. By the orthogonality of Pauli strings \eqref{comP}, $[P,Q]$ has to coincide with one of them, up to scalar multiples. But both of $\{P,Q\}$ have length $\le k+l(Q)$. 
\end{proof}

We will see in the next section that PauLie maps can be used to transform existing collections of Pauli strings to new collections using fewer qubits, which are then easier to analyze. We will also exploit the metric-preserving property of PauLie maps for our covering number estimates in connection with covering numbers in Section \ref{sec:approx_length}.  

\section{Concrete examples of generated Lie algebras} \label{sec:generated-Lie-Alg}

In this section, we explicitly calculate the Lie algebras generated by specific resource sets and identify corresponding magic Hamiltonians. We begin in Section \ref{sec:Pauli_Lie} by calculating Lie algebras generated by specific generator sets $S$ of Pauli strings and make an explicit connection to the corresponding operations on a physical quantum computer. In section~\ref{sec:Ising_Lie} we then identify examples of magic Hamiltonians with respect to the resource sets $S$. Most of the magic Hamiltonians we give are parametrized Ising Hamiltonians 
\begin{equation}
    J_{\gamma, \lambda} = \sum_j (1+\gamma) X_j X_{j+1} + (1-\gamma)Y_j Y_{j+1} + \lambda \sum_j Z_j
\end{equation}
for various values of $\gamma$ and $\lambda$. We compile all our results on magic Hamiltonians in Theorem \ref{themagic}.

For all of our results, the resulting Lie algebras can be classified by using generator sets consisting of Pauli strings. This connects it to earlier work on classifying any Lie algebra generated by Pauli strings \cite{Aguilar2024FullClassificationLie}, which we summarize in Remark \ref{rem:Eisert_classification}. In contrast to the abstract characterization from \cite{Aguilar2024FullClassificationLie}, here we insist on specific implementations on $n$ qubits, using the results from Section 2 as mathematical tools.

Most of our resource sets will consist of specific Pauli operations on each single qubit or pair neighboring qubits, as well as perhaps additional operators on a few specific qubits. To make this precise, we define the sets of generators $\cZ(1), \cX(1), \cZ(2), \cX(2)$.
\begin{defi}[Control sets]\label{def:control_sets}
    For a given number of qubits $n$, we define the control sets $\cX(1)$ and $\cX(2)$ as
    \begin{align*}
        \cX(1) &= \{X_i ~:~ i \in [n]\} \\
        \cX(2) &= \{X_i X_{i+1} ~:~ i \in [n] \},
    \end{align*}
    where $X_i$ is the operator
    \begin{equation}
        X_i = \underbrace{I \otimes I \otimes \cdots \otimes I}_{i-1 \text{ times} }\otimes X \otimes \underbrace{I \otimes \cdots \otimes I}_{n -i \text{ times}}.
    \end{equation}
    $\cY(1), \cZ(1), \cY(2), \cZ(2)$ are defined analogously.
\end{defi}
Note that all generators in Definition \ref{def:control_sets} are geometrically $1$- or $2$-local, and this remains true when they are exponentiated.

Our proofs will often proceed by presenting explicit chains of commutators to show that some element is contained in a certain Lie algebra. Pauli strings have the interesting property that when three Pauli strings $A, B$, and $C$ satisfy the property that $[A,B]= 2C$, then they also satisfy $[A,C] = 2B$. The essential feature is thus that $B$ and $C$ are transformed into each other (up to a multiplicative factor) by commutation with $A$, and we will introduce an equivalence relation and notation to highlight this fact.
\begin{defi} For Pauli strings $A\simeq B$ is defined as $A=cB$ for $c\neq 0$. 
\end{defi} 
This definition will avoid unnecessary scalar factors in longer commutator relations, a very special feature of Pauli strings. When a triple of Pauli strings obey $[A,B]\simeq C$ we define a reversible operation denoted by a double arrow
\begin{equation}
    B\stackrel{A}{\longleftrightarrow}C,
\end{equation} 
which we read as ``commutation with $A$ interchanges $B$ and $C$.'' 
When we have not single Pauli strings but linear combinations, commuting with a single other Pauli string is no longer a reversible operation. In that case, we write a single directional arrow
\begin{equation}
    B\stackrel{A}{\longrightarrow}C,
\end{equation} 
which we read as ``commutation with $A$ takes $B$ to $C$,'' for the operation $[A, B] = \alpha C$ for some constant $\alpha$. We have found that notating the chains of commutators in this way reduces the complexity of calculating iterated commutators and removes the need to track the irrelevant constant factors. Furthermore, we sometimes treat the Pauli strings as words and freely omit the indices on Pauli strings, mainly when the indices are irrelevant due to translation invariance.

Applying these tools to various generator sets, we will characterize the number of Pauli $X, Y$, and $Z$ operators in any generated Pauli string $P$. We will denote by $k_x(P), k_y(P)$, and $k_z(P)$ the number of $X, Y$, and $Z$ operators in a given Pauli string $P$.

\subsection{Pauli-generated Lie algebras from short strings}\label{sec:Pauli_Lie}

We begin with some simple Lie algebras generated by Pauli strings and build up to control sets that generate the full Lie algebra. We note that all of the Lie algebras in this subsection have already been classified in \cite{Aguilar2024FullClassificationLie} using different methods. Our contribution here is the explicit representations of the Lie algebras on a quantum computer, i.e.\ as operators acting on qubits. This discussion will also build some understanding of the underlying mathematics that we will use in Section \ref{sec:Ising_Lie}.

\begin{proposition}\label{prop:single-double}
    The Lie algebra generated by $\cX(2)$ and $\cZ(1)$ is spanned by
    \begin{equation}
        \langle \cX(2) \cup \cZ(1) \rangle = \cZ(1) \cup \{\alpha_i Z^{\otimes m} \beta_{i+m+1} ~:~ \alpha, \beta \in \{X,Y\},~ m \in [n-2] \},
    \end{equation}
    which consists of all the individual Pauli $Z$ generators, and the Pauli strings starting and ending with a Pauli $X$ or $Y$ matrix with only Pauli $Z$ matrices in between. The Lie algebra is $\mf{so}(2n)$ and has a dimension of $2n^2-n$.
\end{proposition}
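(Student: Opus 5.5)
The plan is to use the Jordan--Wigner (Majorana) picture, which makes all three claims transparent, and to supplement it with an explicit commutator chain in the paper's arrow notation for the generation part. Define for $j\in[n]$ the Pauli strings
\[ c_{2j-1} \lel Z_1\cdots Z_{j-1}X_j, \qquad c_{2j} \lel Z_1\cdots Z_{j-1}Y_j . \]
These $2n$ strings pairwise anticommute and square to $I$. A direct check gives $c_{2j-1}c_{2j}\simeq Z_j$ and $c_{2j}c_{2j+1}\simeq X_jX_{j+1}$. More generally, for $i<k$ the product $c_ac_b$ with $a\in\{2i-1,2i\}$ and $b\in\{2k-1,2k\}$ is, up to a scalar, $\alpha_i Z_{i+1}\cdots Z_{k-1}\beta_k$ with $\alpha,\beta\in\{X,Y\}$: the prefixes $Z_1\cdots Z_{i-1}$ cancel and $\alpha_i$ absorbs one $Z_i$. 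So the claimed spanning set is exactly $\{c_ac_b : 1\le a<b\le 2n\}$, where I read the range of $m$ as $0\le m\le n-2$, which includes the nearest-neighbour strings.

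\textbf{Step 1 (closure).} I would show that $\operatorname{span}\{c_ac_b\}$ is closed under commutators. By \eqref{comP} it suffices to compute $[c_ac_b,c_cc_d]$ for Pauli strings. If $\{a,b\}\cap\{c,d\}=\emptyset$ the two products commute. If they share one index, say $b=c$, then the commutator is $\simeq c_ac_d$. If $\{a,b\}=\{c,d\}$ the commutator vanishes. Hence the Lie algebra generated by $\cX(2)\cup\cZ(1)$ is contained in the claimed span.

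\textbf{Step 2 (generation).} I would argue by induction on the distance $k-i$, using the arrow notation. Start from $X_iX_{i+1}$. Commuting with $Z_i$ and then with $Z_{i+1}$ produces all four strings $\alpha_i\beta_{i+1}$. For the induction step, take $\alpha_iZ^{m}Y_{k}$ and commute with $X_kX_{k+1}$:
\[ \alpha_iZ^mY_k \stackrel{X_kX_{k+1}}{\longleftrightarrow} \alpha_iZ^{m+1}X_{k+1}. \]
Then commuting with $Z_{k+1}$ produces $\alpha_iZ^{m+1}Y_{k+1}$. This reaches every string in the set. Alternatively, in Majorana language the generators are $c_{2j-1}c_{2j}$ and $c_{2j}c_{2j+1}$, a connected path on the indices $1,\dots,2n$. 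Commuting along a path $c_ac_b$, then $c_bc_{b+1}$, gives $c_ac_{b+1}$, so every $c_ac_b$ is obtained.

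\textbf{Step 3 (identification and dimension).} The dimension follows from linear independence of distinct Pauli strings. There are $n$ strings $Z_j$ and $4\binom{n}{2}$ strings with endpoints $i<k$, giving $n+2n(n-1)=2n^2-n=\binom{2n}{2}$. To identify the algebra, map $i\,c_ac_b/2\mapsto E_{ab}-E_{ba}$. The anticommutation relations give the bracket relations of $\mf{so}(2n)$: $[c_ac_b,c_bc_d]=2c_ac_d$ for distinct $a,b,d$. This mirrors the standard spin representation of $\mf{so}(2n)$ on $2^n$ dimensions. Alternatively, I would invoke the PauLie philosophy of Lemma \ref{lem:PauLie_homomorphism} and Corollary \ref{LGG}, since the bracket structure is determined purely by the commutation pattern.

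The main obstacle is bookkeeping signs and scalars in Step 3, so that the map is an honest Lie isomorphism rather than one up to $\simeq$. Injectivity is automatic from the dimension count, so I would fix the normalization on the generators $c_ac_{a+1}$ and propagate it through the path.
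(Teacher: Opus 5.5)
Your proof is correct. Its generation step is the same as the paper's: you push $X_iX_{i+1}$ along the chain with $X_kX_{k+1}$ and toggle the endpoints with single $Z$'s. Your closure and identification steps, however, take a different route.

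The paper handles those steps as follows:
\begin{itemize}
\item It checks closure by a brief case analysis directly on the Pauli strings: the commutator of two chains is a chain or a single $Z_i$, and chains keep their form under $Z_i$. It states this rather than carrying it out.
\item It identifies the algebra as $\mathfrak{so}(2n)$ by appealing to the classification of \cite{Aguilar2024FullClassificationLie}, and gets the dimension by counting.
\end{itemize}

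You instead recognize the spanning set as the Majorana bilinears $c_ac_b$. Closure then becomes a two-line Clifford computation, and the isomorphism with $\mathfrak{so}(2n)$ is explicit (the spin representation). This is exactly the Jordan--Wigner picture ($\gamma_{2j-1}=S_jX_j$, $\gamma_{2j}=S_jY_j$) that the paper only introduces in Section~\ref{sec:approx_length} for its covering-number bounds. So your argument is self-contained and already provides what is used there. What the paper's Pauli-string calculus buys is uniformity: the same method handles the non-free-fermion cases such as Proposition~\ref{prop:X2_Y2_Z2}, where no quadratic Majorana description is available.

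Two small points on Step 3:
\begin{itemize}
\item Drop the factor $i$ in your map. Since $c_ac_b$ ($a\neq b$) is already anti-Hermitian, $A_{ab}=\tfrac{i}{2}c_ac_b$ gives $[A_{ab},A_{bd}]=iA_{ad}$, which is not the $\mathfrak{so}(2n)$ relation. By contrast, $B_{ab}=\tfrac12 c_ac_b$ satisfies $[B_{ab},B_{bd}]=B_{ad}$ exactly, matching $E_{ab}-E_{ba}$.
\item Working with $B_{ab}$ directly also removes the sign bookkeeping you flagged as the main obstacle. The phases relating $c_ac_b$ to the Pauli strings affect only the span, never the bracket relations.
\end{itemize}

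Finally, your reading $0\le m\le n-2$ is the right one. The paper's own count $\sum_{k=2}^n 4(n-k+1)$ includes the nearest-neighbour strings.
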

Here and in the following $\al_iZ^{\ten m}\beta_{j+m+1}$ will stand for $\al_iZ_{i+1}\cdots Z_{i+m}\beta_{j+m+1}$. 
\begin{proof}
We first show that $\alpha_i Z^{\otimes m} \beta_{i+m+1} \in \langle \cX(2) \cup \cZ(1)\rangle$, and then we show that this set is closed under the commutator. 

We start the calculations with $X_i X_{i+1}$ and commute with other elements of the generator set to find
\begin{equation}
    X_i X_{i+1} \stackrel{Z_{i+1}}{\longleftrightarrow}X_i Y_{i+1} \stackrel{X_{i+1}X_{i+2}}{\longleftrightarrow} X_i Z_{i+1} X_{i+2} \stackrel{Z_{i+2}}{\longleftrightarrow} \cdots \stackrel{X_{i+m} X_{i+m+1}}{\longleftrightarrow} X_i Z^{\otimes m} X_{i+m+1}.
\end{equation}
The final term we can commute with $Z_i$ and/or $Z_{i+m+1}$ depending on whether $\alpha_i$ and $\beta_{i+m+1}$ are $X$ or $Y$ operators. Thus, we find $\alpha_i Z_{i+1}\cdots Z_{i+m_1-1}\beta_{i+m_1} \in \langle \cX(2) \cup \cZ(1)\rangle$. 

Next, we show that the vector space spanned by these chains and the $\cZ(1)$ is closed under the Lie bracket. First, we notice that the commutator of two chains will result in either a new chain of the same form or an individual $Z_i$. Secondly, the individual chains preserve their form under commutation with $Z_i$. Finally, the commutator of a $Z_i$ with a $Z_j$ will always be 0. Together, this means that the proposed vector space is closed under the commutator and that the generated sub-Lie algebra is thus given by 
\begin{equation}
    \langle \cX(2) \cup \cZ(1) \rangle = \{Z_j|j \in [n]\} \cup \{\alpha_iZ^{\ten_m}\beta_{i+m+2}| \alpha, \beta \in \{X,Y\},~ m \in [n-2] \}.
\end{equation}

Using the classification method for Lie algebras generated by Pauli strings given in \cite{Aguilar2024FullClassificationLie}, we find that the Lie algebra is $\mf{so}(2n)$ with dimension $2n^2-n$.

We can also find this dimension by explicitly counting all possible Pauli strings to get
\begin{equation}
    n+\sum_{k=2}^{n}4(n-k+1) = 2n^2-n. \qedhere
\end{equation}
\end{proof}

We will explain the well-known connection of this Lie algebra with the Jordan-Wigner transform in Section \ref{sec:approx_length}.

Our second explicit example takes $\cY(2)$ instead of $\cZ(1)$, which results in Pauli strings of the same shape, but with stricter conditions on the endpoints.
\begin{proposition}\label{prop:double-double}
    The Lie algebra generated by $\cX(2)$ and $\cY(2)$ is spanned by
    \begin{equation}
            \langle \cX(2) \cup \cY(2) \rangle = \left\{\alpha_i Z^{\otimes m} \beta_{i+m+1} ~:~ 
                (\alpha_i, \beta_{i+m+1}) \in \begin{cases}
                \{XX, YY\} ~\text{ if $m$ is even,} \\ 
                \{XY, YX\} ~\text{ if $m$ is odd}
            \end{cases}
            \right\},
    \end{equation}
    which consists of all the Pauli strings starting and ending with a Pauli $X$ or $Y$ matrix with only Pauli $Z$ matrices in between. The starting and ending Pauli strings must be the same if there are an even number of Pauli $Z$ in between, and different if the number of Pauli $Z$ in between is odd. The resulting Lie algebra is $\oplus_{i=1}^2 \mf{so}(n)$ \cite{Aguilar2024FullClassificationLie} and has dimension $n^2-n$.

    Furthermore, we find that
    \begin{equation}
        \langle \cX(2) \cup \cY(2) \cup \{Z_1\} \rangle = \langle \cX(2) \cup \cZ(1)\rangle,
    \end{equation}
    which is the Lie algebra discussed in Proposition~\ref{prop:single-double}.
\end{proposition}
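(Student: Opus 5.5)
The plan is to argue as in Proposition~\ref{prop:single-double}: first produce the claimed strings by explicit commutator chains, then show their span is closed. For the identification with $\oplus_{i=1}^2\mf{so}(n)$ I would use the Jordan--Wigner (Majorana) picture, which also settles the statement about $Z_1$.

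\emph{Generation.} The basic observation is that $X_iX_{i+1}$ and $Y_{i+1}Y_{i+2}$ anticommute, since they overlap only on qubit $i+1$ with $X$ versus $Y$. By \eqref{comP},
\[ [X_iX_{i+1},Y_{i+1}Y_{i+2}]\simeq X_iZ_{i+1}Y_{i+2}. \]
Commuting next with $X_{i+2}X_{i+3}$ gives $X_iZ_{i+1}Z_{i+2}X_{i+3}$, then with $Y_{i+3}Y_{i+4}$, and so on:
\[ X_iX_{i+1}\stackrel{Y_{i+1}Y_{i+2}}{\longleftrightarrow}X_iZ_{i+1}Y_{i+2}\stackrel{X_{i+2}X_{i+3}}{\longleftrightarrow}X_iZZX_{i+3}\longleftrightarrow\cdots \]
Each step appends one $Z$ and alternates the last letter. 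Hence $X_iZ^{\otimes m}\beta_{i+m+1}$ is obtained with $\beta=X$ for $m$ even and $\beta=Y$ for $m$ odd. Starting instead from $Y_iY_{i+1}$ and commuting with $X_{i+1}X_{i+2}$, then $Y_{i+2}Y_{i+3}$, and so on, yields the strings with first letter $Y$. This already covers the endpoint rule in the statement.

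\emph{Closure and identification.} I would introduce the Majorana operators
\[ c_{2j-1}=Z_1\cdots Z_{j-1}X_j,\qquad c_{2j}=Z_1\cdots Z_{j-1}Y_j. \]
Then $\alpha_iZ^{\otimes m}\beta_{j}\simeq c_ac_b$ with $a\in\{2i-1,2i\}$ and $b\in\{2j-1,2j\}$. In particular
\[ X_iX_{i+1}\simeq c_{2i}c_{2i+1},\qquad Y_iY_{i+1}\simeq c_{2i-1}c_{2i+2},\qquad Z_i\simeq c_{2i-1}c_{2i}. \]
Next I would split the $2n$ Majorana indices into two classes, $A=\{a: a\equiv 0,1 \bmod 4\}$ and $B=\{a:a\equiv 2,3\bmod 4\}$. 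A direct check shows that every generator $c_{2i}c_{2i+1}$ and $c_{2i-1}c_{2i+2}$ pairs two indices from the same class, and that each class contains exactly $n$ indices.

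For distinct Majoranas, $[c_ac_b,c_bc_d]\simeq c_ac_d$, and bilinears with disjoint indices commute. Therefore the span of $\{c_ac_b: a,b \text{ in the same class}\}$ is closed under the bracket. It is isomorphic to $\mf{so}(|A|)\oplus\mf{so}(|B|)=\mf{so}(n)\oplus\mf{so}(n)$, with dimension $2\binom n2=n^2-n$. Translating back, the condition ``same class'' is exactly the stated parity rule on $(\alpha,\beta)$ relative to $m$. This gives both closure and the description of the basis.

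It remains to see that the generators produce all such bilinears. Within each class the generators form a path graph, e.g.\ $0,1,4,5,8,\dots$ for $A$, linked by the consecutive pairs $(2i,2i+1)$ and $(2i-1,2i+2)$. Repeated use of $[c_ac_b,c_bc_d]\simeq c_ac_d$ along a connected graph produces all pairs in that class; this is just the chain computation above.

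\emph{Adding $Z_1$.} The element $Z_1\simeq c_1c_2$ joins index $1\in A$ to index $2\in B$. The graph on all $2n$ indices then becomes connected, so the same argument gives all $c_ac_b$, i.e.\ $\mf{so}(2n)$. By Proposition~\ref{prop:single-double} this is $\langle\cX(2)\cup\cZ(1)\rangle$. Concretely, one obtains each $Z_j\simeq c_{2j-1}c_{2j}$ by commutators.

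The main obstacle I expect is bookkeeping rather than a conceptual step: verifying the class/parity correspondence modulo $4$ and the mod-$4$ index checks, and confirming the isomorphism of the Pauli span with $\mf{so}(n)\oplus\mf{so}(n)$. For the latter I would rely on the fact that the $c_a$ satisfy the Clifford relations, so their quadratic span realizes $\mf{so}$.
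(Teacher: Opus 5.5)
Your proof is correct. The generation step is the same commutator chain the paper uses, but for closure, for the isomorphism type, and for the $Z_1$ statement you take a different route.

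The paper splits the generators into the two mutually commuting families $S_1=\{X_1X_2,Y_2Y_3,X_3X_4,\dots\}$ and $S_2=\{Y_1Y_2,X_2X_3,\dots\}$. It maps them by a PauLie map to $T_1,T_2$, each an anticommutation path generating $\mf{so}(n)$; this reduction is written out only for even $n$. It then checks closure of the claimed span under brackets with $\cX(2)\cup\cY(2)$ by direct calculation, counts the dimension, and cites \cite{Aguilar2024FullClassificationLie} for the abstract type. The $Z_1$ claim is handled by explicit chains $Z_kI_{k+1}\leftrightarrow Y_kX_{k+1}\leftrightarrow I_kZ_{k+1}$ and $X_kX_{k+1}\leftrightarrow Y_kX_{k+1}\leftrightarrow Y_kY_{k+1}$. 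These produce $\cZ(1)$ from $Z_1$ and $\cY(2)$ from $\cX(2)\cup\cZ(1)$, reducing to Proposition~\ref{prop:single-double}.

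Your Jordan--Wigner picture explains where the paper's split comes from: $S_1$ and $S_2$ are exactly the edge sets of your class-$B$ and class-$A$ paths. It gives closure, the explicit Pauli basis, the isomorphism with $\mf{so}(n)\oplus\mf{so}(n)$, and the dimension $2\binom{n}{2}$ in one stroke. It also works uniformly in the parity of $n$; your count $|A|=|B|=n$ is right in both cases. The $Z_1$ statement becomes a connectivity observation. This matches the description of $\langle\cX(2)\cup\cZ(1)\rangle$ as the span of all $\binom{2n}{2}$ Majorana bilinears, which the paper itself uses in Section~\ref{sec:approx_length}.

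What the paper's route offers in exchange is explicit short commutator chains, for example for transporting a single $Z$. These are the raw material for the cost counts in Section~\ref{sec:algorithmic_implementation}; your path-graph argument produces such chains only implicitly.

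One small slip: there is no Majorana index $0$. The class-$A$ path reads $1,4,5,8,9,\dots$, with edges coming from $Y_1Y_2, X_2X_3, Y_3Y_4,\dots$.
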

\begin{proof}
We can use the results of Section \ref{sec:math_prelim} by observing that we can group the generators into two subgroups, where each element of one group commutes with all the elements of the other group. For the case that $n$ is even, this grouping is given by
\begin{equation}\label{2.20}
\begin{aligned}
    S_1 &= \{X_1 X_2, Y_2 Y_3, \cdots, Y_{n-2} Y_{n-1}, X_{n-1} X_n\},  \\
    S_2 &= \{Y_1 Y_2, X_2 X_3, \cdots, X_{n-2} X_{n-1}, Y_{n-1} Y_n\}.
\end{aligned}
\end{equation}
These generate the same Lie algebra as the generators
\begin{equation}
\begin{aligned}
    T_1 &= \{X_1, Y_1 Y_2, X_2, Y_2 Y_3, \cdots, Y_{n/2-1} Y_{n/2}, X_{n/2} \} \\
    T_2 &= \{Y_{n/2+1}, X_{n/2+1}X_{n/2+2}, Y_{n/2+2}, \cdots  X_{n-1} X_n, Y_n \}.
\end{aligned}
\end{equation}
The map $\sigma$ from $S_1, S_2$ to $T_1, T_2$ is a PauLie maps (Definition \ref{def:paulie}). As discussed in Section 2, because this mapping preserves the commutation relations between the elements, it generates the same Lie algebra and we do not have to construct an explicit map. We nevertheless provide such a map and comment on its properties in App.~\ref{app:explicit_map_S_T}.

We can now calculate the algebra generated by $S_1$ and $S_2$ by investigating the algebra generated by $T_1$ and $T_2$. First, note that $T_1$ and $T_2$ each generate $\mf{so}(n)$, the same algebra as given in Proposition~\ref{prop:single-double} for $n/2$. Since $T_1$ and $T_2$ commute, the Lie algebra generated by both is the direct sum of the individual algebras and thus given by $\oplus_{i=1}^2 \mf{so}(n)$.

However, to find the representation of this algebra that corresponds to the original algebra, one has to invert the PauLie map. We do this in appendix~\ref{app:explicit_map_S_T}, but here we find these elements via a direct calculation.

For this direct calculation, we first show that $\alpha_i Z^{\otimes m} \beta_{i+m+1} \in \langle \cX(2) \cup \cY(2) \rangle$. Assume w.l.o.g. by symmetry that $\alpha_i = X$. Applying the following chain of commutators for $m$ even gives
\begin{equation}
    X_i X_{i+1} \stackrel{Y_{i+1} Y_{i+2}}{\longleftrightarrow} X_i Z_{i+1} Y_{i+2} \stackrel{X_{i+2} X_{i+3}}{\longleftrightarrow} X_i Z_{i+1} Z_{i+2} X_{i+3} \stackrel{Y_{i+3} Y_{i+4}}{\longleftrightarrow} \cdots \stackrel{X_{i+m} X_{i+m+1}}{\longleftrightarrow} X_i Z^{\otimes m} X_{i+m+1}.
\end{equation}
The other case for $m$ odd is exactly the same but ends by commuting with $Y_{i+m} Y_{i+m+1}$ to get $X_i Z^{\otimes m} Y_{i+m+1}$. Thus, these strings are contained in the Lie algebra generated by $\cX(2) \cup \cY(2)$.

Furthermore, a simple calculation shows that this set of Pauli strings is closed under commutation with elements of $\cX(2)$ or $\cY(2)$. Thus, this is the full generated sub-Lie algebra.

By counting the elements of the generated algebra, we find the dimension of $\langle \cX(2) \cup \cY(2) \rangle$ as
\begin{equation}
    \sum_{k=2}^{n}2(n-k+1) = n^2-n = n(n-1).
\end{equation}

To prove the final part of the proposition, we see that the single $Z_1$ can be moved around using commutator chains as
\begin{equation}
     Z_k I_{k+1} \stackrel{X_k X_{k+1}}{\longleftrightarrow} Y_k X_{k+1} \stackrel{Y_k Y_{k+1}}{\longleftrightarrow} I_k Z_{k+1},
\end{equation}
and that these individual $Z_i$ then add
\begin{equation}
    X_k X_{k+1} \stackrel{Z_k I_{k+1}}{\longleftrightarrow} Y_k X_{k+1} \stackrel{I_k Z_{k+1}}{\longleftrightarrow} Y_k Y_{k+1}
\end{equation}
so that $\cZ(1) \subset \langle \cX(2) \cup \cY(2) \cup \{Z_1\}\rangle$ and $\cY(2) \subset \langle \cX(2) \cup  \cZ(1) \rangle$, and thus
\begin{align*}
    \langle \cX(2) \cup \cY(2) \cup \{Z_1\} \rangle &= \langle \cX(2) \cup \cY(2) \cup \cZ(1)\rangle = \langle \cX(2) \cup \cZ(1)\rangle. \qedhere 
\end{align*}
\end{proof}

Our third example works with all three of $\cX(2), \cY(2)$ and $\cZ(2)$. Its behaviour depends on whether $n$ is even or odd.
\begin{prop}\label{prop:X2_Y2_Z2}
    The Lie algebra generated by $\cX(2) \cup \cY(2) \cup \cZ(2)$ is given by
    \begin{equation}
        \langle \cX(2) \cup \cY(2) \cup \cZ(2) \rangle
        = Q_n \backslash \{I, X^{\otimes n}, Y^{\otimes n}, Z^{\otimes n} \} = \begin{cases}
            \mf{su}(2^{n-1}) &\text{ if $n$ odd} \\
            \oplus_{i=1}^{4}\mf{su}(2^{n-2}) &\text{ if $n$ even}
        \end{cases},
    \end{equation}
    where $Q_n$ consists of all $n$-qubit Pauli strings $P$ where the $k_x(P) \equiv k_y(P) \equiv k_z(P) \mod{2}$. Note that this implies that $Q_n$ is permutation-invariant. The resulting Lie algebra has dimension $4^{n-1}-1$ for $n$ odd and dimension $4^{n-1}-4$ for $n$ even.
\end{prop}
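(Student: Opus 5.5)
The plan is to sandwich $\langle \cX(2)\cup\cY(2)\cup\cZ(2)\rangle$ between an upper bound, coming from a commutant argument, and a matching lower bound, coming from explicit commutator chains. The final identification with $\mf{su}(2^{n-1})$ or $\oplus_{i=1}^4\mf{su}(2^{n-2})$ is then read off from the structure of that commutant.

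\emph{Upper bound.} Every generator $\sigma_i\sigma_{i+1}$, with $\sigma\in\{X,Y,Z\}$, commutes with $X^{\ten n}$, $Y^{\ten n}$ and $Z^{\ten n}$, since it anticommutes with the global string on either exactly $0$ or exactly $2$ sites. Hence the generated Lie algebra lies in the commutant of the group $\{I,X^{\ten n},Y^{\ten n},Z^{\ten n}\}$ (up to phases). A Pauli string $P$ commutes with $X^{\ten n}$ iff $k_y(P)+k_z(P)$ is even. The analogous conditions for $Y^{\ten n}$ and $Z^{\ten n}$ are that $k_x+k_z$ and $k_x+k_y$ are even. All three together are equivalent to $k_x\equiv k_y\equiv k_z \pmod 2$, which is exactly $Q_n$.

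Next we remove the centre. By \eqref{comP}, every commutator of Pauli strings is again a Pauli string. The generators are not $I,X^{\ten n},Y^{\ten n},Z^{\ten n}$, and any commutator of nonzero value is a product $2PQ$ with $P,Q$ non-commuting. Such a product cannot equal a central string $C$, because then $Q\simeq PC$ would commute with $P$. So the algebra is contained in $Q_n\setminus\{I,X^{\ten n},Y^{\ten n},Z^{\ten n}\}$.

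\emph{Structure of the commutant.} For $n$ odd, $X^{\ten n}$ and $Y^{\ten n}$ anticommute and generate a copy of the one-qubit Pauli group. Its commutant is $\mz_{2^{n-1}}\ten I$, whose traceless anti-Hermitian part is $\mf{su}(2^{n-1})$, of dimension $4^{n-1}-1$. Note that in this case $X^{\ten n},Y^{\ten n},Z^{\ten n}\notin Q_n$ automatically. For $n$ even, the three global strings commute, so $\C^{2^n}$ splits into four joint eigenspaces of dimension $2^{n-2}$. The commutant is then $\oplus_{i=1}^4\mz_{2^{n-2}}$, spanned by the $4^{n-1}$ strings in $Q_n$. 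Removing the span of $I,X^{\ten n},Y^{\ten n},Z^{\ten n}$, which is exactly the span of the four block identities, leaves $\oplus_{i=1}^4\mf{su}(2^{n-2})$, of dimension $4^{n-1}-4$.

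\emph{Lower bound.} It remains to show that every string in $Q_n\setminus\{I,X^{\ten n},Y^{\ten n},Z^{\ten n}\}$ is generated, and I would do this by induction on a consecutive block of qubits.

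\textbf{Base case.} On three consecutive sites, a finite computation with the $\stackrel{A}{\longleftrightarrow}$ chains shows that all $15$ nontrivial elements of $Q_3$ are obtained; by the $n$ odd case of the upper bound these span $\mf{su}(4)$. For example, $X_1X_2\stackrel{Y_2Y_3}{\longleftrightarrow}X_1Z_2Y_3$, and in the same way one gets $XX$, $XZY$, $\ldots$ on each window. Translation invariance then gives this for every window $\{i,i+1,i+2\}$.

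\textbf{Inductive step.} Suppose every non-central element of $Q_k$ on sites $1..k$ (tensored with identities) has been generated. Take $P=P'\sigma_{k+1}\in Q_{k+1}$, where $P'$ is a string on sites $1..k$. The aim is to choose $A\in Q_k$ on sites $1..k$ and a three-site element $B$ on sites $k-1,k,k+1$ such that $[A,B]\simeq P$. Concretely, $A$ should agree with $P$ away from site $k$ and carry a suitable letter $\tau_k$, while $B$ supplies $\tau'_k\sigma_{k+1}$ with $\tau_k\tau'_k\simeq P_k$ and $\{\tau_k,\tau'_k\}$ anticommuting. The parity constraints of $Q$ must be respected, and in general $B$ will have to touch site $k-1$ as well, to fix parity. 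By Lemma~\ref{additive}, these commutators stay within controlled length, which will also be needed later in the paper.

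\emph{Main obstacle.} The hardest step will be this gluing. The parity bookkeeping must allow a choice of $A,B$ in every case, and the case analysis has to avoid routing through intermediate strings that are central at level $k$. These are the strings such as $X^{\ten k}$ for $k$ even, which lie in $Q_k$ but are not generated on $k$ qubits. I would first try to handle these exceptional strings by applying the induction on a shifted window, for instance sites $2..k+1$, or by a two-step commutator through site $k+1$. If that fails, the fallback is to apply Lemma~\ref{lem:paulis_closed} with a three-site element to move the problematic letter before commuting.
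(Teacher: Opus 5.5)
Your upper bound and your identification of the algebra are correct. They are also more self-contained than the paper's treatment. Placing the algebra inside the commutant of $\{X^{\ten n},Y^{\ten n},Z^{\ten n}\}$ gives the parity description of $Q_n$, the isomorphism type and the dimension all at once. The paper instead checks that commutators preserve parity and then cites \cite{Aguilar2024FullClassificationLie}. Your base case $n=3$ is fine too.

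\textbf{The gap is the lower bound}, which is the real content of the proposition. Your inductive gluing step is not carried out, and as formulated it fails for concrete strings, for a reason different from the one you anticipate. Put $\sigma=P_{k+1}\neq I$. Since $A$ does not touch site $k+1$, you need $B_{k+1}=\sigma$ and $A_j\simeq P_jB_j$ for $j=k-1,k$. The vector $(k_x,k_y,k_z)\bmod 2$, taken modulo $(1,1,1)$, has three useful properties: it is additive under products of Pauli strings, its kernel is exactly $Q$, and on a single site it is a bijection $\{I,X,Y,Z\}\to\mathbb{F}_2^3/\langle(1,1,1)\rangle$. So $B\in Q_3$ forces $B_k\simeq B_{k-1}\sigma$. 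Then $A$ and $B$ anticommute iff exactly one of the pairs $(P_{k-1},B_{k-1})$, $(P_k,B_{k-1}\sigma)$ anticommutes. If $P_{k-1}=P_k\in\{I,\sigma\}$, this never happens for any admissible $A,B$.

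The simplest instance is $P=X_1X_{k+1}$. There $A$ and $B$ must coincide letter by letter on the overlap $\{k-1,k\}$, so $[A,B]=0$ however you choose them. No central string like $X^{\ten k}$ is involved. Your first fallback does not help either: the mirrored gluing (induction on sites $2,\dots,k+1$, with $B$ on $\{1,2,3\}$) fails whenever $P_2=P_3\in\{I,P_1\}$. Both $X_1X_{k+1}$ and, for $n=7$, the string $XXXYZZZ$ satisfy this.

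\textbf{What is missing is the ingredient the paper puts first.} The set of Pauli strings in the generated algebra is closed under adjacent transpositions of letters. This follows from two-site chains such as $ZI\stackrel{XX}{\longleftrightarrow}YX\stackrel{YY}{\longleftrightarrow}IZ\stackrel{XX}{\longleftrightarrow}XY$ and their relabelled versions, which use only generators on the two sites being swapped. With this in hand your induction closes. Any non-central $P\neq I$ in $Q_{k+1}$ has two distinct letters, hence a permutation with $P_{k-1}\neq P_k$. For such a permuted string the anticommutation condition above has two solutions $B_{k-1}$, and at least one of them gives a non-central $A$ once $k\ge 3$. Transposition chains then carry the permuted string back to $P$.

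The paper uses the same permutation invariance more bluntly, reducing the lower bound to exhibiting one generated string for each admissible triple $(k_x,k_y,k_z)$. Your last fallback (moving a letter by commuting with local elements) points in this direction. Until it is made into this systematic tool, the inclusion $Q_n\setminus\{I,X^{\ten n},Y^{\ten n},Z^{\ten n}\}\subset\langle\cX(2)\cup\cY(2)\cup\cZ(2)\rangle$ remains unproved.
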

\begin{proof} In this proof we understand Pauli strings as words in the alphabet $\{X,Y,Z,1\}$. In this sense we first show permutation invariance. Indeed, we obtain all transpositions through the chain of commutators
    \begin{equation}
        ZI \stackrel{XX}{\longleftrightarrow} YX \stackrel{YY}{\longleftrightarrow} IZ \stackrel{XX}{\longleftrightarrow} XY.
    \end{equation}
    This shows that we can permute $XY$ into $YX$ and $ZI$ into $IZ$. Because of symmetry in the generator set, we can make a similar chain of commutators for $XZ$ with $YI$ and $YZ$ with $XI$. Thus, we can apply any transposition, and thus any permutation.
    To prove the parity constraint, we see that the commutator of $XX$ with any $\alpha \beta$ is non-zero if either $\alpha \in \{Y, Z\}$ and $\beta \in \{X,I\}$, or the reverse. Taking the commutator of such a string with $XX$ flips the choice for $\alpha$ and $\beta$. This means that $k_x, k_y$, and $k_z$ all increase or decrease by exactly 1. Taking a commutator with $YY$ or $ZZ$ changes the parity in the same way. Since all the generators are themselves Pauli strings with $k_x \equiv k_y \equiv k_z \equiv 0 \mod{2}$, taking a commutator with another generator preserves the $\mathrm{mod}\: 2$ equivalence of $k_x, k_y$, and $k_z$ and can only flip $k_x, k_y,$ and $k_z$ to all be equal to $1 \mod{2}$.

    Furthermore, an easy calculation shows that $X^{\otimes n}, Y^{\otimes n}$, and $Z^{\otimes n}$ all commute with all the generators. This means that \begin{equation}\label{eq:subset2}
        \langle \cX(2) \cup \cY(2) \cup \cZ(2) \rangle \subset Q_n \backslash \{I, X^{\otimes n}, Y^{\otimes n}, Z^{\otimes n} \}.
    \end{equation} 
A simple but tedious calculation shows that for any Pauli string 
     \[ P \in Q_n \backslash \{I, X^{\otimes n}, Y^{\otimes n}, Z^{\otimes n} \} ,\] 
one can make a Pauli string from $\cX(2), \cY(2)$ and $\cZ(2)$ with the same $k_x(P), k_y(P)$ and $k_z(P)$. The permutation invariance of $\langle \cX(2) \cup \cY(2) \cup \cZ(2) \rangle$ then implies that
    \begin{equation}
        \langle \cX(2) \cup \cY(2) \cup \cZ(2) \rangle \supset Q_n \backslash \{I, X^{\otimes n}, Y^{\otimes n}, Z^{\otimes n} \}.
    \end{equation}
    Together with Eq.~(\ref{eq:subset2}), this proves the second equality. Using the classification method of \cite{Aguilar2024FullClassificationLie}, one directly finds
    \begin{align*}
        \langle \cX(2) \cup \cY(2) \cup \cZ(2) \rangle &= \begin{cases}
            \mf{su}(2^{n-1}) &\text{ if $n$ odd} \\
            \oplus_{i=1}^{4}\mf{su}(2^{n-2}) &\text{ if $n$ even}
        \end{cases}.
    \end{align*}
Finally, we again show that the dimension of $Q_n$ is $4^{n-1}$ by explicitly counting the Pauli strings. One can see this by first considering an arbitrary Pauli string up to index $n-2$, of which there are $4^{n-2}$. Then these strings can be extended by $4\cdot 4=16$ combinations for the last 2 indices, of which $4$ will satisfy the parity constraints, while the other 12 will not. Thus, after removing the inaccessible strings, we find the dimension to be
    \begin{align*}
    \operatorname{dim}\langle \cX(2) \cup \cY(2) \cup \cZ(2) \rangle & =
        \begin{cases}
            4^{n-1}-1 \quad \text{if } n \text{ odd} \\
            4^{n-1}-4 \quad \text{if } n \text{ even}.  
        \end{cases}\qedhere
    \end{align*}
\end{proof}

Having investigated these three basic example Lie algebras, we can now determine what elements to add to them to get the full Lie algebra $\mathfrak{su}(2^n)$.

\begin{theorem}\label{thm:XdoubleZsingle_extensions}
    The Lie algebra generated by $\cX(2)$ and $\cZ(1)$ combined with $X_1$ and/or $Z_1 Z_2$ is given by
    \begin{equation*}
    \begin{aligned}
        i) && \langle \cX(2) \cup \cZ(1) \cup \{X_1\} \rangle =& \{Z_j ~:~ j \in [n]\} \cup \{X_1, Y_1\} \\ 
        &&& \cup \{\alpha_i Z^{\otimes m} \beta_{i+m+2} ~:~ \alpha, \beta \in \{X,Y\},~ m \in [n-2] \} \\
        &&& \cup \{Z^{\otimes m} \beta_{m+1} ~:~ \beta \in \{X,Y\}, m \in [n-1] \}  \\
        &&& = \mf{so}(2n+1)\\
        ii) && \langle \cX(2) \cup \cZ(1) \cup \{Z_1 Z_2\} \rangle =& \langle \cX(2) \cup \cZ(2) \cup \cZ(1) \rangle  \\
        && =& S_{z_n} \backslash \{(Z_1 Z_2 \cdots Z_n), I\} = \oplus_{i=1}^2 \mf{su}(2^{n-1}), \\
        iii) && \langle \cX(2) \cup \cZ(1) \cup \{X_1, Z_1 Z_2\} \rangle =& \PP_n \backslash \{I \} = \mf{su}(2^n)
    \end{aligned}
    \end{equation*}
    with $S_{z_n}$ all Pauli strings with an even number of Pauli $X$ and Pauli $Y$ operators, and $\PP_n$ all possible Pauli strings. The dimensions of the generated Lie algebras are respectively $2n^2+n$, $2( 4^{n-1}-1)$, and $4^n-1$ for $n\geq 3$.
\end{theorem}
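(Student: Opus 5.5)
Each part follows the pattern of Proposition~\ref{prop:single-double}: exhibit explicit commutator chains showing the claimed Pauli strings lie in the generated algebra, then check the claimed span is closed under brackets with the generators. The abstract identification (with $\mf{so}(2n+1)$, $\oplus_{i=1}^2\mf{su}(2^{n-1})$, $\mf{su}(2^n)$) then follows from the classification of \cite{Aguilar2024FullClassificationLie} or from a dimension count. Throughout, the base algebra $\langle \cX(2)\cup\cZ(1)\rangle$ of Proposition~\ref{prop:single-double} is taken as known.

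\emph{Part i).} Commutators with $Z_1$ give $X_1 \stackrel{Z_1}{\longleftrightarrow} Y_1$. Commuting $X_1$ or $Y_1$ with a chain $\alpha_1 Z^{\ten m}\beta_{m+2}$ starting at qubit $1$ removes the first letter and produces $Z_1\cdots Z_{m+1}\beta_{m+2}$. The case $m=0$, i.e. using $X_1X_2$ directly, gives $Z_1X_2$, and then $Z_{m+1}$ toggles $X\leftrightarrow Y$ at the end. This yields all strings $Z^{\ten m}\beta_{m+1}$. For closure:
\begin{itemize}
\item[] the bracket of two such ``half-chains'' $Z^{\ten m}\beta_{m+1}$, $Z^{\ten m'}\beta'_{m'+1}$ with $m<m'$ is $\simeq \beta_{m+1}Z\cdots Z\beta'_{m'+1}$ after the $Z$'s cancel, which is a full chain;
\item[] for $m=m'$ it is a single $Z_{m+1}$;
\item[] a half-chain against a full chain gives a half-chain.
\end{itemize}
The dimension is $2n^2-n+2n=2n^2+n=\dim\mf{so}(2n+1)$. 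This matches the Jordan--Wigner picture: adding $X_1$ adjoins the extra Majorana-type generator.

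\emph{Part ii).} First, $Z_1Z_2$ spreads to all of $\cZ(2)$. The string $Z_1Z_2$ commutes with $X_2X_3$ to $Z_1Y_2X_3$, and then with $Z_2$ to $Z_1X_2X_3$. Bracketing with $X_1X_2$ gives $Y_1X_3$ up to scalar (checked below), and further steps with $Z$'s and $XX$'s give $Z_2Z_3$.

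Checking the bracket with $X_1X_2$: $X_1X_2$ anticommutes with $Z_1X_2X_3$ since exactly one site anticommutes, and the product is $\simeq Y_1X_3$. This is a chain-type element $Y_1 I X_3$ of a new, non-fermionic-parity form. Rather than chasing chains by hand, I will use the $\cZ(2)$ description: bracketing $Z_1Z_2$ against $X_1X_2$ is zero, so I will instead use $Z_1Z_2\to Z_1Y_2X_3 \to$ (via $Y_1\dots$ strings already present) and find $Z_2Z_3$ by a short explicit search.

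Once $\cZ(2)\subset$ the algebra, we are working with $\cX(2)\cup\cZ(2)\cup\cZ(1)$. Every generator commutes with $Z_1\cdots Z_n$, so each generated string has an even number of $X/Y$ letters, which gives the inclusion into $S_{z_n}\setminus\{Z^{\ten n},I\}$. For the reverse inclusion, single $Z_i$ and full chains let us place pairs of $X/Y$ at arbitrary positions $i<j$ with $Z$'s between. Multiplying chains by $Z_kZ_{k+1}$ via brackets, wherever they anticommute, changes interior $Z$'s to $I$'s and vice versa. Induction on the number of $X/Y$ letters, bracketing two disjoint-supported-overlap chains, builds all strings with an even number of $X/Y$. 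Counting gives $4^n/2-2=2(4^{n-1}-1)$. The center is spanned by $Z^{\ten n}$, so the algebra splits into two copies of $\mf{su}(2^{n-1})$ on the $\pm1$ eigenspaces.

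\emph{Part iii).} Adding $X_1$ to the algebra of ii) gives $X_1\in$ the algebra but outside $S_{z_n}$. Bracketing $X_1$ with strings of even $X/Y$-count yields every string of odd $X/Y$-count having $Z$, $I$, $X$ or $Y$ at qubit $1$ appropriately. Combining with the transport from i), via $Z^{\ten m}\beta_{m+1}$, gives all odd strings. Then $\mf{su}(2^n)$ results because $Z^{\ten n}$ is no longer central: $[X_1,Z^{\ten n}]\neq0$.

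The main obstacle is the ``simple but tedious'' generation step in ii), namely showing every string in $S_{z_n}$ is reachable. I would handle it by induction on $n$: restrict to qubits $2,\dots,n$, which have the same generator structure, and use the chains through qubit $1$ to attach an arbitrary letter there.
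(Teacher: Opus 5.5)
Part i) is fine. You check it directly, whereas the paper lifts $X_1\mapsto X_0X_1$ by a PauLie map and reads the answer off Proposition~\ref{prop:single-double} on one more qubit. Note that the dimension count alone does not identify the algebra, since $\dim\mf{sp}(2n)=2n^2+n$ as well. What does identify it is your Jordan--Wigner remark: the half-chains are exactly the Majoranas $\gamma_k$, so you get ${\rm span}\{\gamma_k,\gamma_j\gamma_k\}\cong\mf{so}(2n+1)$.

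\textbf{The gap in ii).} The genuine gap is the reverse inclusion in ii), which is the heart of the theorem. The one concrete mechanism you give does not work. $Z_kZ_{k+1}$ anticommutes with $\alpha_iZ\cdots Z\beta_j$ only when exactly one of $k,k+1$ is an endpoint. So this bracket toggles $I\leftrightarrow Z$ only on a site adjacent to an endpoint, and by itself it never reaches, e.g., $X_1Z_2I_3Z_4X_5$ or $Z_1I_2X_3X_4$. Consequences:
\begin{itemize}
\item The base case of your induction on the number of $X/Y$ letters (all two-letter strings on an arbitrary $I/Z$ background) is not established.
\item The fallback induction on $n$ does not explain the hard case: putting $X$ or $Y$ on qubit $1$ in front of an odd string on qubits $2,\dots,n$. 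The induction hypothesis does not supply such strings.
\item The $\cZ(2)$-spreading step is left as a ``search''. The chain $Z_kI_{k+1}\stackrel{X_kX_{k+1}}{\longleftrightarrow}Y_kX_{k+1}\stackrel{Y_kY_{k+1}}{\longleftrightarrow}I_kZ_{k+1}$ does it.
\end{itemize}

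\textbf{The missing reduction.} What you are missing is the paper's reduction. Since $\cY(2)\subset\langle\cX(2)\cup\cZ(1)\rangle$ and $\cZ(2)$ is now available, Proposition~\ref{prop:X2_Y2_Z2} gives all of $Q_n\setminus\{I,X^{\ten n},Y^{\ten n},Z^{\ten n}\}$ at once. From there:
\begin{itemize}
\item A bracket with $Z_k$ at an $X/Y$ site reaches every string with $k_x\equiv k_y\not\equiv k_z$ that contains an $X/Y$ letter. One more such bracket gives $X^{\ten n},Y^{\ten n}$ for even $n$.
\item A pure $I/Z$ string $T\neq I,Z^{\ten n}$ satisfies $T\simeq[X_iX_j,X_iX_jT]$ for any $i,j$ with $T_i=Z$, $T_j=I$.
\item Commuting with all generators does not by itself exclude $Z^{\ten n}$. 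The actual reason is that $[A,B]\simeq Z^{\ten n}$ with $A,B$ even forces $B\simeq AZ^{\ten n}$, which commutes with $A$.
\end{itemize}

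\textbf{Part iii).} Part iii) has the same soft spot. $[X_1,P]$ only puts $Y$ or $Z$ on qubit $1$, and it can never give $Y_1Z_2\cdots Z_n$, since that would need $P\simeq Z^{\ten n}$. Saying ``$Z^{\ten n}$ is no longer central'' does not show that $Z^{\ten n}$ is generated. You need to exhibit it, for example
\[
Z^{\ten n}\simeq[Z_1X_2,\,Y_2Z_3\cdots Z_n],
\]
where $Y_2Z_3\cdots Z_n\simeq[Z_1X_3,\,Z_1Y_2Y_3Z_4\cdots Z_n]$ and $Z_1X_3\simeq[X_1,Y_1X_3]$.
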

\begin{proof}
    The proof for the corresponding abstract Lie algebra follows from \cite{Aguilar2024FullClassificationLie} for all three cases. Here we show the calculations to find the concrete representation corresponding to the given generators.
    
    For assertion $i)$ let us consider the PauLie map which keeps every generator the same, except for $X_1$ being sent to $X_0X_1$. With this generator we are again in $\cX(2)\cup \cZ(1)$, but now acting on one more total qubit. We recall that this Lie algebra is exactly given by the original generators and alternating products. Forgetting the $X_0$ registers is the inverse PauLie map which gives exactly the elements listed above. 
    
    For $ii)$, the extension with $Z_1 Z_2$, we first show by induction that, together with elements of $\cX(2)$ and $\cZ(1)$, we can use $Z_1 Z_2$ to make any element $Z_k Z_{k+1}$. We can interchange any $Z$ with a neighbouring $I$ using the commutator chain
    \begin{equation}
        ZI \stackrel{XX}{\longleftrightarrow} YX \stackrel{YY}{\longleftrightarrow} IZ.
    \end{equation}
    By applying the appropriate transpositions to $Z_1 Z_2$, we can make any $Z_{k} Z_{k+1}$. Thus, we find that 
    \begin{equation}
         \langle \cX(2) \cup \cZ(1) \cup \{Z_1 Z_2\} \rangle = \langle \cX(2) \cup \cZ(2) \cup \cZ(1) \rangle \supset \langle \cX(2) \cup \cY(2) \cup \cZ(2) \rangle.
    \end{equation}
    By using Proposition \ref{prop:X2_Y2_Z2}, we find that the resulting Lie algebra is permutation-invariant. If we track the number of Pauli $X, Y$, and $Z$ operators with $k_x, k_y$, and $k_z$ respectively, we see that the presence of individual $Z$ operators means that $k_x \equiv k_y \mod{2}$ still holds, but that $k_z$ has no such constraints.

    To summarize, for arbitrary $\alpha, \beta \in \{X,Y\}$ and $\alpha', \beta' \stackrel{Z}{\longleftrightarrow} \alpha, \beta$, the following operations on Pauli strings are allowed:
    \begin{equation}
    \begin{aligned}
        I \alpha &\stackrel{ZZ}{\longleftrightarrow} Z \alpha', &  \alpha I & \stackrel{ZZ}{\longleftrightarrow} \alpha' Z \\
        \alpha \beta &\stackrel{\beta \beta}{\longleftrightarrow} ZI , & \alpha \beta &\stackrel{\alpha \alpha}{\longleftrightarrow} IZ \\
        \alpha &\stackrel{Z}{\leftrightarrow} \alpha', \\    
    \end{aligned}
    \end{equation}
    where the first four are the same as in Proposition \ref{prop:X2_Y2_Z2}, and the fifth one is the additional operation that breaks the parity constraint on $k_z$.

    Finally, by chaining together appropriate elements $XX$, $YY$, and $ZZ$, we can make any Pauli string with $k_x+k_y = 2l$ Pauli-$X$ or Pauli-$Y$ matrices, and identity or Pauli-$Z$ matrices on the remaining $n-2l$ components. The only element we cannot make is $Z^{\otimes n}$, as this operator commutes with all the generators.

    The dimension of the generated algebra can be found by summing over an index $k$, which counts the number of $X$ and $Y$ components in the string. For a given $l$, there are $2^{2l}$ options for the $X$ and $Y$, $2^{n-2l}$ options of the remaining $I$ or $Z$, and $\binom{n}{2l}$ options to arrange the $X$ and $Y$ in between the $I$ and $Z$. This gives the total number of options as
    \begin{equation}
        \sum_{l=0}^{\lfloor n/2 \rfloor } 2^{2l} 2^{n-2l} \binom{n}{2l} = 2^n 2^{n-1} = \frac12 4^n,
    \end{equation}
    where we have to subtract 2 for the 2 Pauli strings $I^{\otimes n}$ and $Z^{\otimes n}$ that we can not make to find the dimension of the generated subspace.

    For the last extension with both $X_1$ and $Z_1 Z_2$, the addition of the $X_1$ onto the extension of $Z_1 Z_2$ means that one can now also drop the requirement $k_x \equiv k_y \mod{2}$, and, furthermore, one can also make $Z^{\otimes n}$.
\end{proof}

We refer to section 4 for the implementation of universal gates once we have obtained all Pauli strings.

\begin{rem}\label{rem:Eisert_classification}
    As mentioned at the beginning of this section and in the proofs, these Lie algebras were already classified in \cite{Aguilar2024FullClassificationLie}. In  its algorithmic approach for finding the Lie algebra, one changes the generators by taking commutators while preserving the generated algebra until one ends up in one of a set of known anti-commutation graphs. For the generator set $\cX(2) \cup \cZ(1) \cup \{X_1, Z_1Z_2\}$, this approach finds the equivalent generator set
    \begin{equation}
        \{X_1, Z_1, X_1X_2, Z_2, X_2\} \cup \{X_1 X_k, Z_k ~|~ \text{for } k \in \{3,4,\cdots, n\}\}.
    \end{equation}
    This set matches the commutation relations of the type $B3$ generator sets and proves that the resulting Lie algebra is isomorphic to $\mathfrak{su}(2^n)$, which coincides with our direct calculations. Our Section \ref{sec:math_prelim} results add the information that these maps preserve the distance in the Lie group.
\end{rem}

\subsection{The Ising Hamiltonian as a magic Hamiltonian}\label{sec:Ising_Lie}
Now that we have found the Lie algebras generated by Pauli strings, we can find resource sets $S$ and parameters $\gamma$ and $\lambda$ for which the Ising Hamiltonian
\begin{equation}\label{eq:Ising_Ham}
    J_{\gamma, \lambda} = \sum_j (1+\gamma) X_j X_{j+1} + (1-\gamma)Y_j Y_{j+1} + \lambda \sum_j Z_j
\end{equation}
is magic. Note that we will now have linear combinations of Pauli strings in our generator sets, and thus the symmetry of commutators and the classification of Lie algebras as set up in \cite{Aguilar2024FullClassificationLie} will no longer hold. 

For our first result, we find a resource set $S$ based on $\cZ(1)$ as a corollary of Theorem~\ref{thm:XdoubleZsingle_extensions}.

\begin{cor}\label{Ising}
    Let $J_{\gamma, \lambda}$ as given in Eq.~\eqref{eq:Ising_Ham}. Then for all values of $\gamma$ and $\lambda$
    \begin{equation}
        \langle \{J_{\gamma, \lambda}, X_1, Z_1 Z_2 \}, \cZ(1) \rangle = \langle \{X_1, Z_1 Z_2 \}, \cZ(1), \cX(2) \rangle = \mathfrak{su}(2^n).
    \end{equation}
    Thus, $J_{\gamma, \lambda}$ is magic with respect to $\{X_1, Z_1Z_2\}\cup \cZ(1)$
\end{cor}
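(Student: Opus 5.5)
The inclusion ``$\subseteq$'' is immediate once we know the right-hand side is $\mathfrak{su}(2^n)$, which is Theorem~\ref{thm:XdoubleZsingle_extensions}\,iii) applied to $\cX(2)\cup\cZ(1)\cup\{X_1,Z_1Z_2\}$. So the whole task is the reverse inclusion. My plan is to show that the Lie algebra $\mathcal L:=\langle \{J_{\gamma,\lambda},X_1,Z_1Z_2\},\cZ(1)\rangle$ contains every $X_jX_{j+1}$, and then invoke Theorem~\ref{thm:XdoubleZsingle_extensions}\,iii) again. Since $\cZ(1)\subset\mathcal L$, I first subtract $\lambda\sum_j Z_j$ and work with
\[ K=\sum_j K_j,\qquad K_j=(1+\gamma)X_jX_{j+1}+(1-\gamma)Y_jY_{j+1}. \]
From $X_1$ and $Z_1$ we get $Y_1$, so $\mathcal L$ contains the full single-qubit algebra on qubit $1$. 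It also contains $Z_1Z_2$, $Z_2$, and then $Y_1Z_2$ and $X_1Z_2$. By Theorem~\ref{sample} this part alone has dimension only $n+5$ and never produces an $X$ or $Y$ on qubit $2$, so the Ising term really is needed.

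\emph{Step 1 (isolating the first bond).} Single-qubit operators on qubit $1$ commute with $K_j$ for $j\ge 2$. Hence $\mathrm{ad}_{X_1}^2K=\mathrm{ad}_{X_1}^2K_1=4(1-\gamma)Y_1Y_2$, because $X_1$ commutes with $X_1X_2$ and anticommutes with $Y_1Y_2$. Likewise $\mathrm{ad}_{Y_1}^2K=4(1+\gamma)X_1X_2$. At least one of $1\pm\gamma$ is nonzero, so $\mathcal L$ contains $X_1X_2$ or $Y_1Y_2$. The other one follows from the computation $\mathrm{ad}_{Z_1}\mathrm{ad}_{Z_2}(X_1X_2)=-4Y_1Y_2$ and the analogous identity with the roles of $X$ and $Y$ swapped. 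So both $X_1X_2$ and $Y_1Y_2$ lie in $\mathcal L$, for all $\gamma,\lambda$. Subtracting $K_1$ gives $K^{(2)}=\sum_{j\ge2}K_j\in\mathcal L$.

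\emph{Step 2 (propagating the single-qubit control).} Next I produce full control on qubit $2$. We have $[Y_1,X_1X_2]\simeq Z_1X_2$. Since $Z_1Z_2$ anticommutes with $Z_1X_2$, we get $[Z_1Z_2,Z_1X_2]\simeq Z_2X_2\simeq Y_2$, and then $[Z_2,Y_2]\simeq X_2$. To repeat this at the next site I also need $Z_2Z_3$. I obtain it as in the proof of Theorem~\ref{thm:XdoubleZsingle_extensions}\,ii): the chain $ZI\stackrel{XX}{\longleftrightarrow}YX\stackrel{YY}{\longleftrightarrow}IZ$ transports $Z_1Z_2$ to $Z_2Z_3$ once the bond operators $X_2X_3$ and $Y_2Y_3$ are available.

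\emph{Step 3 (induction).} This gives an induction on $k$. Suppose $\mathcal L$ contains $X_k$, $Y_k$, $Z_k$, and $Z_{k-1}Z_k$ together with $\sum_{j\ge k}K_j$. Then the argument of Step 1 applied to qubit $k$ (using $\mathrm{ad}_{X_k}^2$, $\mathrm{ad}_{Y_k}^2$, and $\mathrm{ad}_{Z_k}\mathrm{ad}_{Z_{k+1}}$) yields $X_kX_{k+1}$ and $Y_kY_{k+1}$. The transport chain then gives $Z_kZ_{k+1}$, Step 2 gives $X_{k+1}$ and $Y_{k+1}$, and subtracting $K_k$ gives the tail sum $\sum_{j\ge k+1}K_j$.

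At the end $\cX(2)\subset\mathcal L$, and Theorem~\ref{thm:XdoubleZsingle_extensions}\,iii) finishes the proof. The main point to check carefully is the locality claim in Step 1: that $\mathrm{ad}_{X_k}$ and $\mathrm{ad}_{Y_k}$ annihilate every $K_j$ with $j\ne k-1,k$. At step $k$ the bond $k-1$ has already been subtracted, so only $K_k$ survives. I also need to check that the sign and coefficient bookkeeping never produces a cancellation at the degenerate values $\gamma=\pm1$, which is handled by the ``at least one nonzero'' argument together with the $Z_kZ_{k+1}$-conjugation trick. No condition on $\lambda$ is needed, because the field term is removed using $\cZ(1)$.
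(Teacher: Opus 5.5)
Your proof is correct, but it reaches $\cX(2)\subset\mathcal L$ by a different mechanism from the paper.

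\textbf{The paper's route.} The paper never needs single-qubit control beyond qubit $1$. It extracts $X_1X_2$ from $J_{\gamma,\lambda}$ by $\mathrm{ad}_{X_1}\mathrm{ad}_{Z_1}$, followed by $\mathrm{ad}_{X_1}$ and $\mathrm{ad}_{Z_1}$. In the induction step it isolates the bond $(k+1,k+2)$ as $(1+\gamma)Y_{k+1}Y_{k+2}+(1-\gamma)X_{k+1}X_{k+2}$, directly as $\mathrm{ad}_{Z_{k+2}}\mathrm{ad}_{Z_{k+1}}J_{\gamma,\lambda}$, with no subtraction of earlier bonds. It then splits off $X_{k+1}X_{k+2}$ by commuting twice with $X_kY_{k+1}\simeq[Z_{k+1},X_kX_{k+1}]$, which comes from the previous bond. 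So the paper uses only $J_{\gamma,\lambda}$, $X_1$ and $\cZ(1)$ at this stage, and its commutator chains are short. The generator $Z_1Z_2$ enters only through Theorem~\ref{thm:XdoubleZsingle_extensions}\,iii).

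\textbf{Your route and what it costs.} You isolate each bond with $\mathrm{ad}^2_{X_k}$ and $\mathrm{ad}^2_{Y_k}$ after subtracting the bonds already found. This forces you to carry full local control along the chain. It also makes $Z_1Z_2$ essential, through the $Z_kZ_{k+1}$ transport and $[Z_kZ_{k+1},Z_kX_{k+1}]\simeq Y_{k+1}$. The price is more bookkeeping and longer chains.

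\textbf{What it buys.} Your argument is uniform in $\gamma$. The paper's chains as written degenerate at two values:
\begin{itemize}
\item at $\gamma=-1$, in the base case, $\mathrm{ad}_{X_1}$ annihilates the only surviving term $(1-\gamma)X_1Y_2$;
\item at $\gamma=1$, the induction step carries an overall factor $1-\gamma$.
\end{itemize}
Both are easily patched, but they are needed for the claimed ``for all values of $\gamma$''. Your observation that one of $1\pm\gamma$ is nonzero, followed by $\mathrm{ad}_{Z_k}\mathrm{ad}_{Z_{k+1}}$, covers every $\gamma$ at once.

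\textbf{A small bookkeeping point.} Transporting $Z_{k-1}Z_k$ to $Z_kZ_{k+1}$ uses $X_{k-1}X_k$ and $Y_{k-1}Y_k$ as well as the new bond operators. For $k=2$ this means $X_1X_2$ and $Y_1Y_2$, not only $X_2X_3$ and $Y_2Y_3$. So state your induction hypothesis cumulatively. Those operators are already in $\mathcal L$ from earlier steps, so nothing breaks.
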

\begin{proof}
    For the first equality, one can easily see that 
    \begin{equation}
        \cY(2) \subset \langle \{X_1, Z_1 Z_2 \}, \cZ(1), \cX(2) \rangle,
    \end{equation}
    and thus that $J_{\gamma, \lambda} \in \langle \{X_1, Z_1 Z_2 \}, \cZ(1), \cX(2) \rangle$, which means that
    \begin{equation}\label{eq:subset1}
        \langle \{J_{\gamma, \lambda}, X_1, Z_1 Z_2 \}, \cZ(1) \rangle \subset \langle \{X_1, Z_1 Z_2 \}, \cZ(1), \cX(2) \rangle.
    \end{equation}

    To prove the reverse, we show that any element in $\cX(2)$ is in $\langle \{J_{\gamma, \lambda}, X_1, Z_1 Z_2 \}, \cZ(1) \rangle$ by induction. We can extract $X_1 X_2$ by commuting $J_{\gamma, \lambda}$ with $Z_1$ and $X_1$ as
    \begin{equation}
        J_{\gamma, \lambda} \stackrel{Z_1}{\longrightarrow} \big( (1+\gamma) Y_1 X_{2} - (1-\gamma) X_1 Y_{2}\big) \stackrel{X_1}{\longrightarrow} Z_1 X_{2} \stackrel{X_1}{\longleftrightarrow} Y_1X_2 \stackrel{Z_1}{\longleftrightarrow} X_1 X_2.
    \end{equation}
    
    Now, assuming we can generate $X_k X_{k+1}$, we show that we also find $X_{k+1} X_{k+2}$ in the generated Lie algebra. First take the commutator of $X_k X_{k+1}$ with $Z_{k+1}$ to find $X_k Y_{k+1}$. By taking the commutator of $J_{\gamma, \lambda}$ with $Z_{k+1}$ and $Z_{k+2}$, we get the element
    \begin{equation}
        J_{\gamma, \lambda} \stackrel{Z_{k+1}}{\longrightarrow} \cdots \stackrel{Z_{k+2}}{\longrightarrow}  (1+\gamma) Y_{k+1} Y_{k+2} + (1-\gamma) X_{k+1} X_{k+2}
    \end{equation}
    Now we take the commutator twice with $X_k Y_{k+1}$ to find
    \begin{equation}
        (1+\gamma) Y_{k+1} Y_{k+2} + (1-\gamma) X_{k+1} X_{k+2} \stackrel{X_k Y_{k+1}}{\longrightarrow} X_k Z_{k+1} X_{k+2} \stackrel{X_k Y_{k+1}}{\longrightarrow} X_{k+1} X_{k+2}.
    \end{equation}
    Thus, $\cX(2) \subset \langle \{J_{\gamma, \lambda}, X_1, Z_1 Z_2 \}, \cZ(1) \rangle$, which implies 
    \begin{equation}
        \langle \{J_{\gamma, \lambda}, X_1, Z_1 Z_2 \}, \cZ(1) \rangle \supset \langle \{X_1, Z_1 Z_2 \}, \cZ(1), \cX(2) \rangle.
    \end{equation}
    Together with the result from Eq.~\eqref{eq:subset1}, this proves the first equality of the theorem. The second equality follows directly from Theorem \ref{thm:XdoubleZsingle_extensions}.
\end{proof}

\begin{rem}
    Removing $X_1$ from the generators gives a more complicated effect than seen in Theorem 
 \ref{thm:XdoubleZsingle_extensions}. Specifically, without $X_1$, the symmetry in $J_{0,\lambda}$ for $\gamma = 0$ prevents a separation of elements $(1+\gamma) X_k X_{k+1} + (1-\gamma) Y_k Y_{k+1}$ into the separate $X_k X_{k+1}$ and $Y_k Y_{k+1}$ elements. See also Remark \ref{rem:J_Z2_symmetry}.
\end{rem}

Proposition \ref{prop:double-double} states that $\langle \cX(2) \cup \cZ(1)\rangle = \langle\cX(2) \cup \cY(2)\cup \{Z_1\}\rangle$. That same insight immediately finds us a different $S$ with respect to which $J_{\gamma, \lambda}$ is magic.
\begin{cor}
     $J_{\gamma, \lambda}$ is magic w.r.t. $\{X_1, Z_1 Z_2, Z_1\}\cup \cY(2)$.
\end{cor}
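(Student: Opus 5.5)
The plan is to reduce the claim to Corollary~\ref{Ising} by showing that the two resource sets generate the same Lie algebra once $J_{\gamma,\lambda}$ is adjoined. Set $S=\{X_1,Z_1Z_2,Z_1\}\cup\cY(2)$. We must show $\langle S\cup\{J_{\gamma,\lambda}\}\rangle=\mathfrak{su}(2^n)$. By Corollary~\ref{Ising} it suffices to show that this algebra contains $\cZ(1)$; it already contains $X_1$, $Z_1Z_2$ and $J_{\gamma,\lambda}$. The argument has three steps, carried out in this order.

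\emph{Step 1: extract $X_1X_2$ (and $Y_1Y_2$) from $J_{\gamma,\lambda}$.} Only $Z_1$ and $X_1$ are available as single-qubit controls, so I would repeat the first chain in the proof of Corollary~\ref{Ising}:
\[
J_{\gamma,\lambda}\stackrel{Z_1}{\longrightarrow}(1+\gamma)Y_1X_2-(1-\gamma)X_1Y_2\stackrel{X_1}{\longrightarrow}Z_1X_2\stackrel{X_1}{\longleftrightarrow}Y_1X_2\stackrel{Z_1}{\longleftrightarrow}X_1X_2 .
\]
Here the $X_1Y_2$ term is killed by $X_1$. The surviving coefficient is a multiple of $1+\gamma$, so the case $\gamma=-1$ needs separate treatment. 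In that case I would instead commute $Y_1Y_2\in\cY(2)$ with $Z_1$ and $X_1$, or use the $(1-\gamma)$ term after commuting with $Y_1=[Z_1,X_1]$ up to scalars. Using $Y_1$ lets me isolate whichever of $X_1Y_2$, $Y_1X_2$ has a nonzero coefficient, since at least one of $1\pm\gamma$ is nonzero. Checking this degenerate case is a small side step.

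\emph{Step 2: propagate $Z$ to every site.} With both $X_1X_2$ and $Y_1Y_2$ in hand, the chain from Proposition~\ref{prop:double-double},
\[
Z_1\stackrel{X_1X_2}{\longleftrightarrow}Y_1X_2\stackrel{Y_1Y_2}{\longleftrightarrow}Z_2 ,
\]
gives $Z_2$. To continue I need $X_kX_{k+1}$ at later sites, which I would obtain inductively as in Corollary~\ref{Ising}. Commuting $J_{\gamma,\lambda}$ with $Z_{k+1}$ and $Z_{k+2}$ is not yet legitimate, because $Z_{k+2}$ is not yet known. So I would instead work from $Y_{k+1}Y_{k+2}\in\cY(2)$ and the already available $X_kX_{k+1}$ and $Z_{k+1}$:
\[
Y_{k+1}Y_{k+2}\stackrel{X_kX_{k+1}}{\longleftrightarrow}X_kZ_{k+1}Y_{k+2}.
\]
Then commuting $J_{\gamma,\lambda}$ with suitable known elements should isolate $X_{k+1}X_{k+2}$. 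Once $X_{k+1}X_{k+2}$ is available, the chain with $Y_{k+1}Y_{k+2}$ moves $Z_{k+1}$ to $Z_{k+2}$. Induction then yields $\cZ(1)$, and also $\cX(2)$.

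\emph{Step 3: conclude.} By Corollary~\ref{Ising}, $\langle\{J_{\gamma,\lambda},X_1,Z_1Z_2\},\cZ(1)\rangle=\mathfrak{su}(2^n)$, so $S$ together with $J_{\gamma,\lambda}$ generates $\mathfrak{su}(2^n)$, which is the claim.

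The main obstacle is the induction step in Step 2: extracting $X_{k+1}X_{k+2}$ from $J_{\gamma,\lambda}$ using only $Z_1,\dots,Z_{k+1}$ and the two-site terms up to site $k+1$. What I would try first is a simpler route that avoids $J$ entirely. From $Z_{k+1}$ and $Y_{k+1}Y_{k+2}$ one gets $X_{k+1}Y_{k+2}$. The Jordan–Wigner chains of Proposition~\ref{prop:double-double}, applied to these together with $X_kX_{k+1}$, should then produce $X_{k+1}X_{k+2}$ through the string $X_kZ_{k+1}\cdots$ by commuting back with $X_kY_{k+1}$. If that works, $J_{\gamma,\lambda}$ is only needed for the first step, exactly as for Corollary~\ref{Ising}. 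I would verify this on small $n$ symbolically before writing the general induction.
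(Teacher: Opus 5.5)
Your reduction is sound and parallels the paper's. The paper shows $\cX(2)$ lies in the algebra and then invokes Proposition~\ref{prop:double-double} and Theorem~\ref{thm:XdoubleZsingle_extensions}; you aim for $\cZ(1)$ and invoke Corollary~\ref{Ising}. Both need the same alternating induction, and that induction step is the gap: you flag it but never supply it, and the route you would try first cannot work. For $k\ge1$, every element you would use besides $J_{\gamma,\lambda}$ acts on site $k+2$ by $I$ or $Y$. These elements are $X_1$, $Z_1$, $Z_1Z_2$, $X_jX_{j+1}$ ($j\le k$), $Z_j$ ($j\le k+1$) and $\cY(2)$. Iterated commutators of Pauli strings are scalar multiples of their products, so everything these generate has $I$ or $Y$ at site $k+2$; your chain indeed ends at $X_{k+1}Y_{k+2}$. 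So $J_{\gamma,\lambda}$ must enter at every step. The missing computation is short:
\[
[Z_{k+1},[Z_{k+1},J_{\gamma,\lambda}]]=4\big((1+\gamma)(X_kX_{k+1}+X_{k+1}X_{k+2})+(1-\gamma)(Y_kY_{k+1}+Y_{k+1}Y_{k+2})\big).
\]
Subtracting the known terms leaves $(1+\gamma)X_{k+1}X_{k+2}$. Then $Z_{k+1}\stackrel{X_{k+1}X_{k+2}}{\longleftrightarrow}Y_{k+1}X_{k+2}\stackrel{Y_{k+1}Y_{k+2}}{\longleftrightarrow}Z_{k+2}$ closes the induction. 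The same computation at site $1$ gives your Step~1 directly.

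Your handling of $\gamma=-1$ does not work, and cannot work in general. There $J_{-1,\lambda}=2\sum_jY_jY_{j+1}+\lambda\sum_jZ_j$ contains no $XX$ terms. Its $(1-\gamma)$ part already lies in $\operatorname{span}\cY(2)$, so isolating $X_1Y_2$ gives nothing new. For $(\gamma,\lambda)=(-1,0)$ the statement is in fact false. Since $J_{-1,0}\in\operatorname{span}\cY(2)$, the algebra is generated by the Pauli strings of $S$ alone. For $n\ge3$ only $Y_{n-1}Y_n$ acts nontrivially on site $n$, so $X_n$ is never produced. The paper's one-line proof overlooks this as well, so the hypothesis $(\gamma,\lambda)\neq(-1,0)$ must be added. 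For $\gamma=-1$, $\lambda\ne0$ the claim does hold, but it goes through $\sum_jZ_j$ rather than any $XX$ term. Once $Z_k$ is known, $[Y_kY_{k+1},\sum_jZ_j]\propto X_kY_{k+1}+Y_kX_{k+1}$ and $[Z_k,Y_kY_{k+1}]\propto X_kY_{k+1}$ together yield $Y_kX_{k+1}$. From that you get $X_kX_{k+1}$ and then $Z_{k+1}$, and Theorem~\ref{thm:XdoubleZsingle_extensions} finishes.
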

\begin{proof}
    To prove that $J_{\gamma, \lambda}$ is magic w.r.t. $\{X_1, Z_1 Z_2, Z_1\}\cup \cY(2)$, we have to show that $\cX(2)$ is in the generated Lie algebra. We can extract $X_k X_{k+1}$ by commuting with appropriate $Y_k Y_{k+1}$ and/or $Z_1$ and taking linear combinations. Thus, we are back in the scenario of $\langle \cX(2) \cup \cY(2) \cup \{X_1, Z_1 Z_2, Z_1\}\rangle$, which generates $\mathfrak{su}(2^n)$ by Proposition \ref{prop:double-double} and Theorem \ref{thm:XdoubleZsingle_extensions}.
\end{proof}
    
We next investigate extensions of the generators $\cX(2)$ and $\cZ(2)$, attempting to replace the $\cX(2)$ by the Ising Hamiltonian $J_{\gamma, \lambda}$. The interesting case is when we do not add $Z_1$ and try to find different and smaller sets of local operators for which $J_{\gamma, \lambda}$ is magic w.r.t.\ symmetry protection. The symmetry-protected Lie group is everything that commutes with $X^{\otimes n},Y^{\otimes n}$.

\begin{prop}\label{prop:J_doublyZ}
    The Lie algebra generated by $J_{\gamma, 0}$ with $\gamma \notin \{-1,0,1\}$ and $\cZ(2)$ for $n>4$ is given by
    \begin{equation}
    \begin{aligned}
        \langle J_{\gamma, 0} \cup \cZ(2) \rangle &= \langle \cX(2) \cup \cY(2) \cup \cZ(2) \rangle \\
        &= Q_n \backslash \{I, X^{\otimes n}, Y^{\otimes n}, Z^{\otimes n} \} = \begin{cases}
            \mf{su}(2^{n-1}) &\text{ if $n$ odd} \\
            \oplus_{i=1}^{4}\mf{su}(2^{n-2}) &\text{ if $n$ even}
        \end{cases},
    \end{aligned}
    \end{equation}
    where $Q_n$ consists of all Pauli strings of length $n$ where the $k_x \equiv k_y \equiv k_z \mod{2}$. Note that for odd $n$, the parity constraints already ensure that the elements $\{X^{\otimes n}, Y^{\otimes n}, Z^{\otimes n} \}$ are not in $Q_n$. The resulting Lie algebra has dimension $4^{n-1}-1$ for odd $n$ and $4^{n-1}-4$ for even $n>2$.
\end{prop}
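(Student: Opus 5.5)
The plan is to prove the first equality $\langle J_{\gamma,0}\cup\cZ(2)\rangle=\langle \cX(2)\cup\cY(2)\cup\cZ(2)\rangle$ and then quote Proposition~\ref{prop:X2_Y2_Z2} for the remaining equalities, the classification and the dimension count.

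\emph{Inclusion $\subset$.} This is immediate. $J_{\gamma,0}=\sum_j (1+\gamma)X_jX_{j+1}+(1-\gamma)Y_jY_{j+1}$ is a linear combination of elements of $\cX(2)\cup\cY(2)$, so it lies in the right-hand side.

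\emph{Inclusion $\supset$.} It suffices to extract every $X_kX_{k+1}$ and $Y_kY_{k+1}$ from $J:=J_{\gamma,0}$ and $\cZ(2)$. Write $h_j=(1+\gamma)X_jX_{j+1}+(1-\gamma)Y_jY_{j+1}$.

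\begin{itemize}
\item[Step 1: commuting with $Z_{k-1}Z_k$.] This operator anticommutes with the terms $h_{k-2}$ and $h_k$, since it shares exactly one site with each. It commutes with $h_{k-1}$, since it overlaps that term on both sites. Hence $[Z_{k-1}Z_k,J]$ is supported on sites $k-2,\dots,k+1$ and equals a combination of $Z_{k-1}\beta_k\beta_{k+1}$ and $\beta_{k-2}\beta'_{k-1}Z_k$-type strings, where $\beta\in\{X,Y\}$.
\item[Step 2: isolating one side.] Commuting again with $Z_{k+1}Z_{k+2}$ kills the left part, which it commutes with. Acting on the right part $Z_{k-1}(\ast)_{k}(\ast)_{k+1}$, it produces strings of the shape $Z_{k-1}\,\alpha_k\beta_{k+1}\,Z_{k+2}$.
\item[Step 3: the two-dimensional subspaces.] Iterating such double commutators on a pair $(k,k+1)$ generates, inside the span of $\{X_kX_{k+1},Y_kY_{k+1}\}$ dressed with fixed $Z$-tails, the vectors $(1+\gamma)XX+(1-\gamma)YY$ and $(1+\gamma)YY+(1-\gamma)XX$. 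The second arises because $Z_kZ_{k+1}$-type conjugations map $XY$-type strings back with the roles swapped. These two vectors are linearly independent exactly when $(1+\gamma)^2\neq(1-\gamma)^2$, i.e.\ $\gamma\neq 0$. Taking linear combinations then separates $X_kX_{k+1}$ and $Y_kY_{k+1}$, possibly still carrying $Z$-tails.
\item[Step 4: stripping the tails.] Once a single clean string such as $Z_{k-1}X_kX_{k+1}Z_{k+2}$ is available, commute with the neighbouring $ZZ$'s and with the already obtained $XX$, $YY$ strings, as in the chains of Proposition~\ref{prop:X2_Y2_Z2}. The transposition chain $ZI\leftrightarrow YX\leftrightarrow IZ$ removes the $Z$-tails and yields $X_kX_{k+1}$ and $Y_kY_{k+1}$ for interior $k$.
\item[Step 5: boundary pairs.] Handle $k\in\{1,n-1\}$ separately, using the fact that one neighbour is absent. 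The restriction $n>4$ is needed so that interior pairs exist and the boundary pairs can be reached from them.
\end{itemize}

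\emph{The remaining parameter conditions.} The condition $\gamma\neq\pm1$ ensures that both $XX$ and $YY$ actually appear in $J$. Otherwise the algebra stays in the $XX\cup ZZ$ type subalgebra, which is not enough.

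\emph{Main obstacle.} The real difficulty is Step 3: producing a second, independent combination of $XX$ and $YY$ on the same pair. I would first do this on five consecutive sites by explicit computation, with the signs tracked and a computer check as backup, to confirm that the determinant is a nonzero multiple of $\gamma$. I would then argue by translation invariance of $J$ and $\cZ(2)$ away from the boundary.

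Once $\cX(2)\cup\cY(2)\subset\langle J\cup\cZ(2)\rangle$, the first equality holds, and Proposition~\ref{prop:X2_Y2_Z2} gives the remaining equalities and the dimensions $4^{n-1}-1$ and $4^{n-1}-4$.
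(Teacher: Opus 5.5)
Your reduction to showing $\cX(2)\cup\cY(2)\subset\langle J_{\gamma,0}\cup\cZ(2)\rangle$ is right, but Step~3 has a genuine gap. You flag it yourself as the main obstacle, and the mechanism you suggest cannot work. The operator $Z_kZ_{k+1}$ anticommutes on both sites with every string $M\alpha_k\beta_{k+1}$ ($\alpha,\beta\in\{X,Y\}$, $M$ a $Z$-tail), so its adjoint action on these strings is zero.

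More fundamentally, commutators with $\cZ(2)$ alone never produce two independent combinations on the same tail pattern. Write $a_k=(1+\gamma)X_kX_{k+1}+(1-\gamma)Y_kY_{k+1}$ and $a_k'=(1-\gamma)X_kX_{k+1}+(1+\gamma)Y_kY_{k+1}$. Then $[Z_jZ_{j+1},Ma_k]=2Z_jZ_{j+1}Ma_k$ for $j=k\pm1$ and $0$ otherwise, and $Z_kZ_{k+1}a_k=-a_k'$. Hence the $\cZ(2)$-orbit of $a_k$ is $\{Ma_k\}$ with $M\in\{1,Z_{k-1}Z_k,Z_{k+1}Z_{k+2},Z_{k-1}Z_kZ_{k+1}Z_{k+2}\}$. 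These four sectors are spanned by disjoint pairs of Pauli strings. The untailed sector carries only $a_k$, and the sector with tails $Z_{k-1}\cdot Z_{k+2}$ carries only $a_k'$. That is exactly what your Step~2 outputs: $-4Z_{k-1}a_k'Z_{k+2}$. Since $Z_kZ_{k+1}$ is not a product of $Z_{k-1}Z_k$ and $Z_{k+1}Z_{k+2}$, the coefficient ratio is locked to the tail pattern. Linear combinations of iterated $\cZ(2)$-commutators of $J_{\gamma,0}$ therefore never separate $XX$ from $YY$, and Step~4 has nothing to act on.

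The missing idea is to commute the Hamiltonian terms with each other, which is how the paper argues. First isolate each $a_k$. For $2\le k\le n-2$ one has $\mathrm{ad}^2_{Z_{k+1}Z_{k+2}}\mathrm{ad}^2_{Z_{k-1}Z_k}J_{\gamma,0}=16\,a_k$. The boundary terms $a_1$ and $a_{n-1}$ follow by subtracting $a_3$ and $a_{n-3}$ from $\mathrm{ad}^2_{Z_2Z_3}J_{\gamma,0}$ and $\mathrm{ad}^2_{Z_{n-2}Z_{n-1}}J_{\gamma,0}$; this is where $n>4$ is used. Then
\[ [a_{k+1},a_k]=2i(1-\gamma^2)\bigl(Y_kZ_{k+1}X_{k+2}-X_kZ_{k+1}Y_{k+2}\bigr),\qquad \bigl[a_{k+1},X_kZ_{k+1}Y_{k+2}-Y_kZ_{k+1}X_{k+2}\bigr]=2i\,a_k', \]
so the flipped combination $a_k'$ itself, with no tails, lies in the algebra. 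For $k=n-1$, use $a_{n-2}$ in place of $a_{k+1}$; the mirror computation gives $a_{n-1}'$ the same way.

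Here $\gamma\neq\pm1$ is what makes the first commutator nonzero, not merely what puts both $XX$ and $YY$ into $J_{\gamma,0}$. The determinant $(1+\gamma)^2-(1-\gamma)^2=4\gamma\neq0$ then gives $X_kX_{k+1}$ and $Y_kY_{k+1}$ directly by linear combination, so no tail-stripping is needed. As written, your proposal defers exactly this step to an explicit computation and computer check, so it is not yet a proof.
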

\begin{proof}
    We only have to prove the first equality, as the rest follows from Proposition \ref{prop:X2_Y2_Z2}. For $J_{\gamma,0}$, we immediately find that 
    \begin{equation}
        \langle J_{\gamma,0} \cup \cZ(2) \rangle \subset \langle \cX(2) \cup \cY(2) \cup \cZ(2) \rangle.
    \end{equation}
    For arbitrary $\gamma \neq 0$, we can extract the individual $a_k \coloneqq (1+\gamma) X_k X_{k+1} + (1-\gamma) Y_k Y_{k+1}$ for $k \in \{2, \cdots, n-2\}$ by commuting with appropriate $Z_k Z_{k+1}$. We can extract the first $a_1$ and last $a_{n-1}$ by further commutations and linear combinations, which works for $n \neq 4$. Furthermore, with appropriate commutators, we find that
    \begin{equation}
        a_k \stackrel{a_{k+1}}{\longrightarrow} X_k Z_{k+1} Y_{k+2} - Y_k Z_{k+1} X_{k+2} \stackrel{a_{k+1}}{\longrightarrow} (1-\gamma) X_k X_{k+1} + (1+\gamma) Y_k Y_{k+1} 
    \end{equation}
    which means that $(1-\gamma) X_k X_{k+1} +(1+\gamma) Y_k Y_{k+1}$ is also in the generated Lie algebra. If $\gamma \notin \{-1,1\}$, we find that the coefficients are flipped compared to $a_k$, so we can now extract the individual $X_k X_{k+1}$ and $Y_k Y_{k+1}$ by taking linear combinations. Thus
    \begin{equation}
        \langle J_{\gamma,0} \cup \cZ(2) \rangle \supset \langle \cX(2) \cup \cY(2) \cup \cZ(2) \rangle \supset \langle J_{\gamma, 0} \cup \cZ(2) \rangle,
    \end{equation}
    which implies that the two generated Lie algebras are the same.
\end{proof}

In the previous proposition, the growth in terms of dimension of the Lie algebra from adding $J_{\gamma,0}$ was especially big. The elements of $\cZ(2)$ in $S$ all commute, so the Lie algebra originally had dimension $n-1$, while the dimension of $\langle \{J_{\gamma,0} \} \cup \cZ(2) \rangle$ is of the form $4^{n-1}+O(1)$.

\begin{rem}\label{rem:J_Z2_symmetry}
    The cases $\gamma \in \{-1,0,1\}$ are special. For $\gamma = 1$, we have
    \begin{equation}
        J_{1,0} = \sum_k X_k X_{k+1}
    \end{equation}
    and we do not get the $Y_k Y_{k+1}$ elements. Thus, the problem reduces to $\langle \cX(2) \cup \cZ(2) \rangle$, which is polynomial in size as shown in Proposition  \ref{prop:double-double}.

    For $\gamma = 0$, we get the symmetric version. In this case, we can still extract the individual $a_k$ terms, but no longer the individual $X_k X_{k+1}$ and $Y_k Y_{k+1}$ separately. The resulting Lie algebra is exponential in size, as one can make chains by commuting $a_k$ with $a_k+1$ or $Z_{k+1}Z_{k+2}$.
\end{rem}

To generate the full Lie algebra $\mathfrak{su}(2^n)$, we must further break the parity constraints. This can be done by adding full control over a single qubit, using for example $X_1$ and $Y_1$, or setting $\lambda \neq 0$.

\begin{cor}\label{cor:Ising-Zdouble} 
The Lie algebra generated by $J_{\gamma, 0}$ with $\gamma \notin \{-1,0,1\}$ and $\cZ(2) \cup \{X_1, Y_1\}$ for $n>4$ is given by
    \begin{equation}
        \langle \{J_{\gamma,0}\} \cup \cZ(2) \cup \{X_1, Y_1 \} \rangle = \mathfrak{su}(2^n).
    \end{equation}

The Lie algebra generated by $J_{\gamma, \lambda}$ with $\gamma \neq 0$ and $\lambda \neq 0$ and $\cZ(2)$ for $n>4$ is given by
    \begin{equation}
    \begin{aligned}
        \langle J_{\gamma, \lambda} \cup \cZ(2) \rangle &= \langle \cX(2) \cup \cY(2) \cup \cZ(2) \cup \cZ(1) \rangle =\langle \cX(2) \cup \cZ(1) \cup \{Z_1 Z_2\} \rangle = S_{Z_n}\backslash \{Z^{\otimes n}, I\},
    \end{aligned}
    \end{equation}
    where $S_{Z_n}$ is all Pauli strings with an even number of Pauli $X$ and Pauli $Y$ operators. 

The Lie algebra generated by $J_{\gamma, \lambda}$ with $\gamma \neq 0$ and $\lambda \neq 0$ and $\cZ(2) \cup \{X_1\}$ for $n>4$ is given by
    \begin{equation}
        \langle J_{\gamma, \lambda} \cup \cZ(2) \cup \{X_1\}\rangle =\mathfrak{su}(2^n).
    \end{equation}
\end{cor}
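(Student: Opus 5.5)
The plan is to reduce all three statements to Proposition \ref{prop:J_doublyZ}, Proposition \ref{prop:X2_Y2_Z2} and Theorem \ref{thm:XdoubleZsingle_extensions}, by first extracting the individual nearest-neighbour terms from the Ising Hamiltonian and then recognising a known generator set.

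\emph{Part 1.} Proposition \ref{prop:J_doublyZ} gives $\cX(2)\cup\cY(2)\cup\cZ(2)\subset\langle J_{\gamma,0}\cup\cZ(2)\rangle$. We then show that $\cZ(1)$ lies in the enlarged algebra. Commuting $Y_1$ with $X_1$ gives $Z_1$. The chain $ZI\stackrel{XX}{\longleftrightarrow}YX\stackrel{YY}{\longleftrightarrow}IZ$ from Proposition \ref{prop:X2_Y2_Z2} then transports $Z_1$ to every $Z_k$. Next we show that $\cZ(2)$ is redundant: $Z_kZ_{k+1}$ arises as a commutator of $X_kX_{k+1}$ with $Y_kX_{k+1}$, and the latter is obtained from $X_kX_{k+1}$ by commuting with $Z_k$. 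The algebra therefore contains $\cX(2)\cup\cZ(1)\cup\{X_1,Z_1Z_2\}$, and Theorem \ref{thm:XdoubleZsingle_extensions} iii) gives $\mathfrak{su}(2^n)$.

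\emph{Part 2.} The inclusion ``$\subset$'' is immediate, since $J_{\gamma,\lambda}\in\mathrm{span}(\cX(2)\cup\cY(2)\cup\cZ(1))$. For ``$\supset$'' we separate the $\lambda\sum_j Z_j$ part from the hopping part.
\begin{enumerate}
\item Commute $J_{\gamma,\lambda}$ with $Z_kZ_{k+1}$ for interior $k$. The $Z_j$ terms drop out, and repeating the commutation isolates $a_k=(1+\gamma)X_kX_{k+1}+(1-\gamma)Y_kY_{k+1}$. The boundary terms $a_1,a_{n-1}$ are handled as in the proof of Proposition \ref{prop:J_doublyZ}.
\item Subtract $\sum_k a_k$ from $J_{\gamma,\lambda}$ to obtain $\lambda\sum_jZ_j$, and hence $\sum_jZ_j$ since $\lambda\neq 0$.
\item Split the $a_k$ into individual $X_kX_{k+1}$ and $Y_kY_{k+1}$, as in Proposition \ref{prop:J_doublyZ}. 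There the double commutator with $a_{k+1}$ yields $(1-\gamma)XX+(1+\gamma)YY$, which requires $\gamma\neq\pm1$. For $\gamma=\pm1$ we instead note that $a_k$ is already a single Pauli string, and recover the other one from it.
\item Extract individual $Z_k$ from $\sum_jZ_j$. Commuting $\sum_jZ_j$ with $X_kX_{k+1}$ gives a multiple of $Y_kX_{k+1}+X_kY_{k+1}$. Commuting this again with $Y_kY_{k+1}$ and $X_kX_{k+1}$ and taking linear combinations with $Z_kZ_{k+1}$ yields $Z_k+Z_{k+1}$-type and $Z_k-Z_{k+1}$-type elements, hence each $Z_k$.
\end{enumerate}
This places us in $\langle\cX(2)\cup\cY(2)\cup\cZ(2)\cup\cZ(1)\rangle$. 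By Proposition \ref{prop:double-double}, $\cY(2)$ lies in $\langle\cX(2)\cup\cZ(1)\rangle$, and the transposition chain produces all of $\cZ(2)$ from $Z_1Z_2$. So this algebra equals $\langle\cX(2)\cup\cZ(1)\cup\{Z_1Z_2\}\rangle$, and Theorem \ref{thm:XdoubleZsingle_extensions} ii) identifies it with $S_{Z_n}\backslash\{Z^{\otimes n},I\}$.

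\emph{Part 3.} Adding $X_1$ to Part 2 gives exactly the generators of Theorem \ref{thm:XdoubleZsingle_extensions} iii), hence $\mathfrak{su}(2^n)$.

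The main obstacle is steps 3 and 4 of Part 2. The Part 2 claim only assumes $\gamma\neq0$, but the splitting argument of Proposition \ref{prop:J_doublyZ} requires $\gamma\notin\{-1,1\}$ as well, so the cases $\gamma=\pm1$ must be checked separately. In step 4, a commutator $[Z_kZ_{k+1},\cdot]$ kills the symmetric combinations of $X_kY_{k+1}$ and $Y_kX_{k+1}$. Recovering a single $Z_k$ therefore needs an explicit chain, most likely using a boundary qubit, where $Z_1$ can be separated through $a_1$, followed by transport along the chain. This is where I would expect to spend the careful commutator bookkeeping.
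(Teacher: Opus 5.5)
Your overall route is the paper's own: isolate the $a_k$ with $\cZ(2)$-commutators, split them, peel off $\lambda\sum_j Z_j$, extract $\cZ(1)$, then invoke Propositions \ref{prop:double-double}, \ref{prop:X2_Y2_Z2} and Theorem \ref{thm:XdoubleZsingle_extensions}. The paper's proof is a short sketch of exactly this.

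Part 1, Part 3 and steps 1--2 of Part 2 are fine. The $a_k$ can be isolated using only double commutators with elements of $\cZ(2)$, and these annihilate $\lambda\sum_j Z_j$, so nothing changes relative to $\lambda=0$ for $n\ge 5$. There is a slip in Part 1: $[X_kX_{k+1},Y_kX_{k+1}]=2iZ_k$, which is not a multiple of $Z_kZ_{k+1}$, and $\cZ(2)$ is not redundant. This is harmless, because $Z_1Z_2$ is itself a generator.

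\textbf{Step 4 of Part 2 fails as written.} Both commutators you use give the \emph{symmetric} combination:
\[
[X_kX_{k+1},\,Y_kX_{k+1}+X_kY_{k+1}]=2i(Z_k+Z_{k+1}),\qquad [Y_kY_{k+1},\,Y_kX_{k+1}+X_kY_{k+1}]=-2i(Z_k+Z_{k+1}).
\]
Adding multiples of the different Pauli string $Z_kZ_{k+1}$ cannot produce $Z_k-Z_{k+1}$. The span of $\{Z_k+Z_{k+1}\}_{k=1}^{n-1}\cup\{\sum_jZ_j\}$ equals $\mathrm{span}\{Z_j\}$ only for odd $n$. For even $n$ the staggered field $\sum_j(-1)^jZ_j$ is missing.

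The missing idea is that double commutators act as projections. For Pauli strings, $[P,[P,Q]]=4Q$ if $P,Q$ anticommute and $0$ otherwise. Using two \emph{overlapping} pairs then isolates a single site:
\[
[X_kX_{k+1},[X_kX_{k+1},[X_{k-1}X_k,[X_{k-1}X_k,\textstyle\sum_jZ_j]]]]=16\,Z_k,\qquad 2\le k\le n-1 .
\]
You then get $Z_1$ and $Z_n$ by subtraction. This is presumably what the paper means by ``commutators with these $X_k$''.

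\textbf{The case $\gamma=\pm1$ is not actually closed.} You are right that the splitting in Proposition \ref{prop:J_doublyZ} needs $\gamma\neq\pm1$. There one computes $[a_{k+1},[a_{k+1},a_k]]=4(1-\gamma^2)\big((1-\gamma)X_kX_{k+1}+(1+\gamma)Y_kY_{k+1}\big)$, which vanishes at $\gamma=\pm1$. The paper's sketch does not treat this case, even though the statement only assumes $\gamma\neq 0$.

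Your fix, ``recover the other one from it'', cannot work with $\cZ(2)$ alone. From $X_kX_{k+1}$ and $\cZ(2)$ you remain in the polynomial-size algebra $\langle\cX(2)\cup\cZ(2)\rangle$. What closes the case is the field term you isolated in step 2. With $h=\sum_jZ_j$,
\[
[h,[h,X_kX_{k+1}]]=8\,(X_kX_{k+1}-Y_kY_{k+1}),
\]
and symmetrically for $\gamma=-1$. So it is precisely $\lambda\neq0$ that rescues $\gamma=\pm1$. With these two repairs your argument goes through.
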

\begin{proof}
    The first claim follows by seeing that adding $X_1$ and $Y_1$ gives all elements in $\cX(1), \cY(1)$ and $\cZ(1)$, after which Theorem \ref{thm:XdoubleZsingle_extensions} shows we generate the full Lie algebra.

    For the second claim with $\lambda \neq 0$, we can prove the first equality similarly to the proof of Proposition~\ref{prop:J_doublyZ}. We first extract the individual $X_k X_{k+1}$ and $Y_k Y_{k+1}$ terms by taking commutators and linear combinations with $Z_k Z_{k+1}$. We can then use commutators with these $X_k$ to extract the individual $Z_k$ terms, after which the first equality follows. The second equality $ \langle \cX(2) \cup \cY(2) \cup \cZ(2) \cup \cZ(1) \rangle =\langle \cX(2) \cup \cZ(1) \cup \{Z_1 Z_2\} \rangle$ follows easily from seeing that $Y_k Y_{k+1}$ and $Z_k Z_{k+1}$ are in $\langle \cX(2) \cup \cZ(1) \cup \{Z_1 Z_2\} \rangle$. The last equality follows from Theorem \ref{thm:XdoubleZsingle_extensions}.

    The last claim of extension with $X_1$ follows from the second claim and Theorem \ref{thm:XdoubleZsingle_extensions}.
\end{proof}

We summarize all our results up to this point in the following theorem.
\begin{theorem}\label{themagic}
In the following table, we give various local control sets $S$, their dimension, and a candidate magic Hamiltonian.
 \[ \begin{array}{c|c|c|c|c}
 \text{No.}& \text{local set } S & \text{Dim}(S) & \text{Hamiltonian} & \text{magic} \\  \hline
 1)& \mathcal{X}(2)\cup \mathcal{Z}(1)& 2n^2-n& X_1+Z_{1}Z_2 & \text{yes}  \\ 
 2)& \{X_1\}\cup \mathcal{X}(2)\cup \mathcal{Z}(1) & 2n^2+n & Z_1Z_2 & \text{yes} \\ 
 3)& \mathcal{Z}(1)  & n & \sum_j X_jX_{j+1} &  \text{no}\\
  4)&\{X_1,Z_1Z_2\}\cup \mathcal{Z}(1) & n+5 & \sum_j X_jX_{j+1} &  \text{yes}\\
  5)&\{X_1,Z_1Z_2\}\cup \mathcal{Z}(1) & n+5 & J_{\gamma,\lambda} &  \text{yes}\\
  6)& \{X_1,X_1X_2X_3X_4,Z_1,Z_1Z_2Z_3Z_4\} \cup \mathcal{Z}(2) & n+32 & \sum_j X_jX_{j+1} & \text{no} \\ 
  7)& \{X_1,X_1X_2X_3X_4,Z_1,Z_1Z_2Z_3Z_4\} \cup \mathcal{Z}(2) & n+32 & \sum_j X_jX_{j+1} + \sum_j Z_j& \text{yes} \\ 
  8)&\{X_1\}\cup \mathcal{Z}(2)  & n+1 & J_{\gamma,\lambda} & \text{yes}^* \\ 
 9)& \{X_1,Y_1\} \cup \mathcal{Z}(2) & n+3 & J_{\gamma,0} & \text{yes} 
  \end{array} \] 
Here yes$^*$ is under the condition $\la\neq 0$, $\gamma\notin\{-1,0,1\}$. \label{thm:controlsets}
\end{theorem}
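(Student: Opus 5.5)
The plan is to verify the nine rows of the table one at a time. Each row has three parts: the dimension of $L(S)$, a proof that $L(S\cup\{H\})=\mf{su}(2^n)$ for the ``yes'' rows, and an invariant obstructing universality for the ``no'' rows. Most rows reduce directly to results already proven.

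\textbf{Rows already covered.}
\begin{itemize}
\item Row 2 is Theorem~\ref{thm:XdoubleZsingle_extensions}~iii). The dimension $2n^2+n$ is part~i) of the same theorem.
\item Row 5 is Corollary~\ref{Ising}.
\item Row 4 is the special case $\gamma=1$, $\lambda=0$ of Row 5: there $J_{1,0}=2\sum_j X_jX_{j+1}$, and the extraction chain in the proof of Corollary~\ref{Ising} applies verbatim.
\item Rows 8 and 9 are the third and first claims of Corollary~\ref{cor:Ising-Zdouble}.
\end{itemize}

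\textbf{Dimensions of $L(S)$.} For $\{X_1,Z_1Z_2\}\cup\cZ(1)$, the elements $Z_j$ with $j\ge3$ commute with everything. The remaining generators live on qubits $1,2$ and generate $\{X_1,Y_1,Z_1,Z_2,Z_1Z_2,X_1Z_2,Y_1Z_2\}$. This should reproduce the stated count $n+5$ after checking the overlap. For $\{X_1\}\cup\cZ(2)$, only $Z_1Z_2$ fails to commute with $X_1$, giving the span of $\cZ(2)\cup\{X_1,Y_1Z_2\}$, of dimension $n+1$.

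\textbf{Row 1.} Here $H=X_1+Z_1Z_2$ and $S=\cX(2)\cup\cZ(1)$. Commuting $H$ twice with $Z_1$ kills $Z_1Z_2$ and returns a multiple of $X_1$. Hence both $X_1$ and $Z_1Z_2=H-X_1$ lie in the generated algebra, and Theorem~\ref{thm:XdoubleZsingle_extensions}~iii) applies.

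\textbf{Row 3 (non-magic).} Use the Jordan--Wigner obstruction. Both $\cZ(1)$ and $\sum_j X_jX_{j+1}$ lie in $\langle\cX(2)\cup\cZ(1)\rangle\cong\mf{so}(2n)$ by Proposition~\ref{prop:single-double}. Since $\mf{so}(2n)$ is a proper subalgebra of $\mf{su}(2^n)$, the row is not magic.

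\textbf{Rows 6 and 7.} These require new computations, and I expect them to be the main obstacle.

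For Row 6, I would look for a conserved parity-type invariant of all the generators and Pauli terms involved. Examples are commutation with $Z^{\otimes n}$ combined with the inclusion of $X_1X_2X_3X_4$, or an invariant like the $k_x\equiv k_y$ constraint in $S_{z_n}$. Concretely, I would:
\begin{enumerate}
\item check that every generator lies in some proper algebra of the form $\langle \cX(2)\cup\cZ(2)\cup\{\dots\}\rangle$;
\item since $Z_1$ is present but $\cZ(1)$ is not, identify a symmetry, such as a quadratic-fermion structure or a commutant element, that survives;
\item tentatively, try the classification of \cite{Aguilar2024FullClassificationLie} on the Pauli terms of $S\cup\{H\}$, using that $H$ is a sum of $\cX(2)$ terms, which can be separated by commuting with $\cZ(2)$ as in Proposition~\ref{prop:J_doublyZ}.
\end{enumerate}

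For Row 7, the extra $\sum_j Z_j$ allows separating $\cX(2)$ and, via $\cZ(2)$ commutators, producing $\cZ(1)$. Adding $X_1$ and $Z_1Z_2$ then gives $\mf{su}(2^n)$ by Theorem~\ref{thm:XdoubleZsingle_extensions}~iii).

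The delicate point in Row 7 is the term-separation step. The linear combinations $\sum_j X_jX_{j+1}+\lambda\sum_jZ_j$ must be split using only $\cZ(2)$ and the four-body strings, and the boundary terms need care, as in the $n\neq4$ caveat of Proposition~\ref{prop:J_doublyZ}. The dimension $n+32$ would be obtained by an explicit enumeration on the first four qubits, analogous to the $n+5$ count above.
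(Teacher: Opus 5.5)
\textbf{Rows 1--5, 8, 9.} Your handling of these rows is the paper's own: reduction to Theorem~\ref{thm:XdoubleZsingle_extensions}~iii), Corollary~\ref{Ising} and Corollary~\ref{cor:Ising-Zdouble}, plus containment in $\mf{so}(2n)$ for row 3. One slip: in row 4 the induction step of Corollary~\ref{Ising} does not apply verbatim at $\gamma=1$, because $X_kY_{k+1}$ commutes with $Y_{k+1}Y_{k+2}$. There one simply isolates each term: $[Z_{j+1},[Z_{j+1},[Z_j,[Z_j,H]]]]=16X_jX_{j+1}$.

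\textbf{Row 7.} The genuine gap is rows 6 and 7, where you only list things to try. For row 7 the separation you call delicate is immediate because $Z_1\in S$. First, $[Z_1,[Z_1,H]]=4X_1X_2$. Let $H_k$ be $H$ minus the already obtained terms $X_1X_2,\dots,X_kX_{k+1},Z_1,\dots,Z_k$. Then $[X_kX_{k+1},[X_kX_{k+1},H_k]]=4Z_{k+1}$ and $[Z_{k+1},[Z_{k+1},H_k-Z_{k+1}]]=4X_{k+1}X_{k+2}$. This yields $\cX(2)\cup\cZ(1)$, and with $X_1$ and $Z_1Z_2$, Theorem~\ref{thm:XdoubleZsingle_extensions}~iii) finishes. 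The four-body strings are not needed. Your suggested route through $\cZ(2)$ cannot work as stated: a commutator with an element of $\cZ(2)$ is never proportional to a single $Z_j$.

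\textbf{Row 6: your plan cannot succeed.} The same double commutators, $[Z_1,[Z_1,H]]=4X_1X_2$ and $[Z_kZ_{k+1},[Z_kZ_{k+1},H]]=4(X_{k-1}X_k+X_{k+1}X_{k+2})$, put $\cX(2)$ in the algebra. So it contains $\cX(2)\cup\cZ(2)\cup\{X_1,Z_1\}$, whose Pauli commutant is trivial. Hence there is no parity or commutant invariant to find; already $X_1$ breaks your $Z^{\otimes n}$ and $k_x\equiv k_y$ candidates.

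\textbf{Row 6: the paper's argument does not close it either.} The paper splits $\cX(2)\cup\cZ(2)$ into the commuting chains $S_1=\{X_1X_2,Z_2Z_3,X_3X_4,\dots\}$ and $S_2=\{Z_1Z_2,X_2X_3,\dots\}$. It PauLie-identifies them with two disjoint $n/2$-qubit chains, then attaches $\{Z_1,X_1X_2X_3X_4\}$ to $S_1$ and $\{X_1,Z_1Z_2Z_3Z_4\}$ to $S_2$ to obtain $\mf{su}(2^{n/2})\oplus\mf{su}(2^{n/2})$. But $X_1Z_1=-Z_1X_1$, so the two enlarged families do not commute and the correspondence is not PauLie.

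\textbf{Row 6 is in fact false.} The frustration graph of $\cX(2)\cup\cZ(2)\cup\{X_1,Z_1\}$ is a single path on $2n$ vertices: the end $X_1X_2$ of $S_1$ and the end $Z_1Z_2$ of $S_2$ are joined through $Z_1$ and $X_1$. A path on an even number of vertices has a nondegenerate commutation form. This gives an algebra isomorphism $M_{2^n}\cong\mathrm{Cl}^0_{2n+1}$ sending the $j$-th vertex to $i\gamma_{j-1}\gamma_j$ and Pauli strings to even Majorana monomials. The string $X_1X_2X_3X_4$ anticommutes with exactly the vertices $Z_1$ and $Z_4Z_5$, which are four steps apart on the path (only with $Z_1$ when $n=4$), so it is a quartic monomial. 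Commuting with quadratics moves its indices, so every quartic is generated. A $2k$-monomial and a quartic sharing one index give a $(2k+2)$-monomial, so all $4^n-1$ even monomials are generated. Thus for $n\ge4$ the row-6 algebra is all of $\mf{su}(2^n)$. The ``no'' entry is false: no invariant of the kind you seek exists, and that row needs to be corrected rather than proved.
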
 
Note that for case $8)$ and $\gamma =1$, we find $J_{\gamma, \lambda} = 2\sum_j X_j X_{j+1}$ and the generated algebra is
\[
\langle \{X_1, J_{\gamma, \lambda}\}\cup \cZ(2)\rangle = \langle \cX(2) \cup \cZ(2) \cup \{X_1\} \rangle = \mf{so}(n)\oplus\mf{so}(n+1),
\]
with dimension $\frac12n(n+1)$, and for $\gamma=-1$, we find $J_{\gamma, \lambda} = 2\sum_j Y_j Y_{j+1}$ and the generated algebra is
\[
\langle \{X_1, J_{\gamma, \lambda}\}\cup \cZ(2)\rangle = \langle \cX(1) \cup \cZ(2) \rangle = \mf{so}(2n),
\]
with dimension $2n^2-n$. Thus, the change from $\gamma \in \{\pm 1\}$ to $\gamma \notin \{\pm 1\}$ changes the scaling of the dimension of the generated Lie algebra from polynomial to exponential. This phase transition matches to the operator algebra phase transition for the Ising model as studied by Araki and Matsui \cite{AM};  we discuss this point further in Appendix~\ref{app:operator_algebra}.

\begin{proof}[Proof of theorem~\ref{themagic}] We deduce 2) from the last statement of Theorem \ref{thm:XdoubleZsingle_extensions}. 

1) follows because  double commutation of $X_1+ Z_1 Z_2$ with $Z_1$ produces $Z_1Z_2$, and hence also $X_1$ by linear combination, and we are back in the situation of 2).  

For 3) we  stay in the Lie algebra generated by $\mathcal{Z}(1)\cup \mathcal{X}(2)$ which has polynomial dimension. 

4)  and 5) follow directly from Corollary~\ref{Ising}.

The case 6) and 7) starts with the algebra $\langle \mathcal{X}(2)\cup \mathcal{Z}(2) \rangle = \langle \tilde{S}_1\cup \tilde{S}_2 \rangle $ similar to Eq.~\eqref{2.20}. Let us assume $n$ is even, then 
\[  
    S_1\lel \{X_1X_2,Z_2Z_3,X_3X_4,\cdots  ,X_{n-1}X_n\}\pl \text{and}\pl \tilde{S}_1=\{X_1, Z_1 Z_2, X_2,\cdots, Z_{n/2-1}Z_{n/2}, X_{n/2}\} 
\] 
generate the same Lie algebra, Lie group and $C^*$-algebra. One can generate $\oplus_{i=1}^2\mf{su}(2^{n/2})$ by adding $Z_1$ and $X_1 X_2$ to $\tilde{S}_1$ and $X_{n/2+1}$ and $Z_{n/2+1} Z_{n/2+2}$ to $\tilde{S}_2$. Now we can use Lem.~\ref{lem:PauLie_homomorphism}, which states that if a mapping between Pauli strings is PauLie (i.e.\ it preserves the commutation relations), the generated Lie algebra is the same. Thus, to extend $S_i$, we have to add elements that satisfy the same commutation relations, from which we find that extending $S_1$ with $\{Z_1, X_1X_2X_3X_4\}$ and $S_2$ with $\{X_1, Z_1 Z_2 Z_3 Z_4\}$ will result in
\begin{equation}
    \langle S_1 \cup \{Z_1, X_1 X_2 X_3 X_4\} \cup S_2 \cup \{X_1, Z_1 Z_2 Z_3 Z_4\} \rangle = \oplus_{i=1}^2 \mf{su}(2^{n/2}).
\end{equation}
Now adding in a magnetic term $\sum_j Z_j$ to the Hamiltonian, we find 
\begin{align*}
    \langle S_1 \cup S_2 \cup \{X_1, Z_1, X_1 X_2 X_3 X_4, Z_1 Z_2 Z_3 Z_4\} \cup \cZ(1) \rangle &= \langle \cX(2) \cup \cY(2) \cup \cZ(2) \cup \{X_1, Z_1\} \rangle \\
    &=  \mf{su}(2^n)
\end{align*}
by Corollary \ref{cor:Ising-Zdouble}. The argument for odd $n$ is similar.
For 8) and 9) we refer again to Corollary \ref{cor:Ising-Zdouble}. 
\end{proof} 

\section{Algorithmic implementation} \label{sec:algorithmic_implementation}
In the previous section, we derived the Lie algebras generated by various generator sets. These sets represent the physical controls one might have on a quantum computer; the generated Lie algebra is then accessible control space, and the corresponding Lie group the set of unitaries one can actually apply to evolve the state of the physical quantum computer. To actually implement an arbitrary desired unitary operation using only the supplied physical controls is not a trivial task theoretically (not to mention experimentally), but it can be done using the aid of the existing literature, in particular the geometric work of Chow-Rashevskii.

The first step of implementing an arbitrary unitary $U$ is to pass to the Lie algebra by finding $H$ s.t. $U = \exp(iH)$. In our setting, we will always have a set of physical controls $S$ such that $U \in SU(2^n)$ and $H \in \mf{su}(2^n) = \langle \{S_j\} \rangle$. One finds a basis $\{P_k\}$ of $\langle \{S_j\} \rangle$ s.t. each basis element $P_k$ can be written as a chain of commutators of elements $S$:
\begin{equation}
    P_k = [S_{j_1}, [S_{j_2}, \cdots [S_{j_{n-1}}, S_{j_n}]]].
\end{equation}
In this basis, one can rewrite the control $H$ as a weighted sum over those carefully chosen basis elements $P_k$:
\begin{equation}
    H = \sum_k h_k P_k.
\end{equation} 
One can then implement the basis expansion of $\exp(H)$ using Trotterization and the chain of commutators using the constructive proof of the Chow-Rashevskii theory given in \cite{Giannotti2024ProvingRashevskii}. The Trotterization is implemented as
\begin{equation}
    \exp(zH) = \exp\big(z\sum_k h_k P_k\big) = \lim_{n\to \infty} \Big(\prod_k \exp(z h_k P_k/n) \Big)^n.
\end{equation}
The Chow-Rashevskii implementation can be implemented using an iterative scheme that uses
\begin{equation}
    \exp(itA)\exp(itB) = \exp\big( it(A+B) - \frac{t^2}{2}[A,B] + \cO(t^3)\big)
\end{equation}
to make a commutator by alternating $A$ and $B$ as
\begin{equation}
    \exp(-i\sqrt{t}A)\exp(-i\sqrt{t}B)\exp(i\sqrt{t}A)\exp(i\sqrt{t}B) = \exp\big(t[A,B] + \cO(t^{3/2})\big).
\end{equation}
Using a clever basis of operators, one can extend this to nested commutators using smaller roots of $t$. This is a result of the ball-box theorem in mathematics, see for example \cite{Nagel1985}. Applying the gate $\exp(-i\sqrt{t}A)$ might not be physically possible, but generally $\exp(i(2\pi - \sqrt{t})A)$ will implement the same gate in the periodic case. 

To be able to apply this iterative scheme, one does need an explicit chain of nested commutators for each of the basis elements. If the generators are a subset of all Pauli strings, then Lemma \ref{lem:paulis_closed} shows that one can use the Pauli strings as their basis $P_k$ of $\mf{su}(2^n)$, as then such a chain of nested commutators will always exist. If the generators are not Pauli strings, the approach above will still work, but one might need a different basis when the Pauli strings are not be expressible as a nested chain of commutators. A simple example of this is the Lie algebra generated by $X+Y$ and $X-Y$, which is $\mf{su}(2)$. This Lie algebra contains the Pauli string $X$, but this element can not be written as a nested commutator of $X+Y$ and $X-Y$ without taking linear combinations.

We first consider the settings of Theorem \ref{thm:XdoubleZsingle_extensions}, where we are able to give an explicit algorithm to write any Pauli string as a chain of nested commutators using the elements of the generator set.
\begin{theorem}\label{thm:commutators_algorithmically}
    In the settings of Theorem \ref{thm:XdoubleZsingle_extensions}, any Pauli string can be written as a chain of commutators using the generators
    \begin{equation}
        \langle \cX(2) \cup \cZ(1) \cup \{X_1\} \cup \{Z_1 Z_2\} \rangle = P_n \backslash \{I \},
    \end{equation}
    where the chain has length at most $4n^3$ and can be found algorithmically.
\end{theorem}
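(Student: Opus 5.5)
The plan is to give an explicit constructive procedure that, given a target Pauli string $P\neq I$, produces a nested commutator $[S_{j_1},[S_{j_2},\cdots[S_{j_{k-1}},S_{j_k}]]]\simeq P$ with all $S_{j}\in \cX(2)\cup\cZ(1)\cup\{X_1,Z_1Z_2\}$. The procedure is assembled from the reversible moves $B\stackrel{A}{\longleftrightarrow}C$ already used in the proofs of Propositions~\ref{prop:single-double}, \ref{prop:X2_Y2_Z2} and Theorem~\ref{thm:XdoubleZsingle_extensions}, and the lengths are controlled by Definition~\ref{def:length_function}. The key structural point is that each move commutes the current Pauli string with a single generator, so the result is automatically a left-nested chain, which is the form required by the Chow--Rashevskii scheme. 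No use of Lemma~\ref{additive} (which would produce non-nested brackets) is needed.

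\emph{Step 1: local gadgets with constant length.} We first record the local two-qubit moves. The transposition chain $ZI \stackrel{XX}{\longleftrightarrow} YX \stackrel{YY}{\longleftrightarrow} IZ$ swaps a local letter with a neighbouring identity. Here $Y_kY_{k+1}$ is not itself a generator, so each use of it is replaced by the constant-length conjugation $X_kX_{k+1}\stackrel{Z_k}{\longleftrightarrow}Y_kX_{k+1}\stackrel{Z_{k+1}}{\longleftrightarrow}Y_kY_{k+1}$. Since this realises $YY$ only as an element and not as a single generator, it has to be reorganised: applying $Y_kY_{k+1}$ to a string $Q$ is equivalent to conjugating by $Z_k,Z_{k+1}$, applying $X_kX_{k+1}$, and undoing the conjugation. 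For Pauli strings, commuting with $Z_k$ twice returns the string up to a scalar, so these ``conjugations'' are just extra commutations with $Z_k$ on the current string. The result is a constant number $c$ (say $c\le 8$) of generator applications per adjacent swap. Similarly, a single $Z_k$ is created from $Z_1$, and $Z_kZ_{k+1}$ from $Z_1Z_2$, by transporting along the chain; this costs $O(k)$ moves. Letters are rotated on site within $\{X,Y\}$ via $Z_k$, and between $X$ or $Y$ and $Z$ via $X_1$ at site $1$ followed by transport.

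\emph{Step 2: the global algorithm.} Write the target as $P=V_1\otimes\cdots\otimes V_n$. We build $P$ from right to left. We maintain a current string that agrees with $P$ on sites $j+1,\dots,n$ and is a \emph{seed} on sites $1,\dots,j$: a single letter, or a pair $Z_1Z_2$, or the $X_1$-type letter sitting at site $1$. To place $V_j$, we generate the appropriate seed letter at site $1$ (for instance $X_1$, $Y_1$ or $Z_1$, with a $Z_1Z_2$ or $X_1X_2$ seed when parity requires it) and transport it to site $j$ by at most $n$ adjacent swaps, each of cost $c$. We then merge it with the existing string using one commutator with $X_{j}X_{j+1}$ or $Z_j$, following Lemma~\ref{lem:paulis_closed}, so that the bracket is nonzero and equals the product up to a scalar. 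The swaps must not disturb the already-fixed suffix, and this holds because they only act on sites $\le j$, where the current string has identities or the seed. Per site this costs $O(n)$ swaps, and each swap costs at most $O(n)$ when it involves transporting auxiliary $Z$'s. Over $n$ sites the total is $O(n^3)$, and careful bookkeeping of constants should give $\le 4n^3$. Every choice is determined by the local letters, so the procedure is manifestly algorithmic.

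\emph{Main obstacle.} The hard step is the merge: guaranteeing at each stage that the chosen generator anticommutes with the current string, so that the bracket is nonzero by \eqref{comP}, while still producing exactly the desired product, and handling the parity classes $k_x+k_y$ odd versus even. Odd parity is only reachable through $X_1$, as in part iii) of Theorem~\ref{thm:XdoubleZsingle_extensions}, and the special string $Z^{\otimes n}$ needs the same treatment. I would handle this by a case table over the pair (current letter at site $j$, target letter $V_j$), checking anticommutation directly. Where a direct merge yields zero, I would insert one extra $Z_j$ rotation to flip $X\leftrightarrow Y$ first, which costs only a constant and keeps the cubic bound.
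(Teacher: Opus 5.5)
There is a genuine gap, and it is in Step~2, which carries the whole content of the theorem.

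\emph{Transport of $X$ and $Y$.} Your transport step assumes a constant-cost adjacent swap for every letter. The chain $ZI\stackrel{XX}{\longleftrightarrow}YX\stackrel{YY}{\longleftrightarrow}IZ$ exists only for $Z$. For $k\ge 2$, the generators supported on $\{k,k+1\}$ are $X_kX_{k+1},Z_k,Z_{k+1}$, and they move $X_kI_{k+1}$ only inside $\{X_kI_{k+1},Y_kI_{k+1},Z_kX_{k+1},Z_kY_{k+1}\}$. So an $X$ or $Y$ can move right only by leaving a Jordan--Wigner $Z$ behind, and ``transport $X_1$ or $Y_1$ to site $j$ by adjacent swaps'' is not available.

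\emph{The seed.} A nested chain has a single current string, and each step multiplies it by a generator that anticommutes with it. Your proposal never says how the seed survives once a letter has been carried away from it. If the whole seed is moved to site $j$, then sites $1,\dots,j-1$ are identities and every generator supported near site~$1$ commutes with the string. There is then nothing from which to ``generate the next seed letter at site 1''.

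\emph{The merge.} The merge is not well defined in this model. Commuting with $X_jX_{j+1}$ changes the already fixed letter at site $j+1$, which contradicts your invariant. $Z_j$ can only rotate $X\leftrightarrow Y$. A bracket of a separately built letter at site $j$ with a string carrying $I$ there vanishes. You rightly flag this as the main obstacle, but deferring it to an unconstructed case table leaves the theorem unproved, and the bound $4n^3$ is only asserted.

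\emph{What the paper does instead.} The paper fills exactly this hole with two gadgets.
\begin{enumerate}
\item It keeps $Z_1$ as a permanent source and emits new $Z$'s without consuming it, via $Z_1I_2W\stackrel{X_1}{\longleftrightarrow}Y_1I_2W\stackrel{Z_1Z_2}{\longleftrightarrow}X_1Z_2W\stackrel{Z_1}{\longleftrightarrow}Y_1Z_2W\stackrel{X_1}{\longleftrightarrow}Z_1Z_2W$. It then moves each $Z$ one site right with $X_kX_{k+1},Z_k,Z_{k+1},X_kX_{k+1}$.
\item To place an $X$ or $Y$ at site $k$, it first lays down $Z_1I_2Z_3\cdots Z_k$ and injects an $X$ at site $2$ with $X_1X_2$ and $X_1$. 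That $X$ then eats the $Z$-string through alternating $X_jX_{j+1}$ and $Z_{j+1}$, ending at $Z_1I_2\cdots I_{k-1}X_kW_{\ge k+1}$ (or $Y_k$).
\end{enumerate}
Running this from $k=n$ down to $3$ and finishing sites $1,2$ with full two-qubit control costs $O(k^2)$ per site, hence $\tfrac23 n^3+O(n^2)$ in total.

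\emph{Two smaller points.} First, your $Y_kY_{k+1}$ ``conjugation'' fails. After applying $X_kX_{k+1}$ the string is $I_kZ_{k+1}$, which commutes with both $Z_k$ and $Z_{k+1}$, so the undo step gives zero. Dropping the undo gives the paper's four-step swap. Second, Lemma~\ref{additive} does not create non-nested brackets. Since $l$ counts nested chains, it says that for Pauli strings a bracket of two nested chains is, up to a scalar, a single nested chain of at most the summed length. That is precisely what you would need to justify any merge.
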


\begin{proof}
We provide an explicit algorithm, which we build using three local subroutines. Each subroutine starts with a specific type of Pauli string and applies a sequence of commutators with elements of the generator set to change the Pauli string. We will denote $W_{\leq k}$ for an arbitrary Pauli string on indices $1,2,\ldots, k$ and $W_{\geq k}$ for an arbitrary Pauli string on indices $k, k+1, \ldots, n$. The identity operator $I$ acts on any indices not mentioned.

The three routines are
\begin{equation*}
\begin{aligned}
    1)&& \quad W_{\leq k-1} Z_k I_{k+1} W_{\geq k+2} &\to W_{\leq k-1} I_k Z_{k+1} W_{\geq k+2}, \\
    2)&& \quad Z_1 I_2 W_{\geq 3} & \to Z_1 Z_2 W_{\geq 3}, \\
    3) &&\quad Z_1 W_{\geq k+1} & \to Z_1 X_k W_{\geq k+1}.
\end{aligned}
\end{equation*}
The first subroutine interchanges a $Z$ operator in a Pauli chain with a neighbouring $I$ operator and the second subroutine adds an extra $Z$ operator at the beginning of the Pauli string. These two subroutines together allow us to create any Pauli string consisting of only $I$ and $Z$. The third subroutine makes a $X$ or $Y$ at position $k$ without touching any of the elements past index $k$.

These three subroutines together allow us to make any Pauli string by starting with $Z_1$ and then creating any Pauli string iteratively from highest to lowest index. If the Pauli string has leading $I$'s, we can get rid of the final $Z_1$ as well by either simply shifting it over with the first subroutine if the leading non-identity $\alpha_k = Z$ or using a slightly modified version of the third subroutine that still starts with $Z_1$ but makes $X_k$ instead of $Z_1 X_k$. If there are no leading $I$'s, the string on the first two indices $\alpha_1 \alpha_2$ can be fixed using commutators with $X_1, Z_1, Z_2, X_1 X_2,$ and $Z_1 Z_2$, as these operators fully control the first two qubits.

The first subroutine starts with $W_{\leq k-1} Z_k I_{k+1} W_{\geq k+2}$. Taking the commutators
\begin{equation}
    Z_k I_{k+1} \stackrel{X_k X_{k+1}}{\longleftrightarrow} Y_k X_{k+1} \stackrel{Z_k}{\longleftrightarrow} X_k X_{k+1} \stackrel{Z_{k+1}}{\longleftrightarrow}  X_k Y_{k+1} \stackrel{X_k X_{k+1}}{\longleftrightarrow} I_k Z_{k+1},
\end{equation}
the subroutine shifts one $Z$ one index at the cost of 4 commutations, which is the same as adding 4 to the length function $l$ defined in Definition \ref{def:length_function}.

The second subroutine starts with $Z_1 I_2 W_{\geq 3}$ and uses a sequence of commutations
\begin{equation}
    Z_1 I_2 W_{\geq 3} \stackrel{X_1}{\longleftrightarrow} Y_1 I_2 W_{\ge 3} \stackrel{Z_1 Z_2}{\longleftrightarrow} X_1 Z_2 W_{\ge 3} \stackrel{Z_1}{\longleftrightarrow}Y_1 Z_2 W_{\ge 3} \stackrel{X_1}{\longleftrightarrow} Z_1 Z_2 W_{\geq 3}.
\end{equation}
to increase the number of $Z$ matrices in the Pauli chain by 1 at the cost of 4 commutations. Together with the first subroutine, this allows us to make any string consisting of only $I$ and $Z$. The cost of creating $Z_k$ instead of $I_k$ is 4 commutations to make an extra $Z$ at index 2, and $4(k-2)$ to shift the index over from $Z_2$ to $Z_k$, for a total of $4(k-1)$ commutations.

The third and final subroutine creates an element $Z_1 Y_k W_{\geq k+1}$ or $Z_1 X_k W_{\geq k+1}$ from $Z_1 W_{\geq k+1}$. 
The subroutine starts with $Z_1 I_2 Z_3 Z_4 \cdots Z_k W_{\geq k+1}$, which can be made from $Z_1 W_{\geq k+1}$ using the previous two subroutines for $\sum_{j=3}^k 4(k-1) = 2k(k-1)-4$ commutations. Then the subroutine adds $X$ to the second index via the commutators
\begin{equation}
    Z_1 I_2 Z_3 Z_4 \cdots Z_k W_{\geq k+1} \stackrel{X_1 X_2}{\longleftrightarrow} Y_1 X_2 Z_3 Z_4 \cdots Z_k W_{\geq k+1} \stackrel{X_1}{\longleftrightarrow} Z_1 X_2 Z_3 Z_4 \cdots Z_k W_{\geq k+1}.
\end{equation}
After that, the subroutine shifts the $X$ to the end of the chain of $Z$'s by eating up the $Z$ matrices. This is implemented by commuting with $X_j X_{j+1}$ and $Z_{j+1}$ for $j = 2, 3 \cdots k-1$ to get
\begin{equation}
\begin{aligned}
        Z_1 X_2 Z_3 \cdots Z_kW_{\geq k+1}  &\stackrel{X_2 X_3}{\longleftrightarrow} Z_1 I_2 Y_3 \cdots Z_kW_{\geq k+1}  \stackrel{Z_3}{\longleftrightarrow} Z_1 I_2 X_3 Z_4 \cdots Z_kW_{\geq k+1} \\
        &\cdots  \\
        &\stackrel{X_{k-1} X_k}{\longleftrightarrow} Z_1 I_2 \cdots I_{k-1} Y_k W_{\geq k+1} \stackrel{Z_k}{\longleftrightarrow} Z_1 I_2 \cdots I_{k-1} X_k W_{\geq k+1}. 
\end{aligned}
\end{equation}
Note that applying the last $Z_k$ instead yields  $Z_1 Y_k W_{\geq k+1}$. In total, the cost of creating $Z_1 X_k W_{\geq k+1}$ from $Z_1 W_{\geq k+1}$ is $2k(k-1) - 4 + 2 + \sum_{j=2}^{k-1} 2 = 2k^2-4$ and the cost of creating $Z_1 Y_k W_{\geq k+1}$ from $Z_1 W_{\geq k+1}$ is $2k^2-5$.

This cost calculation does not change if the leading symbols are identities. In that scenario, we can use a modified version of the third subroutine, where we start with
\begin{equation}
    I_1 X_2 Z_3 Z_4 \cdots Z_k W_{\geq k+1}.
\end{equation}
Applying then the commutators with $X_j X_{j+1}$ and $Z_{j+1}$ for $ k = 2, \cdots, k-1$ results in
\begin{equation}
    I_1 I_2 \cdots I_{k-1} X_k W_{\geq k+1}.
\end{equation}

Note that this approach is not necessarily the most efficient one in how the commutators are applied at the first 2 indices, but gives a bound on the necessary number of commutators. The most expensive Pauli string under this algorithm consists of $X_1 X_2 \cdots X_n$ and costs
\begin{equation}
    2+\sum_{k=2}^n (2k^2-4) = \frac23 n^3 +O(n^2).
\end{equation}
See Appendix \ref{app:alg-computation} for a pseudocode writeup of the described approach.\end{proof}

For this generator set, we thus need at most $O(n^3)$ nested commutators to exactly implement any Pauli string. Comparing this to the natural control methods of typical physical quantum computers, one can see a scaling difference. On a quantum computer, one generally has full control over the individual qubits and some entanglement operation, which corresponds to generators similar to $\cZ(1), \cX(1)$ and $\cZ(2)$. Using commutators of these generators, one can make any Pauli string up to length $k$ in $O(k)$ number of commutators. So what our magic Hamiltonian wins in the reduced number of generators for full expressibility, it loses in total control time. Whether this tradeoff is desirable or not will of course depend on the physical system, ease of evolving with the magic Hamiltonian, etc. We discuss this point further in the last Section.

\begin{corollary}
    One can find an explicit algorithm for Corollary \ref{Ising} (Case (5) in theorem \ref{thm:controlsets}), with resource set $\langle J_{\gamma, \lambda} \cup \{X_1, Z_1 Z_2\} \cup \cZ(1)\rangle$, by adding a fourth subroutine that does the translation from $\langle \{J_{\gamma, \lambda}, X_1, Z_1 Z_2 \}, \cZ(1) \rangle$ to $\langle \{X_1, Z_1 Z_2 \}, \cZ(1), \cX(2) \rangle$.
\end{corollary}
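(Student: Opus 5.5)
The plan is to reduce to Theorem~\ref{thm:commutators_algorithmically}. That theorem already produces, for every Pauli string, an explicit nested-commutator chain of length at most $\frac23 n^3+O(n^2)$ in the generators $\cX(2)\cup\cZ(1)\cup\{X_1,Z_1Z_2\}$. The only generators in that chain not in the new resource set are the elements $X_kX_{k+1}$ of $\cX(2)$. So the fourth subroutine has a single job: for each $k$, output an explicit commutator-and-linear-combination recipe producing $X_kX_{k+1}$ from $J_{\gamma,\lambda}$, $X_1$, $Z_1Z_2$ and $\cZ(1)$. Every occurrence of $X_kX_{k+1}$ in the chains of the three earlier subroutines is then replaced by this recipe.

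I would build the fourth subroutine by following the induction in the proof of Corollary~\ref{Ising}.

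\begin{enumerate}
\item[(a)] For the base case, use
\[ J_{\gamma,\lambda}\stackrel{Z_1}{\longrightarrow}(1+\gamma)Y_1X_2-(1-\gamma)X_1Y_2\stackrel{X_1}{\longrightarrow}Z_1X_2\stackrel{X_1}{\longleftrightarrow}Y_1X_2\stackrel{Z_1}{\longleftrightarrow}X_1X_2, \]
which costs $4$ commutators.
\item[(b)] For the inductive step, pass from $X_kX_{k+1}$ to $X_kY_{k+1}$ using $Z_{k+1}$. Then take a double commutator of $J_{\gamma,\lambda}$ with $Z_{k+1}$ and $Z_{k+2}$. This isolates $(1+\gamma)Y_{k+1}Y_{k+2}+(1-\gamma)X_{k+1}X_{k+2}$, because every other summand of $J$ commutes with one of these two $Z$'s. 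Finally, commute twice with $X_kY_{k+1}$ to obtain $X_{k+1}X_{k+2}$.
\end{enumerate}

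Writing $c_k$ for the cost of producing $X_kX_{k+1}$, the inductive step gives $c_{k+1}\le 2c_k+O(1)$. The reason is that $X_kY_{k+1}$ is itself a nested commutator and is used twice. This naive recursion is exponential.

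The main obstacle is that the object in step (b) is not a single nested commutator chain: it uses a previously built element as the outer argument. Theorem~\ref{thm:commutators_algorithmically} measures length in the sense of Definition~\ref{def:length_function}, so I would check that the recipe still fits this framework via subadditivity, Lemma~\ref{additive}: $l([P,Q])\le l(P)+l(Q)$. That gives $l(X_{k+1}X_{k+2})\le 2\,l(X_kY_{k+1})+2$. The resulting doubling is wasteful, so I would first try to avoid it.

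The first fix I would try is to apply the translation only once per needed $X_kX_{k+1}$. Since the output is a Lie-algebra element, the chain for $X_kX_{k+1}$ can be precomputed and treated as a black box. I would then redo the induction using a single-use version. For instance, $X_{k+1}X_{k+2}$ might be reached by commuting $(1+\gamma)Y_{k+1}Y_{k+2}+(1-\gamma)X_{k+1}X_{k+2}$ with $Z_{k+1}$ and then with an already available Pauli string built from $\cZ(1)$ and earlier $X_jX_{j+1}$. I would track the linear combinations so that the coefficients $(1\pm\gamma)$ separate for $\gamma\ne\pm1$; this is where the non-degeneracy assumptions would enter. If this works, I expect a polynomial overhead, giving total length $O(n^4)$ or better. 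Failing that, I would still state the corollary with the exponential-in-$k$ translation cost, which keeps it algorithmic as claimed.

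Finally, I would verify that the concatenated procedure terminates and outputs every Pauli string. This is immediate from Corollary~\ref{Ising} and Theorem~\ref{thm:commutators_algorithmically}.
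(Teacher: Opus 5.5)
Your main route is the same as the paper's ``most straightforward'' fourth subroutine: you replace each occurrence of $X_kX_{k+1}$ in the chains of Theorem~\ref{thm:commutators_algorithmically} by the recursion from the proof of Corollary~\ref{Ising}, and accept a cost exponential in $k$. That is enough for the existence claim. However, the recursion you copied breaks down at $\gamma=\pm1$, while Corollary~\ref{Ising} is asserted for all $\gamma,\la$.

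\emph{Case $\gamma=-1$.} The base step gives $[Z_1,J_{-1,\la}]\propto X_1Y_2$, and then $[X_1,X_1Y_2]=0$. The reason is that the $Y_1X_2$ term carries the factor $1+\gamma$.

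\emph{Case $\gamma=1$.} Here $[Z_{k+2},[Z_{k+1},J_{1,\la}]]$ is a multiple of $Y_{k+1}Y_{k+2}$ alone, and $[X_kY_{k+1},Y_{k+1}Y_{k+2}]=0$. So step (b) yields nothing.

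The paper handles these two values separately. There the bond terms of $J$ are pure $XX$ or pure $YY$, so $[Z_{k+1},[Z_k,J]]$ is already a multiple of $Y_kY_{k+1}$ (for $\gamma=1$) or of $X_kX_{k+1}$ (for $\gamma=-1$). Two further commutators with $Z_k,Z_{k+1}$ switch between the two. For every other $\gamma$, including $\gamma=0$, your recursion works as stated.

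Your single-use repair for the blow-up is not worked out, so as written you only obtain the exponential subroutine. That is all the statement needs. The paper gets polynomial overhead by a change of basis rather than a cheaper recursion. For $\gamma\notin\{-1,0,1\}$ it produces
$a_k=(1+\gamma)X_kX_{k+1}+(1-\gamma)Y_kY_{k+1}$
and the flipped element
$a_k'=(1-\gamma)X_kX_{k+1}+(1+\gamma)Y_kY_{k+1}$,
the latter via a double commutator with $a_{k+1}$, each at a cost independent of $k$. These span the same space as $X_kX_{k+1},Y_kY_{k+1}$, so the algorithm of Theorem~\ref{thm:commutators_algorithmically} can be run in that basis.

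With all of $\cZ(1)$ available this is even simpler:
\begin{itemize}
\item $[Z_{k+1},[Z_k,J_{\gamma,\la}]]=-4a_k'$;
\item applying the same two commutators to $a_k'$ gives $-4a_k$;
\item $(1+\gamma)a_k-(1-\gamma)a_k'=4\gamma X_kX_{k+1}$.
\end{itemize}
So the condition for separating the coefficients is $\gamma\neq0$, not $\gamma\neq\pm1$. Only $\gamma=0$ is left to your recursion through $X_1$.
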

\begin{proof}
    The fourth subroutine can be set up in various ways. In the most straightforward setting, one can follow along with the proof of Corollary \ref{Ising} and use iterated commutators to create $X_k X_{k+1}$ from $J_{\gamma, \lambda}$. However, since this is done iteratively and uses $X_k X_{k+1}$ twice to create $X_{k+1} X_{k+2}$, the commutator cost of a single $X_k X_{k+1}$ becomes exponential in $k$. 
    
    There is a way to keep the number of nested commutators polynomial, depending on $\gamma$. For $\gamma \in \{-1,1\}$, one can create each element of $\cX(2)$ and $\cY(2)$ directly by commuting with the appropriate $Z_i$. For $\gamma \notin \{-1,0,1\}$, one can instead create each $a_k \coloneqq (1+\gamma) X_k X_{k+1} + (1-\gamma) Y_k Y_{k+1}$ and the flipped version $a_k' \coloneqq (1-\gamma) X_k X_{k+1} + (1+\gamma) Y_k Y_{k+1}$ by 
    \begin{equation}
        a_k \stackrel{a_{k+1}}{\longrightarrow} X_k Z_{k+1} Y_{k+2} - Y_k Z_{k+1} X_{k+2} \stackrel{a_{k+1}}{\longrightarrow} (1-\gamma) X_k X_{k+1} + (1+\gamma) Y_k Y_{k+1} \eqqcolon a_k'.
    \end{equation}
    Now $a_k$ and $a_k'$ span the same space as $X_k X_{k+1}$ and $Y_k Y_{k+1}$, so one can set up the same polynomial algorithm but in a different basis build around the $a_k$ and $a_k'$.
\end{proof}

The basis-transformed algorithm based on $a_k$ and $a_k'$ will also work for the setting of Corollary~\ref{cor:Ising-Zdouble} with resource set $\langle J_{\gamma, \lambda} \cup \{X_1\} \cup \cZ(2)\rangle$. One can create $a_k$ directly by commuting with appropriate elements of $\cZ(2)$ and then create $a_k'$ in the same way as above.

\section{Alternating approximation length and covering numbers}\label{sec:approx_length}
In this section we make more precise our previous observation that a magic Hamiltonian has to be used often to approximate an arbitrary unitary. We first define our length measure.
\begin{defi}\label{def:alt_approx_length}
    Given a local Lie algebra $G_0$ and a magic Hamiltonian $H$, we define the alternating approximation length $l_H(\epsilon|G_0)$ as
    \begin{equation}
        l_H(\eps|G_0)\equiv\inf\{m ~:~ \forall u \in U(2^n) ~\exists u_{2j}=e^{it_jH} , u_{2j+1}\in G_0 \mbox{ such that } \|u-u_1u_2\cdots u_{2m+1}\|<\eps  \}.
    \end{equation}
\end{defi}
When the context is clear, we may just write $l_H$ or $l_H(\eps)$. There are other canonical measures $l_H^p(\eps|G_0)=\inf (\sum_j |t_j|^p)^{1/p}$ where the infimum is taken over the same configurations as above. The approximation length corresponds to the $l^0$-norm, measuring the cardinality of the support of the sequence $(t_2,...,t_{2m})$. 

\subsection{Facts about covering numbers for groups}

The topological notion of covering numbers will prove a crucial tool in our results. For a compact subset $G$ in a normed space (or metric space), the covering number is defined as    
  \[ N(G,\eps) \lel \inf\{|S| ~:~ \exists_{S} \forall_{u\in G} \exists_{s\in S}  \text{ s.t. }d_{\|\pl \|}(u,s)<\eps\} \pl .\] 
Every finite set $S$ satisfying  this covering condition is called an  $\eps$-net. 

\begin{theorem}(Szarek \cite{Szarek1998MetricSpaces})\label{Szarek}  Let $d$ be the real dimension of $G$. When $G$ is one of $O(M)$ or $U(M)$, then 
  \[   (c/\eps)^d \kl N(G,\eps) \kl (C/\eps)^d \] 
 holds for universal constants $c,C>0$.  
 \end{theorem}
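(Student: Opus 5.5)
The plan is a volumetric argument in the Lie algebra, transported to the group by the exponential map. Let $G$ be $O(M)$ or $U(M)$ with the operator norm distance, and let $d=\dim G$. I will treat the two bounds separately, and both rest on one local fact: near the identity, the ball $B_G(1,r)$ in the group is comparable to the image under $\exp$ of the operator-norm ball of radius $r$ in the Lie algebra $\mf{g}$. The Lie algebra is $\mf{o}(M)$, the skew-symmetric matrices, or $\mf{u}(M)$, the skew-Hermitian ones.

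\textbf{Step 1: exponential chart.} For skew-Hermitian $A,B$ with $\|A\|,\|B\|\le 1$, I will show
\[ c_1\|A-B\|\kl \|e^A-e^B\|\kl \|A-B\|. \]
The upper bound comes from the Duhamel formula together with unitarity of $e^{sA}$. The lower bound comes from inverting with the principal logarithm, which is Lipschitz on unitaries whose spectrum avoids $-1$. Bi-invariance of the operator norm under left and right multiplication by unitaries then makes every ball $B_G(g,r)$ a translate of $B_G(1,r)$.

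\textbf{Step 2: upper bound.} I take a maximal $\eps$-separated set $S\subset G$. It is automatically an $\eps$-net, and the balls $B_G(s,\eps/2)$ are pairwise disjoint. I compare Haar measure $\mu$ of these balls with the total mass $1$. Pulling back to $\mf{g}$ through $\exp$, which has Jacobian bounded above and below near $0$, gives $\mu(B_G(1,r))\ge (c_2 r)^d\,\mathrm{vol}(B_{\mf g})$ for $r\le 1$. Here $B_{\mf g}$ is the unit operator-norm ball in $\mf g$, and volume is normalized so that $\mathrm{vol}(\exp(B_{\mf g}))\sim\mu(G)$.

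\textbf{Step 3: lower bound.} Any $\eps$-net satisfies $|S|\,\mu(B_G(1,\eps))\ge 1$, so it suffices to show $\mu(B_G(1,\eps))\le (C_2\eps)^d$ times the same normalization.

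\textbf{The main obstacle.} The key difficulty is making the constants $c,C$ universal, that is, independent of $M$. Naively, the exponential Jacobian and the normalization $\mu(G)$ versus $\mathrm{vol}(B_{\mf g})$ contribute factors $K^{d}$ with $K$ possibly depending on $M$.

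\emph{Jacobian.} On the ball of radius $1$, the Jacobian of $\exp$ is $\det\bigl(\frac{1-e^{-\mathrm{ad}A}}{\mathrm{ad}A}\bigr)$. Its eigenvalues are $\frac{1-e^{-i\theta}}{i\theta}$ with $|\theta|\le 2$, so each factor lies in $[c,1]$, giving $c^d\le J\le 1$. This is uniform in $M$.

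\emph{Normalization.} I need the ratio $\mu(G)^{1/d}/\mathrm{vol}(B_{\mf g})^{1/d}$ to be bounded above and below by absolute constants. I plan to use the fact that $\exp(\pi B_{\mf g})=G$. This holds because every unitary or orthogonal matrix has a skew logarithm with norm at most $\pi$. It gives $\mathrm{vol}(\pi B_{\mf g})\ge\mu(G)$ up to Jacobian factors, which are controlled by the same computation. Away from the cut locus the Jacobian is still bounded below by a constant to the power $d$. Near eigenvalue $\pm\pi$ the argument needs care, and there I would use that the relevant Jacobian factors $\bigl|\frac{\sin(\theta/2)}{\theta/2}\bigr|$ are bounded above by $1$, which suffices for this direction.

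For the other direction, $\exp(B_{\mf g})\subset G$ with Jacobian $\ge c^d$ gives $\mathrm{vol}(B_{\mf g})c^d\le\mu(G)$.

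Combining: all comparisons reduce to $\mu(B_G(1,r))/\mu(G)\approx(r/\pi)^d$ up to $K^d$ with absolute $K$. This yields both $(c/\eps)^d$ and $(C/\eps)^d$. For $\eps\ge 1$ the statement is trivial after adjusting constants, since the diameter of $G$ is at most $2$.
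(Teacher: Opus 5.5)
The paper gives no proof of this statement; it only cites Szarek. The ingredients it records in Remark~\ref{Lipschitz} (and uses in Proposition~\ref{finite}) point to a purely metric argument carried out in the Lie algebra.

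\emph{The metric route.} For $G=U(M)$ or $SO(M)$ one has $\|e^A-e^B\|\le\|A-B\|$ for skew $A,B$, and $\exp(\pi B_{\mf g})=G$. So an $\eps$-net of $\pi B_{\mf g}$ in the $d$-dimensional normed space $\mf g$ is pushed onto an $\eps$-net of $G$, giving $N(G,\eps)\le(1+2\pi/\eps)^d$ with no volume computation in $G$. For the lower bound, push forward a maximal $5\eps$-separated subset of $\frac{\pi}{4}B_{\mf g}$ using the two-point estimate $\|e^{ia}-e^{ib}\|\ge 0.4\|a-b\|$ (Szarek's Lemma~4). This gives $N(G,\eps)\ge(\pi/(20\eps))^d$.

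\emph{Your route.} Your Haar-measure argument is genuinely different, and it does close with dimension-free constants. Since $J(A)=\prod|\sin(\phi/2)/(\phi/2)|$ is at most $1$ everywhere and at least $(\sin 1)^d$ on $B_{\mf g}$, the area formula gives $(\sin 1)^d\,\mathrm{vol}(B_{\mf g})\le\tilde\mu(G)\le\pi^d\,\mathrm{vol}(B_{\mf g})$. The inclusions $\exp(rB_{\mf g})\subset B_G(1,r)\subset\exp(\frac{\pi}{2}rB_{\mf g})$ then give $(r\sin 1/\pi)^d\le\mu(B_G(1,r))\le(\pi r/(2\sin 1))^d$ for $r\le 1$.

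\emph{What each buys.} By bi-invariance, the only inverse estimate you ever need is at the identity, $\|\log g\|\le\frac{\pi}{2}\|g-1\|$, which is immediate from the spectral theorem. The metric route instead needs a dimension-free two-point co-Lipschitz bound for $\exp$, an operator-Lipschitz statement that is not automatic. In exchange, the metric route avoids the Jacobian of $\exp$, the area formula, and the global normalization of $\tilde\mu(G)$ against $\mathrm{vol}(B_{\mf g})$ that you flagged as the main obstacle.

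\emph{Points to repair.}
\begin{enumerate}
\item[(1)] Your claim that every orthogonal matrix has a skew logarithm is false. Since $\det e^A=e^{\tr A}=1$, we get $\exp(\mf{o}(M))=SO(M)$. The fix is harmless: $\tilde\mu(O(M))=2\tilde\mu(SO(M))$ costs a factor $2\le 2^d$, and balls of radius less than $2$ about $1$ lie in $SO(M)$. But it must be said.
\item[(2)] Your Step~1 justification of the two-point lower bound (``the principal logarithm is Lipschitz on unitaries whose spectrum avoids $-1$'') does not give an operator-norm constant independent of $M$. Scalar Lipschitz functions need not be operator Lipschitz, and the constant degenerates near $-1$. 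You never use that version, only the inclusion at the identity, so simply drop it.
\item[(3)] The cut-locus discussion is unnecessary: $J\le 1$ holds globally, and $\tilde\mu(\exp E)\le\int_E J$ holds without injectivity.
\item[(4)] If net centers may lie outside $G$, as the paper's definition allows, replace $B(s,\eps)\cap G$ by some $B_G(g,2\eps)$ in Step~3.
\end{enumerate}
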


\begin{rem}\label{Lipschitz} Let us state the main inequalities for the exponential map. The upper estimate  
 \begin{equation}\label{up}
  \|e^{a}-e^{b}\|\kl \|a-b\| (1+e^{\|a\|}e^{\|b\|})
  \end{equation} 
holds in Banach algebras. For the unitary group we may assume $u=e^{ia}$ with $\|a\|_{\infty}\le \pi$ and hence $(1+e^{2\pi})$ is a bound for the  Lipschitz constant. The lower estimate
 \[ \frac{\|e^{ia}-e^{ib}\|}{\|a-b\|} \gl 0.4 \]
was proved in \cite[Lemma 4]{Szarek1998MetricSpaces} assuming that $\|a\|_{\infty} \le \frac{\pi}{4}, \|b\|\le \frac{\pi}{4}$, using the fact that  $\|e^{ia}-1\|
\le \|a\|$ for $\|a\|\le \pi$. This shows that in a fixed neighbourhood of the identity the exponential map is bi-Lipschitz for what might be called `full groups' such as $SO(n)$ and $SU(n)$. For general Lie subgroups $G\subset U(n)$ we only have the lower estimate 
\[ (\frac{c}{\eps})^d \kl N(G_0,\eps)  \pl. \]
Unfortunately, we usually  need an upper bound for the local Lie groups.  
\end{rem} 

We need a few basic facts about the covering number in metric spaces. 

\begin{prop}\label{basic} Let $\eps\le 1$. 
 \begin{enumerate}
  \item[i)] Let $K_1\subset K_2$ be a subset, then $N(K_1,\eps)\kl N(K_2,\frac{\eps}{2})$
  \item[ii)]  Let $g_n$ be the Lie algebra of $n$ commuting Pauli strings, whose Lie group is the torus $\mathbb{T}^n$. Then   \[ N(g_n ,\eps) \kl (\frac{C}{\eps})^{2n} \pl. \]
  \item[iii)] For every Hamiltonian $H\in \mz_{2^n}$ and the one-parameter group $G_H=\{e^{itH}|t\in \rz\}$,
   \[ N(G_H,\eps) \kl (\frac{C}{\eps})^{2^{n+1}} \pl. \]    
    \item[iv)] Let $g=g_1\oplus \cdots \oplus g_m$ be a Lie algebra with commuting sectors $[g_j,g_k]=0$. Then    
    \[ N(G,\eps) \kl \prod_j N(G_j,\frac{\eps}{m}) \pl. \]
 \end{enumerate}
\end{prop}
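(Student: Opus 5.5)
We prove the four items separately. Each one reduces to an explicit net together with the triangle inequality in operator norm.

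For i), take an $\frac{\eps}{2}$-net $S_2$ of $K_2$ with $|S_2|=N(K_2,\frac{\eps}{2})$. For every $s\in S_2$ whose $\frac{\eps}{2}$-ball meets $K_1$, choose one point $k_s\in K_1$ in that ball, and let $S_1$ be the set of these points. Given $u\in K_1$, there is $s\in S_2$ with $d(u,s)<\frac{\eps}{2}$. That ball therefore meets $K_1$, so $k_s$ exists, and
\[ d(u,k_s)\le d(u,s)+d(s,k_s)<\eps . \]
Thus $S_1\subset K_1$ is an $\eps$-net of $K_1$, and $|S_1|\le |S_2|$.

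For ii), the $n$ commuting Pauli strings can be simultaneously diagonalized, so the group they generate is a torus. This torus is the image of
\[ (t_1,\dots,t_n)\mapsto e^{i\sum_j t_jP_j}, \qquad t\in[0,2\pi]^n , \]
since $e^{i(t+2\pi)P}=e^{itP}$ because $P^2=1$. This map is Lipschitz: $\|e^{i\sum t_jP_j}-e^{i\sum s_jP_j}\|\le \sum_j|t_j-s_j|$, by telescoping with commuting factors and using $\|e^{itP}-e^{isP}\|\le|t-s|$. A grid on $[0,2\pi]^n$ with spacing $\eps/n$ in each coordinate therefore gives an $\eps$-net of size $(2\pi n/\eps)^n$. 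The spurious factor $n$ is removed by measuring with the $\ell^\infty$ norm on coefficients instead. Diagonalize, so that the eigenvalues are $e^{i\langle t,\chi\rangle}$ with $\chi\in\{\pm1\}^n$, and a grid of spacing $\eps/n$ per coordinate still costs $n^n$. The claimed bound $(C/\eps)^{2n}$ absorbs this only if $n^n\le (C/\eps)^n$, which is not uniform in $n$.

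I would therefore argue differently. The torus sits inside the diagonal unitary group $\mathbb{T}^{2^n}$, but that route gives the wrong exponent. The route I would take is the one the paper already set up: a PauLie map sends the $n$ commuting strings to $Z_1,\dots,Z_n$, and by the corollary in Section \ref{sec:math_prelim} this map is an isometry. The group is then $\prod_j\{e^{itZ_j}\}$, and
\[ \Big\|\bigotimes_j u_j-\bigotimes_j v_j\Big\|\le\sum_j\|u_j-v_j\| . \]
This again leaves a factor $n$, which I would absorb as the factor $2$ in the exponent, writing $n/\eps\le (C/\eps)^2$ in the regime the paper intends. This bookkeeping is the main obstacle, and I would check which normalization of $\eps$ the authors intend. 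Note also that if $\eps$ is not allowed to depend on $n$ in this way, the bound should instead be read through iv) with $m=n$.

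For iii), diagonalize $H=\sum_k \lambda_k E_k$, so that
\[ G_H\subset\{\mathrm{diag}(z_1,\dots,z_{2^n}) : |z_k|=1\}\cong \mathbb{T}^{2^n}. \]
In operator norm this torus is an $\ell^\infty$ product of circles, so it has an $\frac{\eps}{2}$-net of size $(C/\eps)^{2^n}$: a product of nets on each circle, and the $\ell^\infty$ structure means no factor of $n$ is lost. By i), $N(G_H,\eps)\le (C/\eps)^{2^n}\le (C/\eps)^{2^{n+1}}$.

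For iv), the commuting sectors give $G=G_1\cdots G_m$. Take $\frac{\eps}{m}$-nets $S_j$ of the $G_j$. Writing $g=g_1\cdots g_m$ and choosing $s_j\in S_j$ close to $g_j$, telescoping with unitary factors gives
\[ \|g_1\cdots g_m-s_1\cdots s_m\|\le\sum_j\|g_j-s_j\|<\eps . \]
Hence the set of products $S_1\cdots S_m$ is an $\eps$-net of $G$ of size at most $\prod_j N(G_j,\frac{\eps}{m})$.
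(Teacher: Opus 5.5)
The gap is in ii); your proofs of i), iii) and iv) are essentially the paper's.
\begin{itemize}
\item[i)] You pick one point of $K_1$ per relevant ball of an $\frac{\eps}{2}$-net of $K_2$, as the paper does.
\item[iii)] You diagonalize to place $G_H$ inside the diagonal torus $\mathbb{T}^{2^n}$, where the operator norm is the sup-norm of the $2^n$ phases. The paper then quotes the volume bound for the unit ball of $\ell_\infty^{2^n}(\mathbb{C})$; your product of circle nets is equivalent.
\item[iv)] You use the same telescoping estimate.
\end{itemize}

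For ii), every route you describe ends at $(Cn/\eps)^n$: the $\ell^1$-Lipschitz grid, the tensor-product telescoping after mapping to $Z_1,\dots,Z_n$, and iv) with $m=n$. Absorbing this into $(C/\eps)^{2n}$ via $n/\eps\le (C/\eps)^2$ needs $\eps\le C^2/n$. But ii) is claimed for all $\eps\le 1$ with $C$ independent of $n$, so as written you have not proved it. The paper does no such bookkeeping. It regards $\mathbb{T}^n$ as the unimodular points of the unit ball of $\ell_\infty^n(\mathbb{C})$, a Banach space of real dimension $2n$. It then combines the volume bound $N(\mathrm{Ball},\eps)\le (1+\frac{2}{\eps})^{2n}$ with i). In the sup-metric on $n$ phases, no factor $n$ appears.

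That said, your suspicion is justified, and you should not try to force the $n$ away. The paper's identification is not isometric for the group generated by $n$ commuting Pauli strings inside $U(2^n)$.
\begin{itemize}
\item For $Z_1,\dots,Z_n$ one has $\|e^{it\cdot Z}-e^{is\cdot Z}\|=\max_{\chi\in\{\pm1\}^n}|e^{i\langle t-s,\chi\rangle}-1|$. This behaves like $\|t-s\|_1$, not $\|t-s\|_\infty$.
\item Concretely, the operator-norm ball of radius $1$ about the identity pulls back to disjoint $\ell^1$-balls of radius $\pi/3$ about the kernel lattice $\{\pi k : k\in\mathbb{Z}^n,\ \sum_j k_j \text{ even}\}$. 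So its normalized Haar measure is $(2/3)^n/(2\,n!)$.
\item Using the argument of i) to move centres into the group, this gives $N(\mathbb{T}^n,\frac12)\ge 2(\frac32)^n n!$. That exceeds $(2C)^{2n}$ for large $n$.
\end{itemize}
So in the operator-norm reading, ii) as stated is false, and the correct bound is your $(Cn/\eps)^n$. This is harmless downstream: Theorem \ref{entropy} only requires $N(G_0,\delta)\le (Ca/\delta)^d$, and $a$ of order $n$ costs only a logarithmic factor. You should state ii) in that form rather than try to absorb the $n$.
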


\begin{proof} i) is well-known \cite{PSV,KSV}. Indeed, for any $\eps/2$ covering $S_2$ of the large set $K_2$ we define $S_1$ as the elements $s\in S_1$ such that for some $x_s\in K_1$ we have $d(s,x_s)<\frac{\eps}{2}$.  Then $N(K_1, \eps)\le |\{x_s|s\in S_1\}|\le |S_2|$. For ii) we note that $\mathbb{T}^n\subset {\rm Ball}_{\ell_{\infty}^n}$ is a subset of a complex Banach space of real dimension $d=2n$. The standard volume estimate \cite{PSV} gives $N({\rm Ball},\eps)\kl (1+\frac{2}{\eps})^{2n}$. Note that for every self-adjoint $H$ we can find a basis of the eigenvectors such that $H$ is diagonal. This implies that $e^{itH}$ is a subset of $\mathbb{T}^{2^n}$, so again the volume estimate proves iii).  For iv) we observe that $G=G_1\times \cdots \times G_m$ is just a product group. For unitaries we have 
  \[ \|u_1\cdots u_m-v_1\cdots v_m\|
  \kl \sum_{j=1}^m \|u_1\cdots u_{j-1}(u_j-v_j)v_{j+1}\cdots v_m\|
  \kl \sum_{j=1}^m \|u_j-v_j\| \pl. \]
Therefore we can choose $\eps/m$-nets  in the subgroups and deduce the assertion.   \end{proof} 

\subsection{Universal sets}\label{sub:univeral_sets}
We now discuss the relation of our Hamiltonian framework to the Solovay-Kitaev theorem. 

\begin{prop}\label{finite} Let $H$ be a magic Hamiltonian with respect to $S_0$. Then $l_H(\eps|S_0)$ is finite. Moreover, a finite set of times for which $H$ is evolved is sufficient.  \end{prop}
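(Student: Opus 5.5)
Our plan is to combine the Lie-algebraic generation result (the definition of magic) with a compactness argument and the Chow/Rashevskii zigzag construction from Section~\ref{sec:algorithmic_implementation}. Write $G_0=\exp(L(S_0))$ for the (closed, compact) Lie group generated by the resources. Since $L(S_0\cup\{H\})=\mathfrak{su}(2^n)$, the first step is to fix a basis $P_1,\dots,P_N$ of $\mathfrak{su}(2^n)$ ($N=4^n-1$) consisting of iterated commutators of elements of $S_0\cup\{H\}$, as in the algorithm of Section~\ref{sec:algorithmic_implementation}.

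Second, we build a neighbourhood of the identity. For each basis element $P_k$ the zigzag formula
\[ e^{-i\sqrt{t}A}e^{-i\sqrt{t}B}e^{i\sqrt{t}A}e^{i\sqrt{t}B}=\exp\big(t[A,B]+\cO(t^{3/2})\big), \]
iterated along the commutator chain, gives a word $w_k(t)$ in exponentials of $S_0\cup\{H\}$ with $w_k(t)=\exp(tP_k+o(t))$, depending continuously on $t$ and involving a bounded number $m_k$ of factors $e^{isH}$. The map $\Phi(t_1,\dots,t_N)=w_1(t_1)\cdots w_N(t_N)$ then has differential at $0$ equal to $(t_k)\mapsto\sum_k t_kP_k$, which is onto $\mathfrak{su}(2^n)$. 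The only delicate point is that $w_k$ is only differentiable in the variable $t$ after the reparametrization $\sqrt{t}$, $t^{1/3}$, and so on. We will therefore handle signs by using $t\ge0$ and $-P_k$ separately, or instead invoke the standard open-mapping argument of the Chow–Rashevskii theorem (the ball–box theorem). Either route shows that the image of $\Phi$ contains an open neighbourhood $\mathcal{V}$ of the identity in $SU(2^n)$. We expect this to be the main obstacle, and we would rely on the constructive proof of \cite{Giannotti2024ProvingRashevskii} rather than redo it. Every element of $\mathcal{V}$ is a word using at most $M=\sum_k m_k$ applications of $H$, interleaved with elements of $G_0$ (consecutive resource exponentials merge into one element of $G_0$).

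Third, we globalize by compactness. Since $SU(2^n)$ is connected and compact, it is covered by finitely many translates $g_1\mathcal{V},\dots,g_r\mathcal{V}$. More simply, connectedness gives $SU(2^n)=\mathcal{V}^{p}$ for some finite $p$. Hence every $u$ (up to the global phase, which is harmless, or absorbed if $U(2^n)$ is meant modulo phase) is exactly a product with at most $pM$ alternations between $G_0$ and $e^{itH}$, so $l_H(\eps|S_0)\le pM<\infty$ for every $\eps$.

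Finally, we show that finitely many times suffice. The times of $H$ used in $\Phi$ range over a compact parameter set $[-\delta,\delta]^{M}$. By uniform continuity of the product map in all parameters (using the Lipschitz bound \eqref{up} and the telescoping estimate from Proposition~\ref{basic} iv)), replacing each $H$-time by the nearest point of a sufficiently fine finite grid changes the product by less than $\eps$, while the $G_0$ factors stay arbitrary. This yields a finite set of time stamps achieving accuracy $\eps$. Combined with a finite $\eps$-net in $G_0$, this gives a finite universal gate set to which Solovay–Kitaev applies.
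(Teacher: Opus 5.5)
Your argument is correct, but it takes a different route from the paper. The paper never produces an exactly reachable open set. It takes a finite $\delta$-net of $\{a : \|a\|_\infty\le\pi\}$ in $\mathfrak{su}(2^n)$ and pushes it to a net of the group via the Lipschitz bound \eqref{up}. It then Trotterizes each of the finitely many net points into factors $e^{\alpha_j b_j/N}$ and approximates each factor by the constructive Chow--Rashevskii formula to accuracy $\eps/(4Nk)$. Since only finitely many words are ever built, the alternation count is bounded by their maximum, and the $H$-times occurring in them automatically form the finite set $T$.

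You instead apply the inverse function theorem to a product of zigzag words. This gives a neighbourhood $\mathcal{V}$ of the identity reached \emph{exactly} with $M$ uses of $H$. You then use compactness to write $SU(2^n)=\mathcal{V}^p$, and only afterwards discretize the $H$-times by uniform continuity.

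Your route gives a stronger conclusion: exact reachability (modulo global phase, as you note) with at most $pM$ alternations, uniformly in $\eps$. It is, however, non-constructive, since $\mathcal{V}$ and $p$ come from the inverse function theorem and compactness. The paper's route is an explicit recipe, $\eps$-dependent and wasteful, that matches steps 3)--4) of the introduction, and it needs no regularity of the zigzag words beyond their approximation error.

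The regularity issue you flag resolves cleanly. If $P_k$ is a bracket of $r$ generators, the nested group commutator evaluated at $s=t^{1/r}$ gives $\log w_k(t)=tP_k+R(t)$ with $R(t)=O(t^{1+1/r})$ and $R'(t)=O(t^{1/r})$, after normalizing $P_k$. Setting $w_k(t)=w_k(|t|)^{-1}$ for $t<0$ then makes $w_k$ a $C^1$ curve with $w_k'(0)=P_k$, so the inverse function theorem applies to $\Phi$.

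Two smaller points. First, $SU(2^n)=\mathcal{V}^p$ uses compactness as well as connectedness: you need the increasing open cover by powers of the symmetrized $\mathcal{V}$. Symmetrizing is harmless, because inverses of such words are again words of the same type. Second, $G_0$ need not be closed, though your argument never uses that.
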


\begin{proof} Let us start with a basis $B=\{b_1,...,b_k\}$ of the Lie algebra of $\mathfrak{su}(2^n)$. We may define the norm 
 \[ \|\sum_j \al_j b_j\|_1  \lel \sum_j |\al_j|  \pl. \]
Since we are dealing with a finite-dimensional real Banach space, we find a $\delta$-net $N({\rm ball}_1,\delta)\kl (C/\delta)^{m}$. Of course, this norm is equivalent to the operator norm, i.e. 
  \[ \frac{1}{c_1} \|a\|_{\infty} \kl \|a\|_1 \kl c_2 \|a\|_{\infty} \pl .\]
As usual, we consider the exponential map $\phi(a)=e^{a}$. The set 
\[ K=\{a| \|a\|_{\infty}\le \pi\}\subset \frac{1}{c_1}{\rm Ball}_1\pl\]
is compact and contained in $C_2{\rm Ball}_{\|\pl\|_1}$ and hence admits a $\delta$-net $S$ in the $\|\pl\|_1$-norm of cardinality $(1+\frac{2c_1}{\delta})^{k}$. According to  Remark  \ref{Lipschitz}, we deduce that $\exp(S)$ is a $(1+2^{2\pi})\delta$-net for $G=U(2^n)$. We choose $(1+e^{2\pi})\delta=\frac{\eps}{2}$. Since the set $S$ is finite it suffices to find an alternating approximation for every individual $s=\sum_j \al_j b_j$. First we apply Trotter's formula \cite{Suzuki1976}:
 \[ \|e^s-(\prod_{j=1}^k e^{\al_j b_j/N})^N\| \kl \frac{2}{N} (\sum_{j=1}^k |\al_j| \|b_j\|)^2 \exp(2 \sum_{j=1}^m |\al_j| \|b_j\|) \pl .\]
Let $C'=\max_j\|b_j\|$ and $\al=\frac{C'}{c_1}$. Then we can choose $N\gl \frac{8}{\eps \al}$ and obtain an $\eps/4$-approximation of  the exponential. Since $N$ and $k$ are now fixed, it suffices to find an approximation of $e^{\al_j b_j}$. This is exactly given by the concrete formula for the Chow-Rashevskii theorem \cite{Giannotti2024ProvingRashevskii} for some $l=l(\al_s,b_j)$, depending on the designed error $\frac{\eps}{4Nk}$. Taking the supremum of $kN(s)l(\al_j,b_j)$ provides a total bound in terms of a given bracket-generating set. 
Recall that for a magic Hamiltonian $H$ the set $S_0\cup H$ is bracket-generating. Of course the total number of generators used in this approximation is an overcount of the actual alternating usage. However, in all our approximations only a finite number of terms $e^{it_jH}$ have been used. Collecting these $t_j$ in a set $T$ yields the last assertion. \end{proof}
 
\begin{cor} Let $H$ be a magic Hamiltonian with respect to a generating set $S_0$ s.t. $G_0 = \langle S_0 \rangle$.
Let $g_0^{-1}=g_0\subset G_0$ be a set generating a dense subgroup of $G_0$. Then there exists a finite set $T\subset \rz$ such that
$g=g_0\cup \{e^{itH}|t\in T\}$ is universal.
\end{cor}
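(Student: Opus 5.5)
Fix a target accuracy $\eps_0$, to be chosen below the Solovay--Kitaev threshold for $U(2^n)$. The idea is to use Proposition \ref{finite} to get a finite alternating approximation scheme at accuracy $\eps_0$, and then feed the resulting finite gate set into the Solovay--Kitaev theorem.

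\textbf{Step 1 (finite scheme).} By Proposition \ref{finite}, there are $m=l_H(\eps_0/2|S_0)<\infty$ and a finite set $T\subset\rz$ with the following property. Every $u\in U(2^n)$ satisfies
\[ \|u-u_1u_2\cdots u_{2m+1}\|<\eps_0/2, \qquad u_{2j}=e^{it_jH},\ t_j\in T,\ u_{2j+1}\in G_0 . \]
We may enlarge $T$ by $-T$. Then $\{e^{itH}|t\in T\}$ is closed under inverses, and the whole set $g$ is symmetric because $g_0^{-1}=g_0$.

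\textbf{Step 2 (discretize the $G_0$ factors).} Each factor $u_{2j+1}$ ranges over the compact group $G_0$, and $g_0$ generates a dense subgroup of $G_0$. So every $u_{2j+1}$ can be approximated to within $\eps_0/(4(m+1))$ by a finite word in $g_0$. I use the telescoping estimate from the proof of Proposition \ref{basic} iv), which says the error in a product of unitaries is at most the sum of the errors in the factors. With it, the product $u_1\cdots u_{2m+1}$ is approximated to within $\eps_0/4$ by a word in $g$. Hence words in $g$ are $\eps_0$-dense in $U(2^n)$.

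\textbf{Step 3 (density of the generated group).} This is the step I expect to be the main obstacle. A fixed accuracy $\eps_0$ does not by itself give density, so I will argue via closed subgroups. The closure $\overline{\lan g\ran}$ is a closed subgroup of $U(2^n)$, hence a Lie subgroup. By Step 2 it is $\eps_0$-dense. It contains $G_0$, since $g_0$ is dense there, and it contains $e^{itH}$ for $t\in T$. The cleanest way to conclude is to choose $T$ so that, for some $t\in T$, the element $e^{itH}$ generates a dense subgroup of the torus $\overline{\{e^{isH}|s\in\rz\}}$. By Kronecker's theorem this holds for all $t$ outside a countable set, so one such $t$ can simply be added to $T$. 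Then $\overline{\lan g\ran}$ contains $e^{isH}$ for every $s\in\rz$, as well as $G_0$. Its Lie algebra therefore contains $iS_0$ and $iH$, which generate $\mathfrak{su}(2^n)$ because $H$ is magic. Since $SU(2^n)$ is connected, it follows that $\overline{\lan g\ran}\supseteq SU(2^n)$, i.e.\ $g$ is universal up to phase.

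\textbf{Step 4 (efficiency).} Once density holds, the Solovay--Kitaev theorem \cite{KSV,dawson2006SolovayKitaev} applies to the symmetric finite-type set $g$. Steps 1--2 supply an explicit $\eps_0$-net made of words of bounded length, and Solovay--Kitaev then yields polylogarithmic overhead in $1/\eps$.

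One technicality remains. If $g_0$ is infinite, "universal" should be read as density of the generated group, which is all Solovay--Kitaev needs for a compact connected Lie group. If instead a finite gate set is wanted, replace $g_0$ by a finite subset that still generates a dense subgroup of $G_0$; such a subset exists because $G_0$ is compact and topologically finitely generated.
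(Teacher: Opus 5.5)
Your proposal is correct. Steps 1, 2 and 4 coincide with the paper's proof, but your Step 3 replaces the paper's reliance on Solovay--Kitaev with a qualitative density argument.

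\textbf{The paper's route.} It applies Proposition \ref{finite} at half the Solovay--Kitaev threshold. It then replaces the $G_0$-factors by words in $g_0$, which yields an $\eps$-net of finite products, and invokes Solovay--Kitaev directly. There is no separate density step. In the net-based form of Solovay--Kitaev (\cite{KSV}, \cite{dawson2006SolovayKitaev}), the recursion uses only a base net below the threshold and inverses of words. It therefore produces arbitrarily fine approximations by itself. So your worry that a fixed accuracy does not give density disappears once $\eps_0$ is below the threshold, which you chose anyway.

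\textbf{Your route.} Step 3 adds a Kronecker-generic time, which puts the whole one-parameter group $e^{isH}$ into $\overline{\lan g\ran}$. The closed-subgroup theorem and the magic condition then give $SU(2^n)\subseteq\overline{\lan g\ran}$. This is valid. The exceptional set of $t$ is countable: it is the union, over $m\in\zz^{2^n}$ with $\lan m,\la\ran\neq 0$, of $2\pi\zz/\lan m,\la\ran$, where $\la$ is the eigenvalue vector of $H$.

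\textbf{What each buys.} Step 3 never uses Steps 1--2, so it establishes universality independently of the quantitative construction in Proposition \ref{finite}. The paper's route instead keeps $T$ equal to the explicit times produced by the Trotter/Chow--Rashevskii construction, and obtains density and efficiency together. Your symmetrization $T\cup(-T)$ also supplies the inverse-closure hypothesis that Solovay--Kitaev needs, which the paper leaves implicit.

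\textbf{Minor point.} Proposition \ref{finite} gives some finite alternation count with times in a finite $T$. That count need not equal $l_H(\eps_0/2|S_0)$, because the infimum defining $l_H$ ranges over arbitrary times. This does not affect your argument.
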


\begin{proof} Let $\eps$ be the $\eps$ from the Solovay-Kitaev theorem \cite{KSV}[Theorem 8.5]. We apply Proposition \ref{finite} to $\eps/2$. Then we keep the finite instances of $t_j$ in front of $H$ and replace $e^{it_jh_j}$ for $h_j\in S_0$ by products of elements of $g_0$ allowing an error $\frac{\eps}{2Nkl(s,a_j)}$ at most. Then we have created an $\eps$-net in which all of the elements in the net are finite products of $g_0$ and $e^{it_jH}$. According to the Solovay-Kitaev theorem, we can now approximate every unitary up to precision $\delta$ with products of length $O(\ln^{\al} \frac{1}{\delta})$ for some $\al>1$. Thus $g_0\cup \{e^{it_jH}|t_j \in T\}$ is universal.  
\end{proof}

\begin{rem}\label{indep} The Solovay-Kitaev theorem also tells us that for  $\delta<\eps$  the alternation length only differs by a factor $\log^{\al}(\frac{1}{\delta})$. Since we are aiming for exponential lower bounds we can work with $\delta=2^{-poly(n)}$ without changing the behaviour considerably.     
\end{rem}
\subsection{Lower bounds for alternating length}

In order to provide lower estimates of the alternating length, we will need upper estimates for covering numbers of `small' subgroups in particular spin groups. In our first example with 
 \[ S_0 \lel \{X_1\}\cup \{X_1X_2, \cdots, X_{n-1}X_n\} \cup \{Z_1, \cdots, Z_n\}, \] 
we have to understand the Lie group $\mathfrak{g}_{loc}$, the Lie group generated by the exponential of the Lie algebra $\mathfrak{g}_{loc}$ of local Hamiltonians. We will use the Araki-Jordan-Wigner transform as explained in \cite{API}. We define
 \[ S_j\lel Z_1\cdots Z_{j-1} \]
and
 \[ \gamma_{2j-1} \lel S_jX_j \pl ,\pl \gamma_{2j}\lel S_jY_j \pl .\]   
 The $\gamma_j$'s are self-adjoint and satisfy the Majorana (Clifford)  commutation relation 
  \[ \gamma_j\gamma_k+\gamma_k\gamma_j \lel 2\delta_{jk} \pl.\] 
Therefore the vector space
 \[ E^{(2)} \lel {\rm span} \{\gamma_j\gamma_k |1\le j,k\le 2n\} \] 
is exactly the Lie algebra ${\rm spin}(2n)$. The corresponding Lie group $Spin(2n)$, a double cover of $SO(2n)$, is well-known in geometry and physics. Let us spell out the nature of the cover more precisely. Since the spin group consists of unitaries in $\mz_{2^n}$ we can use the conjugation action. Indeed, since the linear map $\gamma(v)=\sum_j v_j\gamma_j$ is a real linear isometry, the spin products leave the space $\gamma(V)$ invariant. This implies that for every $\xi\in {\rm Spin}$ there exists $o(\xi)\in O(m)$ such that 
\[ \xi \gamma(v)\xi^* \lel \gamma(o(\xi)(v)) \pl .\]
The map $o:{\rm Spin}(m)\to O(m)$ is a surjective group homomorphism with kernel ${\rm ker}(O)=\zz_2$ (see \cite{Gar}). Equivalently ${\rm Spin}(m)=O(m)\rtimes \zz_2$ is given by a semidirect product, sometimes referred to a double cover. As such ${\rm Spin}(m)$ embeds as  unitaries into $\mz_m\oplus \mz_m$.  

\begin{cor} The covering numbers of $Spin(m)\subset \mz_m\oplus \mz_m$ satisfy 
 \[ N(Spin(m),\eps) \kl N(O(m),\frac{\eps}{2})^2\kl (\frac{2C}{\eps})^{m(m-1)} \pl .\]
\end{cor}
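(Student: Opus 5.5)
The first inequality will follow from Proposition \ref{basic} applied to the double-cover structure just described; the second is then Szarek's upper bound (Theorem \ref{Szarek}) for $O(m)$.

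\textbf{Step 1: describe $Spin(m)$ as a subgroup of a product.} Using the description $Spin(m)=O(m)\rtimes\zz_2$ and its embedding as unitaries into $\mz_m\oplus\mz_m$, I will write every $\xi\in Spin(m)$ as a block pair $(\xi_1,\xi_2)$. The working assumption is that each block $\xi_i$ is, up to a fixed isometric identification, an element of $O(m)$. Concretely, one block is $o(\xi)$ and the other is the twisted copy carrying the $\zz_2$ sign. The operator norm on $\mz_m\oplus\mz_m$ is the maximum of the two block norms, so $Spin(m)\subset O(m)\times O(m)$ isometrically, with the max-norm on the product.

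\textbf{Step 2: cover the product.} Take an $\eps/2$-net $\mathcal N$ of $O(m)$. Then $\mathcal N\times\mathcal N$ is an $\eps/2$-net of $O(m)\times O(m)$ in the max-norm, which gives
\[ N(O(m)\times O(m),\tfrac{\eps}{2})\kl N(O(m),\tfrac{\eps}{2})^2 . \]
Proposition \ref{basic} iv) would also give this, and even with $\eps/m$ replaced by $\eps$ because the norm here is a max rather than a sum. Proposition \ref{basic} i), applied to $K_1=Spin(m)\subset K_2=O(m)\times O(m)$, then yields
\[ N(Spin(m),\eps)\kl N(O(m)\times O(m),\tfrac{\eps}{2})\kl N(O(m),\tfrac{\eps}{2})^2 . \]
This is exactly why the $\eps/2$ appears: it is the price of passing from a net of the ambient set to a net consisting of points of the subset.

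\textbf{Step 3: apply Szarek.} Theorem \ref{Szarek} with $d=\dim O(m)=m(m-1)/2$ gives
\[ N(O(m),\eps/2)\kl (2C/\eps)^{m(m-1)/2}. \]
Squaring this gives the exponent $m(m-1)$, which is the claimed bound.

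\textbf{Main obstacle.} The delicate point is Step 1: justifying that the embedding of $Spin(m)$ into $\mz_m\oplus\mz_m$ really lands, isometrically for the operator norm, inside a product of two copies of $O(m)$. If that identification is only up to a bounded distortion, the constants change. In that case I would instead argue through the homomorphism $o$, which is $2$-to-$1$: a net for $O(m)$ lifts to two preimage nets, and I would use a local Lipschitz bound for the covering map near each preimage. That route would only produce a factor $2$ in the cardinality, which the squared bound easily absorbs.
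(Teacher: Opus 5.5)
Steps 2 and 3 are fine. Together with Step 1 they are exactly how the corollary is meant to follow from the paragraph preceding it in the paper: the paper asserts that $Spin(m)=O(m)\rtimes\zz_2$ embeds as unitaries into $\mz_m\oplus\mz_m$, and then one applies Proposition~\ref{basic}~i) and Theorem~\ref{Szarek} with $d=m(m-1)/2$. The gap is Step 1. It is not merely delicate but false, and the paper's assertion that you lean on does not hold either.

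The map $o$ sends $Spin(m)$ onto $SO(m)$ with kernel $\{\pm1\}$, and this double cover does not split. $Spin(m)$ is connected for $m\ge 2$, so there is no continuous homomorphism $s:Spin(m)\to\{\pm1\}$ with $s(-1)=-1$. Pairs $(o(\xi),s(\xi)o(\xi))$ would therefore describe $SO(m)\times\zz_2$, not $Spin(m)$. More generally, if the first block is $o(\xi)$, the second block must be a representation in which $-1$ acts nontrivially, so it contains a spinorial irreducible representation. Spinorial irreducibles have dimension at least $2^{\lfloor (m-1)/2\rfloor}$. This exceeds $m$ for every $m\ge 9$, and exceeds $2m$ for $m\ge 11$, where $Spin(m)$ has no faithful unitary representation inside $\mz_m\oplus\mz_m$ at all. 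So for the relevant $m=2n$ with $n\ge 5$, there is no isometric copy of $Spin(m)$ inside $O(m)\times O(m)$ to which Proposition~\ref{basic} could be applied.

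Your fallback does not repair this, because the obstruction is metric distortion, not cardinality. Take the realization the paper actually uses: Jordan--Wigner, $Spin(2n)\subset U(2^n)$ with $d\pi(W_{jk})=\frac12\gamma_j\gamma_k$. Let $A=\sum_{j=1}^{n}W_{2j-1,2j}$. Then $\|A\|_{op}=1$, while $d\pi(A)=\frac12\sum_j\gamma_{2j-1}\gamma_{2j}=\frac{i}{2}\sum_{j=1}^n Z_j$ has norm $n/2=m/4$. So the local inverse of $o$ has Lipschitz constant of order $m$. Lifting a net of $SO(m)$ then only gives $N(Spin(m),\eps)\kl 2N(SO(m),\frac{\eps}{cm})\kl 2(\frac{C'm}{\eps})^{m(m-1)/2}$. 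This is a bound carrying a factor $m$, as in Lemma~\ref{covspin}~ii), and it implies the stated one only when $\eps\lesssim 1/m$.

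The loss cannot be avoided in that norm. If $\|\xi-1\|<\eps\le 1$, the rotation angles of $o(\xi)$ satisfy $\frac12\sum_j|\theta_j|\kl c\,\eps$. Hence $o$ maps the $\eps$-ball into the exponential of a trace-norm (hence Hilbert--Schmidt) ball of radius $O(\eps)$ in $\mathfrak{so}(m)$. A Haar-volume comparison then gives $N(Spin(m),\eps)\gl (c'\sqrt{m}/\eps)^{m(m-1)/2}$, which exceeds $(2C/\eps)^{m(m-1)}$ for fixed $\eps$ once $m$ is large. So what is missing is not a sharper constant but an honest comparison of the two metrics, and the estimate you can actually prove carries a factor of $m$.
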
 
We will now deduce covering number estimates for `the inverse' of $o$. There is a group isomorphism  $\pi:O(m)\rtimes \zz_2\to {\rm Spin}(m)\subset \mz_{2^m}$ such that $o\pi(o,z)=o$. The Lie algebra $\mathfrak{so}(n)$ is the span of the elementary generator 
 \[ W_{jk} \lel |j\ran\lan k| - |k\ran\lan j| \pl. \]
Since $[\gamma_j\gamma_k,\gamma_k\gamma_l]=2\gamma_j\gamma_l$, it turns out that $u(2W_{jk})=\gamma_j\gamma_k$  is the desired Lie algebra homomorphism, i.e. $d\pi=u$.   As suggested  in Szarek \cite{Szarek1998MetricSpaces}, we may also consider the geodesic length with respect to the operator norm to transport covering numbers. This metric is given by  
 \[ d_{fins}(u,1) \lel \inf\{\int_0^t \|a(s)\|_{op} ds \pl ,\pl u\lel \int_0^t a(s) ds\} \pl .\] 
Here $a(s)$ is a tangent vector. We deduce from representation theory  that 
  \[ u \lel \int_0^t a(s) ds \pl \Rightarrow \pl \pi(u) \lel \int_0^t  d\pi(a(s)) ds  \pl. \] 
Recall that the Finsler metric and the ordinary operator norm distance are equivalent, see \cite{Szarek1998MetricSpaces}. 
This gives an upper bound for the operator norm distance:
\begin{lemma}\label{covspin}
 \begin{enumerate}
 \item[i)] $d_{op}(\pi(u),1)\kl 6\pi m d_{op}(u,1)$.
 \item[ii)] $N_{op}(Spin(m),\eps)\kl (\frac{4Cm}{\eps})^{m(m-1)}$.
 \item[iii)] Let $G_0$ be the Lie group of the Lie algebra generated by  $S_0=\mathcal{Z}(1)\cup \mathcal{X}(2)$. The covering number of $G_0$ satisfies  
 \[ N(G_0,\eps) \kl (\frac{40Cn}{\eps})^{2n(2n-1)} \pl .\] 
 \item[iv)] Let $\widetilde{G}_0$ 
 be the Lie group generated by $S_0\cup \{X_1\}$. Then 
  \[ N(\widetilde{G}_0,\eps) \kl (\frac{40Cn}{\eps})^{(2n+2)(2n+1)} \pl .\] 
 \end{enumerate}
\end{lemma}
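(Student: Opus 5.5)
Item i) is a Lipschitz estimate for the lift $\pi$ from $O(m)$ to ${\rm Spin}(m)$, measured in operator-norm distance. I would prove it by passing to the Finsler metric on both sides. Write $u\in SO(m)$ as $u=\exp(a)$ with $a\in\mathfrak{so}(m)$, $\|a\|_{op}\le\pi$. The one-parameter path $s\mapsto e^{sa}$ then has Finsler length $\|a\|_{op}$. By Remark \ref{Lipschitz}, in a neighbourhood of the identity this is comparable to $d_{op}(u,1)$. Away from the identity, $d_{op}\ge c$ and everything is bounded by constants, so I would only track a crude factor. The image path $\pi(e^{sa})=\exp(s\,d\pi(a))$ has Finsler length $\|d\pi(a)\|_{op}$.

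The key estimate is then a bound on $\|d\pi\|$ from $(\mathfrak{so}(m),\|\cdot\|_{op})$ to $\mz_{2^m}$. Write $a=\sum_{j<k}a_{jk}W_{jk}$, so that $d\pi(a)=\frac12\sum_{j<k}a_{jk}\gamma_j\gamma_k$. I would bring $a$ into real normal form: there is $o\in O(m)$ with $o a o^T=\sum_{l}\theta_l W_{2l-1,2l}$ and $|\theta_l|\le\|a\|_{op}$. Since $\pi$ intertwines conjugation (the Majorana operators rotate as $\gamma(o v)=\xi\gamma(v)\xi^*$), we get $\|d\pi(a)\|=\|\frac12\sum_l\theta_l\gamma_{2l-1}\gamma_{2l}\|\le \frac m4\|a\|_{op}$. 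This gives a factor of order $m$. The constant $6\pi$ then absorbs the Finsler/operator-norm equivalence: $d_{op}\le d_{fins}\le \frac{\pi}{2} d_{op}$ on $U$, together with the factor from writing a general element of $O(m)$ through the $\zz_2$ component. I expect this step, with its bookkeeping of constants, to be the main obstacle. A cruder $m^2$ bound via the triangle inequality over all $m(m-1)/2$ generators would fail to match the stated exponent.

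For ii), I take an $\eps/(6\pi m)$-net of $O(m)$ of size $(C\cdot 6\pi m/\eps)^{m(m-1)/2}$ by Theorem \ref{Szarek}. I push it forward by $\pi$ on both sheets of the $\zz_2$ cover, using i) together with the group structure $d(\pi(u),\pi(v))=d(\pi(v^{-1}u),1)$. Adjusting constants via Proposition \ref{basic} i) gives $(4Cm/\eps)^{m(m-1)}$.

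For iii), I use the Jordan–Wigner identification of Proposition \ref{prop:single-double}: $G_0$ is the image of ${\rm Spin}(2n)$ generated by the $\gamma_j\gamma_k$. Setting $m=2n$ in ii) gives the bound, after absorbing the factor $8\pi C n/\eps \le 40Cn/\eps$. For iv), I use the PauLie map from Theorem \ref{thm:XdoubleZsingle_extensions} i), which sends $X_1\mapsto X_0X_1$. This embeds $\widetilde G_0$ isometrically into the corresponding group on $n+1$ qubits (by the isometry corollary in Section \ref{sec:math_prelim}), i.e. into ${\rm Spin}(2n+2)$. Applying iii) with $n+1$ and then Proposition \ref{basic} i) to pass to the subgroup gives the claimed bound.
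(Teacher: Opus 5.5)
Your proposal is correct and follows the paper's overall plan. The shared steps are: an $O(m)$ bound on $\|d\pi(a)\|_{op}$, integrated along Finsler paths and combined with a comparison of $d_{fins}$ and $d_{op}$ for i); transport of a Szarek net of $O(m)$ by right-invariance for ii); Jordan--Wigner with $m=2n$ for iii); and the PauLie map $X_1\mapsto X_0X_1$ into the $(n+1)$-qubit group $\mathrm{Spin}(2n+2)$, followed by monotonicity under inclusion, for iv).

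Where you genuinely differ is the key norm estimate. The paper works with the coefficient matrix. For a real rank-one $b_{kl}=B_kC_l$ one has $\sum_{kl} b_{kl}\gamma_k\gamma_l=\gamma(B)\gamma(C)$ with $\|\gamma(v)\|=|v|$, so $\|d\pi(a)\|$ is bounded by a multiple of $\tr|a|$. The comparisons $\tr|a|\le\sqrt m\,\|a\|_2\le m\,\|a\|_{op}$ then produce the factor $m$. You instead rotate $a$ into real canonical form and use the equivariance $d\pi(oao^T)=\xi\,d\pi(a)\,\xi^*$, which reduces everything to the commuting unitaries $\gamma_{2l-1}\gamma_{2l}$.

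Your route gives more. It yields the exact value $\|d\pi(a)\|_{op}=\frac12\sum_l|\theta_l|=\frac14\tr|a|$, hence i) with constant $\frac{\pi m}{8}$ rather than $6\pi m$. It also shows the linear factor $m$ is sharp up to constants (take all $|\theta_l|$ equal and $u$ near $1$). The paper's route is basis-free and uses only $\gamma(v)^2=|v|^2$, avoiding the surjectivity of $\mathrm{Spin}(m)\to SO(m)$, at the cost of worse constants. In the canonical form, take $o\in SO(m)$ (flip the sign of one $\theta_l$ if needed) so that it actually lifts to some $\xi\in\mathrm{Spin}(m)$.

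Some smaller points:
\begin{enumerate}
\item There is no continuous homomorphic section $SO(m)\to\mathrm{Spin}(m)$. So $\pi$ is only the local lift $e^{a}\mapsto e^{d\pi(a)}$ with $\|a\|_{op}\le\pi$, and $d(\pi(u),\pi(v))=d(\pi(v^{-1}u),1)$ holds only up to a sign $\pm 1$. Your pushforward onto both sheets absorbs exactly this.
\item $O(m)\setminus SO(m)$ never enters, since it lies at operator distance $2$ from $SO(m)$. So no further factor from a $\zz_2$ component is needed.
\item The crude triangle-inequality bound $\|d\pi(a)\|\le\frac{m^2}{4}\|a\|_{op}$ would spoil the constant in i), not an exponent. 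For ii)--iv) it would even suffice, because Szarek's exponent $\dim O(m)=m(m-1)/2$ is half the stated $m(m-1)$.
\item In iv), chaining the already loosened iii) at $n+1$ with the halving of Proposition \ref{basic} i) gives base $80C(n+1)/\eps$, which exceeds $40Cn/\eps$. Apply ii) directly with $m=2n+2$ instead, as the paper does; this gives base $16C(n+1)/\eps\le 40Cn/\eps$.
\end{enumerate}
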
    

\begin{proof} We start with 
 \[ \|\pi(u)-1\|_{op} \kl \int_0^t \|d\pi(a(s))\| ds \pl .\] 
Therefore $\|d\pi(a)\|_{op} \kl 3m \|a\|_{op}$ will imply $i)$. Let $A=\sum_{kl} a_{kl} W_{kl}$ be a typical vector in the tangent space, an antisymmetric real matrix. Then
 \begin{align*}
  \|A\|_2^2 &=  \tr(A^*A) \lel -\tr(A^2) \lel - \sum_{kl,k'l'} a_{kl}a_{k'l'} \tr(W_{kl}W_{k'l'}) \\ 
 &=  -\sum_{kl,k'l'}  a_{kl}a_{k'l'}  2(\delta_{lk'}\delta_{kl'}-\delta_{ll'}\delta_{kk'}) \\
 &= 2 \sum_{lk} a_{kl}^2 - 2 \sum_{kl}a_{kl}a_{lk}  \lel 4 \|a\|_2^2 \pl .
 \end{align*}
Note that the $a_{kl}$ are real and $a_{kl}=-a_{lk}$ since $A$ is antisymmetric. Now we consider a real rank-one matrix $b$ with coefficients $b_{kl}\lel B_kC_l$, $B_k,C_l\in \rz$.  We deduce that 
 \begin{align*}
 \|\sum_{kl} b_{kl}\gamma_k\gamma_l\|
 &\le  \| \sum_k B_k\gamma_k\| \pl \|\sum_l C_l\gamma_l\|
 \kl (\sum_k|B_k|^2)^{1/2} (\sum_l |C_l|^2)^{1/2} \pl .
 \end{align*}
For arbitrary rank-one matrices $b_{kl}=B_kC_l$. we decompose $B_k$ and $C_l$ into real and imaginary part and deduce from an extremality argument that 
\[   \|\sum_{kl} b_{kl}\gamma_k\gamma_l\| \kl 4 \tr(|b|) \pl .\] 
This shows that $d\pi(A)=\frac{1}{2} \sum_{kj} a_{kj}\gamma_{k}\gamma_j$ satisfies 
 \[ \|d\pi(A)\|_{op} \kl \frac{4}{2}   \pl tr(|a|) \kl 2 \sqrt{m} \tr(|a|^2)^{1/2} \kl \sqrt{m} \|A\|_2
 \kl m \|A\|_{op}
   \pl .\]   
In particular,
 \[ d_{op}(\pi(u),1) \kl m  \pl d_{fins}(u,1) \kl 6\pi m  \|u-1\|_{op} \pl .\] 
Note that all these distances are right-invariant, i.e.\ $d(u,v)=d(uv^{-1},1)$; hence this estimate applies to covering numbers. 

For the proof  of the second assertion, we apply i) to $m=2n$ and use the Jordan-Wigner transform.  We first replace $\tilde{S}_0=S_0\cup \{X_1\}$ with $\hat{\tilde{S}}_0= S_0\cup \{X_1X_0\}$. According to Lemma \ref{lem:PauLie_homomorphism} and Lemma \ref{cstar}, the group and $C^*$-algebra are isometrically isomorphic. Therefore the covering numbers of their corresponding unit balls are identical. Then we note that $S_0(n)\cup \{X_1X_0\}$ is contained in the set $\{X_0X_1,X_1X_2,...,X_{n-1}X_n\}\cup \{Z_0,...,Z_n\}$ which is the set $S_0$ for $n+1$ elements. Thus Proposition \eqref{basic} ii) implies the assertion after applying i) for $m=2n+2$. 
\end{proof}

Let us now consider the specific Hamiltonian $H=\sum_j X_{j}X_{j+1}$ with the local resource set
 \[ S_0\lel \{X_1,Z_1Z_2\}\cup \mathcal{Z}(1).\] 
Here we need a slightly different argument:  
\begin{lemma}\label{prod} Let $G$ be the Lie group generated by $S_0$. Then 
 \[ N(G,\eps) \kl (\frac{4C}{\eps})^{2n+12} \pl .\] 
\end{lemma}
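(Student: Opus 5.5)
We first determine the Lie algebra generated by $S_0=\{X_1,Z_1Z_2\}\cup\mathcal{Z}(1)$ and show that it splits into commuting sectors, so that Proposition \ref{basic} iv) applies. The elements $X_1,Z_1,Z_1Z_2,Z_2$ together with their iterated commutators stay on the first two qubits. Commuting $X_1$ with $Z_1$ gives $Y_1$. Commuting $Y_1$ or $X_1$ with $Z_1Z_2$ gives $X_1Z_2$ and $Y_1Z_2$. So the first sector is
\[ \mathfrak{g}_1=\mathrm{span}\{X_1,Y_1,Z_1,X_1Z_2,Y_1Z_2,Z_1Z_2\}\oplus\mathrm{span}\{Z_2\}. \]
The first summand is closed: its elements are the Pauli strings $\alpha_1 I_2$ and $\alpha_1 Z_2$ with $\alpha\in\{X,Y,Z\}$, which form $\mathfrak{su}(2)\otimes\mathrm{span}\{I,Z_2\}\cong\mathfrak{su}(2)\oplus\mathfrak{su}(2)$. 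This follows by using the projections $(1\pm Z_2)/2$, which commute with everything in the sector, and $Z_2$ is central. The remaining generators $Z_3,\dots,Z_n$ commute with everything and form an abelian sector $\mathfrak{g}_2\cong\mathbb{R}^{n-2}$.

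This gives $G=G_1\times G_2$. Here $G_1\subset U(4)$, viewed on the first two qubits and tensored with identity, is a subgroup of $U(2)\oplus U(2)$, and $G_2$ is contained in a torus of commuting Pauli strings. The real dimensions add up to roughly $n+5$, matching the entry $n+5$ in Theorem \ref{themagic}.

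We then estimate each factor separately. For $G_2$ we use Proposition \ref{basic} ii) with $n-2$ commuting Pauli strings, which gives $N(G_2,\eps)\le (C/\eps)^{2(n-2)}$. For $G_1$ we embed it in the full group $U(4)$, whose real dimension is $16$, or better in $U(2)\times U(2)$, of dimension $8$. Szarek's Theorem \ref{Szarek} together with Proposition \ref{basic} i) then gives $N(G_1,\eps)\le N(U(4),\eps/2)\le (2C/\eps)^{16}$. Since tensoring with the identity on qubits $3,\dots,n$ is an isometry in operator norm, these covering numbers transfer to $\mathbb{M}_{2^n}$ unchanged.

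We combine the two estimates with Proposition \ref{basic} iv) for $m=2$, which replaces $\eps$ by $\eps/2$:
\[ N(G,\eps)\le \Big(\frac{2C}{\eps}\Big)^{2(n-2)}\Big(\frac{4C}{\eps}\Big)^{16}\le \Big(\frac{4C}{\eps}\Big)^{2n+12}. \]
The main bookkeeping obstacle is landing on exactly the exponent $2n+12$. With the crude $U(4)$ bound of dimension $16$ and the torus exponent $2(n-2)=2n-4$, we get exactly $2n-4+16=2n+12$, which matches the statement. We will therefore use $U(4)$ rather than the sharper $U(2)^2$ bound. We also have to check that the sector decomposition really consists of commuting pieces. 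In particular, $Z_2$ and $Z_3,\dots,Z_n$ commute with $\mathfrak{g}_1$, and no commutator leaves the first two qubits.
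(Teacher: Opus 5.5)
Your argument is the paper's own proof. There too $G$ is placed in the commuting product of a copy of $SU(4)$ on qubits $1,2$ (exponent $16$ via Theorem~\ref{Szarek}) and the torus generated by $Z_3,\dots,Z_n$ (exponent $2(n-2)$ via Proposition~\ref{basic}~ii)), combined with Proposition~\ref{basic}~iv), and your bookkeeping reproduces $2n+12$. Your explicit computation of $\mathfrak{g}_1$ is correct but not needed. The problem is the torus step, which you and the paper both take from Proposition~\ref{basic}~ii): for many commuting strings that proposition is false, and the stated bound fails with it.

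Write $U(t)=e^{i\sum_{j=3}^{n}t_jZ_j}$ and $k=n-2$. This operator is diagonal with entries $e^{i\sigma\cdot t}$, $\sigma\in\{\pm1\}^k$, so $\|U(t)-U(s)\|=\max_\sigma|e^{i\sigma\cdot(t-s)}-1|$. For small $t-s$ this is comparable to $\|t-s\|_1$, not $\|t-s\|_\infty$. So the group is not isometric to a subset of the unit ball of $\ell_\infty^k$, which is what the proof of ii) assumes.

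Here is the quantitative version. For $\eps<1/\sqrt{2}$, the set $\{t\in\mathbb{T}^k:\|U(t)-1\|<2\eps\}$ is the disjoint union of the $2^{k-1}$ translates $\pi b+B$, with $b\in\{0,1\}^k$ of even weight, of an $\ell_1$-ball $B$ of radius $2\arcsin\eps\le\pi\eps$. Its normalized Haar measure is therefore at most $2^{k-1}\eps^k/k!$. By the usual volume argument, every $\eps$-net of $G\supset G_2$ has at least $2\,k!/(2\eps)^k$ elements. This grows like $k^k$, so for fixed $\eps$ and absolute $C$ it exceeds $(4C/\eps)^{2n+12}$ once $n$ is large. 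The lemma therefore cannot hold as stated with an absolute constant $C$.

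Your decomposition does give a correct bound if you apply iv) once more to the $k$ commuting circles $\{e^{itZ_j}\}$, each covered at scale $\eps/(2k)$ by $O(k/\eps)$ points. This yields $N(G,\eps)\le(c\,n/\eps)^{n-2}(4C/\eps)^{16}$ for an absolute $c$. That is the hypothesis of Theorem~\ref{entropy} with $a\sim n$ and $d=n+14$. Hence the exponential lower bound for cases 4) and 5) of Theorem~\ref{themagic} still holds, up to an extra factor $\ln n$.
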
 
\begin{proof} We observe that  
 \[ S_0\lel \{X_1,Z_1Z_2\}\cup \mathcal{Z}(1) \subset
  \{X_1,Z_{1}Z_2,Z_1,Z_2\} \cup \{Z_3,...,Z_{n}\} \pl \] 
This means that the group $G=G(S_0)\subset G_1G_2$ is contained in the product of the commuting groups $G_1=SU(4)$ and $G_2\mathbb{T}^{n-2}$. We can use the proof of Lemma \ref{basic} and combine it with 
 Theorem \ref{Szarek} to obtain $2(n-2)+16$. 
\end{proof}

\begin{lemma}\label{lem:covering_exp(tH)} Let H be a self-adjoint operator, $G_H =\{e^{itH}|t \in \rz \}$ and $\eps\le 1$. 
\begin{enumerate}
\item[i)] $N(G_H,\eps)\kl (\frac{C}{\eps})^{2^{n+1}}$;
\item[ii)] Let $k=rk(H)$ be the rank. Then $N(G_H,\eps)\le (1+\frac{Ck}{\eps})^k$ for some absolute constant $C=5\pi$.
\item[iii)] If H has integer spectrum, then $N(G_H,\eps)\kl 4\frac{2\pi +\|H\|}{\eps}$.
\end{enumerate}
\end{lemma}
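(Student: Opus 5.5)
The plan is to diagonalize $H$ and reduce everything to covering numbers of one-parameter subgroups of a torus. Write $H=\sum_{j=1}^{2^n}\la_j|e_j\ran\lan e_j|$ in an orthonormal eigenbasis. Then $e^{itH}=\sum_j e^{it\la_j}|e_j\ran\lan e_j|$, and in that basis the operator norm of a difference of two such unitaries is
\[ \|e^{itH}-e^{isH}\| = \max_j |e^{it\la_j}-e^{is\la_j}| \pl . \]
So $G_H$ embeds isometrically into the torus $\mathbb{T}^{2^n}\subset \ell_\infty^{2^n}(\cz)$ with the sup-metric.

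\textbf{Part i).} This is exactly Proposition \ref{basic} iii). By part i) of that proposition we may pass to the ambient torus, or directly to the ball of $\ell_\infty^{2^n}(\cz)$, at the cost of replacing $\eps$ by $\eps/2$. That ball has real dimension $2^{n+1}$, and the volume estimate $(1+4/\eps)^{2^{n+1}}\le (C/\eps)^{2^{n+1}}$ for $\eps\le 1$ finishes this part.

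\textbf{Part ii).} If $rk(H)=k$, then only $k$ eigenvalues are nonzero, and the remaining coordinates of $e^{itH}$ are identically $1$. Hence $G_H$ is isometric to a subset of $\mathbb{T}^k$. I would not use the $2k$-dimensional volume bound, since it gives exponent $2k$ rather than $k$. Instead, I would use that $\mathbb{T}^k=\{e^{i\theta}:\theta\in[0,2\pi)^k\}$, with $|e^{i\theta_j}-e^{i\theta'_j}|\le |\theta_j-\theta'_j|$. A grid in $[0,2\pi)^k$ of mesh $\eps$ in each coordinate then gives an $\eps$-net of cardinality $(2\pi/\eps+1)^k$, which is already of the form $(1+Ck/\eps)^k$ with room to spare, giving $C=5\pi$.

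\textbf{Part iii).} If the spectrum lies in $\zz$, then $e^{i(t+2\pi)H}=e^{itH}$, so $G_H$ is the image of $[0,2\pi]$ under $t\mapsto e^{itH}$. This map is Lipschitz with constant $\|H\|$, since $\|e^{itH}-e^{isH}\|\le |t-s|\,\|H\|$. Take the points $t_l=l\delta$ with $\delta=\eps/\max(\|H\|,1)$. Then about $2\pi\max(\|H\|,1)/\eps+1$ points suffice, which is bounded by $4(2\pi+\|H\|)/\eps$ when $\eps\le 1$.

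\textbf{Main obstacle.} The only delicate point is part ii), namely getting exponent $k$ (real dimension) rather than $2k$ from the naive complex-ball estimate. The grid on angles handles this.
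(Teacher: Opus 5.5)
Parts i) and ii) are correct. Part i) is the paper's argument; it is literally Proposition~\ref{basic} iii). Part ii) takes a different and slightly sharper route. The paper covers each one-eigenvalue circle $\{e^{is\lambda_j f_j}\}$ separately, with about $2\pi/\eps$ points each. It then combines the $k$ commuting factors via Proposition~\ref{basic} iv), which forces scale $\eps/k$ in every factor; that is where the $k$ inside $(1+Ck/\eps)^k$ comes from. You observe instead that in the eigenbasis the operator norm of a difference is the maximum of the coordinate differences. This avoids splitting $\eps$ among the factors, so a grid in the angles at scale $\eps$ already gives $(1+2\pi/\eps)^k$. Even if one insists that net points lie in $G_H$ itself, Proposition~\ref{basic} i) only turns this into $(1+4\pi/\eps)^k$, which still fits in $(1+5\pi k/\eps)^k$.

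The gap is the final inequality in part iii). Your one-sided grid of spacing $\eps/\|H\|$ gives $2\pi\|H\|/\eps+1$ points. That is bounded by $4(2\pi+\|H\|)/\eps$ only when $(2\pi-4)\|H\|\le 8\pi-\eps$. Since $2\pi>4$, this fails for every $\eps\le 1$ once $\|H\|\ge 12$. For example, with $\|H\|=12$ and $\eps=1$ your net has $\lceil 24\pi\rceil=76$ points, while $4(2\pi+12)\approx 73.1$. The paper's own proof counts the same $2\pi\|H\|/\eps+1$ points, so it has the same mismatch and needs the same repair.

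The fix is to let each grid point cover on both sides. Because $|e^{ix}-1|=2|\sin(x/2)|<|x|$ for $x\neq 0$, one has $\|e^{itH}-e^{isH}\|<\eps$ whenever $0<|t-s|\le \eps/\|H\|$. So take the centred points $t_l=(l+\tfrac12)\delta$ with $\delta=2\eps/\|H\|$ and $0\le l<\lceil 2\pi/\delta\rceil$. These form an $\eps$-net of $G_H$ lying inside $G_H$, with $\lceil \pi\|H\|/\eps\rceil\le (\pi\|H\|+1)/\eps$ elements (a single point if $H=0$). Now $\pi<4$ gives $(\pi\|H\|+1)/\eps\le 4(2\pi+\|H\|)/\eps$, as claimed.
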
 

\begin{proof} For the proof of i) we diagonlize $H$ and then $e^{itH}$ belongs to the unit ball of $\ell_{\infty}^{2^n}$. For every complex Banach space $N(B_{X},\eps)\le (1+\frac{2}{\eps})^{2{\rm dim_{\cz} X}}$, see \cite{PSV}. For $dim_{\cz}\ell_{\infty}^{2^n}=2^n$, we 
deduce the assertion. For ii), let us first consider the orbit $G_H =\{e^{itH}|t\in \rz\}$ where $H=\la f$ is given by a single 
eigenvalue. Then $G_H = \{e^{itH}| 0\le t\le \frac{2\pi}{\la}\}$ because $e^{\frac{2\pi}{\la}H}=(1-f)+e^{2\pi i}f=1$ is the
corresponding period. We will use the bounds
 \[ 
 \|e^{itH}-e^{isH}\|\kl |t-s| \|H\|\kl |t-s||\la| \pl .
 \] 
Let us choose discrete time steps $t_j=\frac{j}{N}$ running from $j=1$ to $j=j_{\max}$ s.t. $j_{\max}-1\kl N\frac{2\pi}{|\la|}\kl j_{max}$. We find an $\eps=\frac{|\la|}{N}$ net of cardinality
 \[ j_{max}\kl N\frac{2\pi}{|\la|}+1 \lel \frac{2\pi}{\eps}+1 \pl .\] 
Adjusting for the correct $\eps$, we need another factor $2$. This also works for $\eps>|\la|$. To complete the proof of ii) we note that the Lie algebra   $\rz H\subset \rz f_1+\cdots +\rz f_k$ is contained in  commuting pieces. Then we use Proposition \ref{basic}iv). 

To prove the third  assertion iii), we note that $S=\{e^{itH}|0\le t\le 2\pi\}$. The net $e^{\frac{2\pi i j}{N}H}$ gives us an $\eps$-net for $\eps=\frac{2\pi\|H\|}{N}$ of cardinality $\frac{2\pi\|H\|}{\eps}+1$. 
\end{proof} 
We will frequently use the following elementary estimate: 
\begin{lemma}\label{ylny} Let $e\le x\kl K y\ln y$. Then
 \[ \frac{x}{K\ln x} \kl y \pl. \]
\end{lemma}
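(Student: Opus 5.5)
The plan is to argue by contraposition. Suppose $y<\frac{x}{K\ln x}$; we derive a contradiction with $x\le Ky\ln y$. Before doing so, note what the hypothesis implies: since $x\ge e>0$ and $x\le Ky\ln y$, we have $Ky\ln y>0$. Assuming $K>0$ and $y>0$ (implicit in the intended applications), this forces $\ln y>0$, i.e.\ $y>1$.

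The key tool is the monotonicity of $f(t)=t\ln t$. Its derivative is $f'(t)=\ln t+1$, which is positive for $t>1/e$, so $f$ is increasing on $[1,\infty)$. Since $y>1$, if $y<\frac{x}{K\ln x}$ then also $\frac{x}{K\ln x}>1$, and both points lie in the region where $f$ increases. Hence
\[ x\le Ky\ln y < K\,\frac{x}{K\ln x}\,\ln\frac{x}{K\ln x} = \frac{x}{\ln x}\bigl(\ln x-\ln(K\ln x)\bigr). \]
Dividing by $x/\ln x>0$ gives $\ln x<\ln x-\ln(K\ln x)$, that is, $\ln(K\ln x)<0$, or $K\ln x<1$.

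Now $K\ln x<1$ means $\frac{x}{K\ln x}>x$, and so $y\ge\frac{x}{K\ln x}$ is not automatic in this case. The contradiction has to come from elsewhere. The case is excluded as soon as $K\ge 1$, because then $K\ln x\ge\ln x\ge 1$ for $x\ge e$. I therefore intend to state, or assume from context, that $K\ge1$; this is harmless, since the constants in the covering-number applications are large. With $K\ge1$, the final step can be done directly instead of by contradiction:
\[ y\ln y\ge \frac{x}{K}\ \text{ and } \ \ln y\le \ln x \ \Longrightarrow\ y\ge\frac{x}{K\ln y}\ge\frac{x}{K\ln x}. \]
The direct route needs $\ln y\le\ln x$, that is, $y\le x$. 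When $y>x$, however, the claim $y\ge x/(K\ln x)$ holds trivially, because $K\ln x\ge1$. So the proof splits into two cases: if $y\ge x$ we are done trivially, and if $y<x$ we have $\ln y<\ln x$ and the displayed chain applies.

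The main obstacle is pinning down the implicit side conditions, namely $K\ge 1$ and $y>0$. Once these are granted, the two-case argument above is the whole proof.
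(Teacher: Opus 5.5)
Your proof is correct. Your first, contrapositive argument is essentially the paper's. The paper uses that $g(y)=Ky\ln y$ is increasing for $y\ge e^{-1}$, excludes small $y$ because $g(y)<0$ there, and concludes from $g\bigl(\frac{x}{K\ln x}\bigr)\le x\le g(y)$. When computing $g\bigl(\frac{x}{K\ln x}\bigr)$, the paper writes $\ln\frac{x}{K\ln x}$ as $\ln x-\ln\ln x$, silently dropping the term $-\ln K$. That is only an upper bound when $K\ge 1$, so the paper's proof relies on exactly the side condition you isolated, without stating it.

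You are also right that the condition is genuinely needed and is not an artifact of the method. For $K=1/100$, $x=e$, $y=100$ one has $Ky\ln y=\ln 100>e$, yet $\frac{x}{K\ln x}=100e>y$.

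Your final two-case argument is a more elementary route than the paper's. It dispenses with the monotonicity of $t\ln t$ altogether, and it shows exactly where $K\ge 1$ enters: only in the trivial case $y\ge x$. For $y<x$, the chain $y\ge \frac{x}{K\ln y}\ge\frac{x}{K\ln x}$ needs nothing beyond $K>0$ and $1<y<x$. The paper's monotonicity argument, by contrast, hides the condition inside the evaluation of $g$.
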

\begin{proof} Note that the function $g(y)=Ky\ln y$ is monotonic for $y\gl e^{-1}$. And for $y\le e^{-1}$ we see that $g(y)<0$ contradicting the assumption.    
Let $x\gl e$, then 
 \[ g(\frac{x}{K \ln x})\lel \frac{Kx}{K \ln x} (\ln x-
 \ln \ln x)\kl x \kl g(y) \]
implies $\frac{x}{K\ln x}\le y$.     
\end{proof}

\begin{theorem}\label{entropy} Let $n\gl 2$. Let $c$ and $C$ be given by Szarek's theorem bounding the covering numbers for $U(2^n)$, see Theorem \ref{Szarek}. Let $G_0\subset U(2^n)$ be a Lie group of dimension $d$ with covering numbers
 \[ N(G_0,\delta) \kl (\frac{Ca}{\delta})^d, \]
and $H$ be a magic Hamiltonian. Let $\eps<\frac{1}{2c}$ with $c$ given by Szarek's theorem (for all $U(2^n)$). Then   
 \begin{enumerate}
 \item[i)] If $H$ has integer spectrum, then there exists a constant $C_1(\eps)$ such that 
  \[ \frac{4^n}{n} \kl C_1(\eps) ~\max \{\ln (1+\|H\|),d(1+\ln a)\}~ l_H(\eps) \pl .\] 
 \item[ii)] If $H$ has rank $k$, then 
   \[ \frac{4^n}{n} \kl C_2(\eps) ~\max \{k(1+\ln k),d(1+\ln a) \}~ l_H(\eps) \pl .\] 
 \item[iii)] For arbitrary $H$ 
 \[ 4^{n}  \kl C_3(\eps)(d(n+ \ln a) + 2^nn) l_H(\eps) \pl. \]
\end{enumerate} 
Here $C_j(\eps)$ are absolute constants. 
\end{theorem}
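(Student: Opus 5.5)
The argument is a counting (volume) comparison. Let $m=l_H(\eps)$. By Definition \ref{def:alt_approx_length}, every $u\in U(2^n)$ lies within $\eps$ of a product $u_1u_2\cdots u_{2m+1}$ with $u_{2j+1}\in G_0$ and $u_{2j}\in G_H$. Thus the image of the multiplication map
\[ G_0\times G_H\times G_0\times\cdots\times G_0 \to U(2^n), \]
with $m+1$ copies of $G_0$ and $m$ copies of $G_H$, is $\eps$-dense in $U(2^n)$.

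\textbf{Step 1: build an explicit net.} Fix $\delta=\eps/(2m+1)$. Choose $\delta$-nets $S_0\subset G_0$ and $S_H\subset G_H$. As in the proof of Proposition \ref{basic} iv), unitary invariance of the operator norm telescopes the error:
\[ \|u_1\cdots u_{2m+1}-s_1\cdots s_{2m+1}\|\le\sum_j\|u_j-s_j\|. \]
Hence the products $s_1\cdots s_{2m+1}$ form a $2\eps$-net of $U(2^n)$, and
\[ N(U(2^n),2\eps)\le N(G_0,\delta)^{m+1}N(G_H,\delta)^m. \]

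\textbf{Step 2: take logarithms.} Szarek's lower bound (Theorem \ref{Szarek}) gives $N(U(2^n),2\eps)\ge (c/2\eps)^{4^n}$. Since $\eps<1/(2c)$ is assumed, $\ln(c/2\eps)$ is a positive constant depending only on $\eps$; the hypothesis presumably means $c/(2\eps)>1$. Taking logarithms,
\[ 4^n\ln\frac{c}{2\eps}\le (m+1)\,d\ln\frac{Ca(2m+1)}{\eps}+m\ln N\Big(G_H,\frac{\eps}{2m+1}\Big). \]
Now plug in Lemma \ref{lem:covering_exp(tH)}:
\begin{itemize}
\item[(iii)] for integer spectrum, $\ln N(G_H,\delta)\lesssim \ln(1+\|H\|)+\ln(m/\eps)$;
\item[(ii)] for rank $k$, $\ln N(G_H,\delta)\lesssim k\ln(1+km/\eps)$;
\item[(i)] for arbitrary $H$, $\ln N(G_H,\delta)\le 2^{n+1}\ln(C(2m+1)/\eps)$.
\end{itemize}
In each case the right-hand side has the form $K\,m\ln m$ plus lower-order terms. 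In case i) the constant is $K\simeq C(\eps)\max\{\ln(1+\|H\|),d(1+\ln a)\}$, and in case ii) $\ln a$ is replaced by $k(1+\ln k)$.

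\textbf{Step 3: remove the logarithm in $m$ (the main obstacle).} The estimate from Step 2 reads $4^n\lesssim K\,m\ln m$. Applying Lemma \ref{ylny} with $x=4^n$ and $y=m$ gives $m\ge 4^n/(K\ln 4^n)$. This is exactly the $4^n/n$ on the left-hand side of i) and ii), since $\ln 4^n\simeq n$.

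Before applying Lemma \ref{ylny}, the mixed terms such as $d\ln a+d\ln m$ must be absorbed into $K\ln m$. This needs $m\ge 2$, say, and uses $\ln(1+\|H\|)\le\ln(1+\|H\|)\ln m$-type bounds. The case $m$ small is handled separately: Proposition \ref{finite} guarantees $m$ is finite and at least $1$, and for bounded $m$ the inequality is trivial after enlarging the constants. The case $l_H(\eps)=0$ cannot occur because $G_0\ne U(2^n)$, since $G_0$ has dimension $d<4^n-1$ in the intended cases; otherwise the bound would fail for $d$ large.

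Case iii) is handled differently. The $2^n$ factor dominates and we do not divide by $n$; instead we keep $\ln m$ bounded. If $m\ge e^{n}$ the claim is immediate. Otherwise $\ln(2m+1)\lesssim n$, so $\ln N\lesssim d(n+\ln a)+2^n n$, which yields $4^n\le C_3(\eps)\,(d(n+\ln a)+2^nn)\,m$.

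I expect the main difficulty to be bookkeeping the $\ln m$ terms uniformly in all cases so that the constants $C_j(\eps)$ depend only on $\eps$.
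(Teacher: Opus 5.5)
Your proposal is correct and follows essentially the paper's proof. It uses the product of $\delta$-nets with $\delta\approx\eps/(2m+1)$, Szarek's lower bound (Theorem \ref{Szarek}) for $U(2^n)$, the three covering bounds for $G_H$ from Lemma \ref{lem:covering_exp(tH)}, and Lemma \ref{ylny} to invert $4^n\lesssim K\,m\ln m$.

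Your deviations are minor and if anything cleaner:
\begin{itemize}
\item You run the argument directly at $l_H(\eps)$ with a $2\eps$-net. The paper uses $m(\eps/2)$, which strictly bounds $l_H(\eps/2)$.
\item For iii) you use the threshold $m\ge e^n$ instead of Lemma \ref{ylny}.
\item You flag the $m=0$ edge case, which the paper passes over.
\end{itemize}
One small slip: in case ii) it is $\ln(1+\|H\|)$, not $\ln a$, that gets replaced by $k(1+\ln k)$.
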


\begin{proof} Let $\eps<\frac{1}{2c}$, where $c$ is the constant of Szarek's entropy result. Then we can find $m=m(\frac{\eps}{2})$ such that for all $u \in U(2^n)$, there exist $u_{2j+1} \in G_0$ and $u_{2j} \in \{ e^{it H} ~:~ t \in \R \}$ such that
\[ 
    \|u-u_1\cdots u_{2m+1}\|<\frac{\eps}{2} \pl
\] 
holds.
For every $u_{2j+1}$ we can find some $\hat{u}_{2j+1}\in S$ in a set of cardinality $|S|\le N(G_0,\frac{\eps}{2m})$ such that 
 \[ \|u_{2j+1}-\hat{u}_{2j+1}\|\kl \frac{\eps}{2(2m+1)} \pl .\] 
Similarly, we can find a subset $\tilde{S}\subset \rz$ such that $\|u_{2j}-e^{isH}\|<\frac{\eps}{2(2m+1)}$ for $s\in \tilde{S}$ and $|\tilde{S}|\le N(G_H,\frac{\eps}{2(2m+1)})$.  
Since all of these are unitaries, we deduce that 
 \[ \|u-\hat{u}_1\cdots \hat{u}_{2m+1}\|< \frac{\eps}{2} \sum_{j=1}^{2m+1}\|u_{j}-\hat{u}_j\| \kl \eps \pl .\] 
 This implies
\[ 
    \Big(\frac{c}{\eps} \Big)^{2^{2n}-1} \leq N(U(2^n),\eps)\kl N\Big(G_0,\frac{\eps}{2(2m+1)} \Big)^{m+1}  N \Big(G_H,\frac{\eps}{2(2m+1)} \Big)^{m}.
\]
Now we take the logarithm and deduce that  
 \begin{align*}
     (2^{2n}-1)\ln \frac{c}{\eps} & \kl (m+1)\ln  N\big(G_0,\frac{\eps}{2(2m+1)} \big)+m \ln  N\big(G_H,\frac{\eps}{2(2m+1)} \big) \\
     &\le 
     (2m+1) \max \Big\{\ \ln  N \big(G_0,\frac{\eps}{2(2m+1)} \big),~\ln   N \big(G_H,\frac{\eps}{2(2m+1)} \big) \Big\} \\
     &\le (2m+1) \max \Big\{ d \ln  \frac{Ca}{\eps} + d \ln (2m+1),~ \ln N \big(G_H,\frac{\eps}{2(2m+1)} \big) \Big\}. 
     \end{align*}
Now, we use the different assumptions for the estimate of $N(G_H)$. For arbitrary $H$, we have by Lemma \ref{lem:covering_exp(tH)}iii) that  
\[ 
    \ln N(G_H,\frac{\eps}{2(m+1)}) \kl 2^{n+1} \big(\ln \frac{C}{\eps}+\ln (2m+1) \big)  \pl .
\]
This leaves us with four cases:
  \begin{enumerate}
      \item[i)] $(2^{2n}-1)\ln \frac{c}{\eps} \kl  4 d(2m+1)\ln \frac{Ca}{\eps}$;
      \item[ii)] $(2^{2n}-1)\ln \frac{c}{\eps}\kl 4 d(2m+1)\ln (2m+1)$;
      \item[iii)]  $(2^{2n}-1)\ln \frac{c}{\eps}\kl 4 \cdot  2^{n+1} (2m+1) \ln \frac{C}{\eps}$;
      \item[iv)] $(2^{2n}-1)\ln \frac{c}{\eps}\kl 4  \cdot 2^{n+1} (2m+1)\ln (2m+1)$.
      \end{enumerate}
We use Lemma \ref{ylny} and deduce the assertion.  If $H$ has rank $k$, we have
 \[ \ln N(G_H,\frac{\eps}{2(m+1)}) \kl  k \ln (1+\frac{Ck(2m+1)}{\eps}) 
 \kl k (1+\ln \frac{Ck}{\eps})(1+\ln (2m+1)) \pl. \]
This means the cases iii) and iv) are replaced by better estimates and yield ii). If we have integer spectrum, we get 
 \[ \ln N(G_H)\kl  \ln 4+ \ln \frac{1}{\eps}+\ln (2\pi +\|H\|) \pl .\]
 Replacing again iii) and iv) by better alternatives, we deduce the assertion. \end{proof}

In the next result we show that the support function $l_H^0$ can be replaced by other functionals.

 \begin{prop}\label{mp} Either $l_H(\frac{\eps}{2p})=0$, or 
 \[ l_H(\eps)^{\frac{1-p}{p}} \kl \frac{p}{1-p} \frac{Cl_H^p(\eps/2)\|H\|}{\eps}  \pl, \]
 holds for an absolute constant $C$.   
\end{prop}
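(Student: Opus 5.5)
The plan is a thresholding argument: start from a near-optimal configuration for the $\ell^p$-functional $l_H^p(\eps/2)$ and discard every application $e^{it_jH}$ whose time $|t_j|$ is below a cutoff $\tau$. Discarding is cheap because $\|e^{it_jH}-1\|\le |t_j|\,\|H\|$. It also shortens the alternation: when a factor $u_{2j}$ is replaced by $1$, its neighbours $u_{2j-1}u_{2j+1}$ merge into a single element of $G_0$. The cardinality of the surviving support then bounds $l_H(\eps)$. I am implicitly taking $0<p<1$, which the exponent $\frac{1-p}{p}$ suggests.

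\textbf{Step 1 (near-optimal configuration).} Set $L=l_H^p(\eps/2)$. First dispose of the degenerate alternative. If $L=0$, or if the optimal configuration uses no $H$-evolution at all, then every $u$ is approximated within $G_0$ alone. This is what the first alternative ``$l_H(\cdot)=0$'' records, and the shifted accuracy $\frac{\eps}{2p}$ absorbs the slack in the infimum. Otherwise, for every $u$ fix a configuration with
\[ \|u-u_1\cdots u_{2m+1}\|<\eps/2, \qquad \Big(\sum_j |t_j|^p\Big)^{1/p}\le 2L. \]
The factor $2$ is slack for the infimum and disappears into $C$.

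\textbf{Step 2 (threshold and telescoping).} Let $J=\{j:|t_j|>\tau\}$. By Chebyshev, $|J|\le (2L)^p\tau^{-p}$. Replace $u_{2j}$ by $1$ for every $j\notin J$. The telescoping estimate for products of unitaries, as in the proof of Proposition \ref{basic} iv), bounds the additional error by
\[ \sum_{j\notin J}\|e^{it_jH}-1\| \le \|H\|\sum_{|t_j|\le\tau}|t_j| \le \|H\|\,\tau^{1-p}\sum_j|t_j|^p \le \|H\|\,\tau^{1-p}(2L)^p. \]
The middle inequality holds because $|t_j|=|t_j|^p|t_j|^{1-p}\le \tau^{1-p}|t_j|^p$ when $|t_j|\le\tau$. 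After merging adjacent $G_0$-factors, the product is again of the alternating form, now with at most $|J|$ applications of $H$.

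\textbf{Step 3 (choice of $\tau$).} Choose $\tau$ so that $\|H\|\tau^{1-p}(2L)^p=\eps/2$, which makes the total error less than $\eps$. This gives
\[ l_H(\eps)\le (2L)^p\tau^{-p}=(2L)^p\Big(\frac{2\|H\|(2L)^p}{\eps}\Big)^{\frac{p}{1-p}}. \]
Raising both sides to the power $\frac{1-p}{p}$ yields $l_H(\eps)^{\frac{1-p}{p}}\le \frac{C\|H\|L}{\eps}$.

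\textbf{Main obstacle.} The only delicate point is bookkeeping. Cardinalities must be integers, so I would take $|J|\le\lceil\cdot\rceil$. The infimum may not be attained, and $\tau$ may exceed all $|t_j|$. Each of these costs only constants, and I expect the factor $\frac{p}{1-p}$ in the statement to be enough room to absorb them, for instance through the ceiling or a dyadic version of the Chebyshev count. The case where every $t_j$ falls below $\tau$ is precisely the alternative in which $l_H$ vanishes at the smaller accuracy.
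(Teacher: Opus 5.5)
Your thresholding argument is correct as far as it goes, but it proves a different inequality from the one stated. Carried out exactly (with your factor-$2$ slack), Steps 2--3 give
\[ l_H(\eps)^{\frac{1-p}{p}} \le \frac{4\,\|H\|\,l_H^p(\eps/2)}{\eps} \qquad (0<p<1). \]
There is no rounding loss here, because Chebyshev bounds the integer $|J|$ directly, and the first alternative is never used.

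The statement carries the extra factor $\frac{p}{1-p}$, and you have its role backwards. For $p<\frac12$ it is less than $1$, and it tends to $0$ as $p\to 0$, so it strengthens the bound rather than leaving room. Your bound beats the stated one as $p\to 1$. For small $p$, however, it is weaker by a factor of order $1/p$, so with an absolute $C$ it yields the proposition only for $p$ bounded away from $0$.

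No bookkeeping inside a fixed-threshold argument recovers this. To control the discarded mass for every admissible $t$, you must allow the whole $\ell^p$ budget to sit at level $\tau$, which forces $\|H\|\tau^{1-p}\|t\|_p^p\le\frac{\eps}{2}$. You must also allow it all to sit just above $\tau$, which makes $|J|$ as large as $\|t\|_p^p\tau^{-p}$. Your bound is exactly what these two worst cases give.

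The missing idea, which is what the paper does, is to select the kept times by rank rather than by level. Order the times non-increasingly, use $t_j^*\le j^{-1/p}\|t\|_p$, keep the $k$ largest, and sum the tail:
\[ \sum_{j>k}t_j^*\le \|t\|_p\sum_{j>k}j^{-1/p}\le \frac{p}{1-p}\,\|t\|_p\,k^{1-\frac1p}. \]
This is the source of $\frac{p}{1-p}$. A variant closer to your own computation also works. With $s=t_k^*$, one has $\sum_{j>k}t_j^*\le s^{1-p}\bigl(\|t\|_p^p-ks^p\bigr)\le p\,k^{1-\frac1p}\|t\|_p$, because the $k$ kept entries already use up $ks^p$ of the budget.

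The first alternative is also not the case you name. If all $|t_j|\le\tau$, you simply get $J=\emptyset$ and $l_H(\eps)=0$. In the paper the alternative is the regime $\frac{2\|H\|\,l_H^p(\eps/2)}{\eps}\le\frac{1-p}{p}$. There every $H$-factor is dropped using $\|t\|_1\le\|t\|_p$, at total cost $\frac{\eps}{2}+\frac{\eps}{2}\cdot\frac{1-p}{p}=\frac{\eps}{2p}$. Outside that regime a suitable $k$ exists.

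One caution if you switch to rank selection: $k$ must be an integer, and after raising to the power $\frac{1-p}{p}$ a rounding loss can grow like $2^{(1-p)/p}$. The paper handles this with the sandwich $(k-1)^{\frac1p-1}(\frac1p-1)\le\frac{2\|H\|l_H^p(\eps/2)}{\eps}\le k^{\frac1p-1}(\frac1p-1)$. That works only if the tail from index $k$ on (not $k+1$) is $O\bigl(\frac{p}{1-p}k^{1-\frac1p}\bigr)$. This fails for $k$ much smaller than $1/p$, because the term $k^{-1/p}$ is not absorbed. So for very small $p$ the constant is not genuinely absolute even along the paper's route.
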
 

\begin{proof} Let us assume that for every $u$ we find an approximation as above with $m_p(\eps/4)\kl M$. Let $t_j^*$ be the largest $t_j$ that occurs. Let us choose a subset $I$ such that  
 \[ \sum_{j\notin I} |t_j|<\frac{\eps}{2\|H\|}. \] 
By replacing all the $e^{t_jH}$ with $j\notin I$ by $1$, we find
 \[ \|u_1\cdots u_{2m+1}-\hat{u_1}\cdots \hat{u}_{2m+1}\|
 \kl \sum_{j\in I} \|u_{2j}-\hat{u}_{2j}\| \kl \frac{\eps}{2} \pl .\] 
Since we are looking at an alternating product, for $j\notin I$ we see that $u_{2j-1}u_{2j+1}\in G$. Therefore, we effectively have only $j\in I$ alternating elements. 
 \[ l_{H}(\eps) \kl \sup_u |I(u)| \pl .\] 
Now assume that $p<1$ and $l_{H}^p(\eps/2)<
\infty$. Then
 \[ j^{1/p}t_j^* \kl \|t\|_p \kl l_{H}^p(\eps/2)  \] 
holds for the non-increasing rearrangement. Then note that 
 \[ \sum_{j\gl k} t_j^* \kl   l_{H}^p(\eps/2) \sum_{j>k} j^{-1/p} 
 \kl \frac{C}{\frac{1}{p}-1}  l_{H}^p(\eps/2) k^{1-1/p} \pl .\] 
We deduce that for $k^{1/p-1}(1/p-1)\gl \frac{2\|H\| l_H^p(\frac{\eps}{2})}{\eps}\gl (k-1)^{1/p-1}(1/p-1)$ we have
 \[ l_H(\eps) \kl k-1 \kl  \Big(\frac{p}{1-p}\Big)^{\frac{p}{1-p}} \bigg(\frac{2\|H\|l_H^p(\eps/2)}{\eps}\bigg)^{\frac{p}{1-p}}  \pl . \]    
Taking this to the power $\frac{1-p}{p}$ yields the assertion. If we find no such $k$, then $\frac{2\|H\|l_H^p(\frac{\eps}{2})}{\eps}\le 1/p-1$. This means
 \[ l_H^1(\eps/2) \kl \frac{\eps}{2\|H\|}(1/p-1), \]
and we can approximate every $u$ with an element in $G_0$ of distance $\frac{\eps}{2p}$.  
\end{proof}  

\begin{rem}
\begin{enumerate}
\item[i)] The motivation to consider $l_H^1(\eps)$ is the extensive use of the Trotter formula in the proof of Proposition \ref{finite}. The Trotterization does not change the $l^1$-norm but dramatically increases the support. At the time of this writing, no lower bound for alternating functional $l_H^1(\eps)$ is known and this remains an open problem. 
\item[ii)] The proof above also shows that the number of essential peaks
 \[ l_H^*(\eps) \lel \inf \Big\{|I| | \sum_{j\notin I} |t_j|\kl \frac{\eps}{2\|H\|}  \Big\} 
 \] 
satisfies $l_H(\eps)\kl l_H^*(\eps)$ and hence also is usually exponential. 
\end{enumerate}   
\end{rem} 

\begin{theorem}\label{thm:exponential_m} For all magic Hamiltonians in Theorem \ref{themagic}, namely cases 1), 2), 4), 5), 7), 8), 9), the alternating length is exponential. 
\end{theorem}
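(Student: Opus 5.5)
The plan is to apply Theorem~\ref{entropy} case by case. For each entry we need three things: an upper bound $N(G_0,\delta)\le (Ca/\delta)^d$ for the local group $G_0=G(S)$ with $d$ and $a$ polynomial in $n$, and a control on the one-parameter group $G_H$. Once these are in hand, the relevant alternative of Theorem~\ref{entropy} gives $l_H(\eps)\gtrsim 4^n/\mathrm{poly}(n)$ in cases i) and ii). In case iii) it gives $l_H(\eps)\gtrsim 4^n/(\mathrm{poly}(n)+2^n n)\ge 2^n/n$. Either way the bound is exponential. Remark~\ref{indep} tells us the choice of a fixed $\eps<\frac{1}{2c}$ does not matter.

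\textbf{Step 1: covering numbers of the local groups.}
\begin{itemize}
\item Case 1), $S=\cX(2)\cup\cZ(1)$: Lemma~\ref{covspin} iii) gives $d=2n(2n-1)$ and $a=40n$.
\item Case 2), $S=\{X_1\}\cup\cX(2)\cup\cZ(1)$: Lemma~\ref{covspin} iv) gives $d=(2n+2)(2n+1)$.
\item Cases 4) and 5), $S=\{X_1,Z_1Z_2\}\cup\cZ(1)$: Lemma~\ref{prod} gives $d=2n+12$ and $a=4$.
\item Case 8), $S=\{X_1\}\cup\cZ(2)$: $\cZ(2)$ generates a commuting torus $\mathbb T^{n-1}$. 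The element $X_1$ anticommutes only with $Z_1Z_2$, so $G(S)$ sits inside a product of a copy of $SU(2)$, generated by $X_1$ and $Z_1Z_2$, with the torus generated by the remaining $Z_kZ_{k+1}$. Proposition~\ref{basic} ii), iv) and Theorem~\ref{Szarek} then give $d=O(n)$ and $a=O(n)$.
\item Case 9), $S=\{X_1,Y_1\}\cup\cZ(2)$: the same argument applies, now with the factor $G(X_1,Y_1,Z_1Z_2)$. By Lemma~\ref{lem:PauLie_homomorphism} this factor is PauLie-isomorphic to a small full group, a copy of $SU(2)$ or $SU(4)$ on at most two qubits.
\item Case 7): $S$ is contained in $\{X_1,Z_1,X_1X_2X_3X_4,Z_1Z_2Z_3Z_4\}\cup\cZ(2)$. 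Every element except the first four acts only through $Z$'s, so $G(S)$ is contained in the product of $U(2^4)$ on qubits $1$–$4$ with the torus generated by $Z_kZ_{k+1}$ for $k\ge 4$. Proposition~\ref{basic} i), iv) then gives $d=O(n)$ and $a=O(n)$.
\end{itemize}
In every case the metric-preservation statement of Section~\ref{sec:math_prelim} guarantees that passing to PauLie images does not change covering numbers.

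\textbf{Step 2: the magic Hamiltonians.}
\begin{itemize}
\item $H=X_1+Z_1Z_2$ (case 1) and $H=Z_1Z_2$ (case 2) act on two qubits. Their spectra are finite with $O(1)$ distinct eigenvalues. For $Z_1Z_2$ the spectrum is integer, so Theorem~\ref{entropy} i) applies. For $X_1+Z_1Z_2$ the group $G_H$ is a one-parameter group in a $4$-dimensional commutative algebra, and the argument of Lemma~\ref{lem:covering_exp(tH)} ii), applied to the spectral projections, gives effective "rank" $k=O(1)$.
\item $H=\sum_j X_jX_{j+1}$ (cases 4 and 6): this is a sum of commuting operators with integer spectrum, so its spectrum is integer and $\|H\|\le n$. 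Theorem~\ref{entropy} i) gives $4^n/n\lesssim \mathrm{poly}(n)\, l_H$.
\item $H=\sum_j X_jX_{j+1}+\sum_j Z_j$ (case 7) and $H=J_{\gamma,\lambda}$ with generic parameters (cases 5, 8, 9): integer spectrum fails, and the rank is $2^n$. Here I fall back on Theorem~\ref{entropy} iii), which gives $4^n\le C_3(\eps)\bigl(\mathrm{poly}(n)+2^n n\bigr)l_H(\eps)$, hence $l_H(\eps)\gtrsim 2^n/n$. This is still exponential.
\end{itemize}

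\textbf{Main obstacle.} I expect Step~1 to be the delicate part, specifically making sure $G_0$ genuinely has polynomial covering exponent rather than merely a polynomial-dimensional Lie algebra. Remark~\ref{Lipschitz} warns that only lower bounds are automatic. For each case I will therefore exhibit $G_0$ explicitly as a subgroup of a product of a full group of bounded size, a spin group, and a torus. Proposition~\ref{basic} i) then transfers the upper bound to the subgroup at the cost of $\eps\mapsto\eps/2$.

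For the spin case the non-trivial input is Lemma~\ref{covspin} i), the Lipschitz bound of $\pi$ with constant $O(m)$. The remaining care is in case 7), which should be checked by listing the anticommutation graph of $S$. If that containment fails, the fallback is the PauLie reduction used in the proof of Theorem~\ref{themagic}, which maps $S$ to a set generating $\mathfrak{so}$-type algebras of polynomial dimension.
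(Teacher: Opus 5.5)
\textbf{Overall.} You follow the paper's route. You apply Theorem~\ref{entropy} case by case. For the local groups you use the Jordan--Wigner spin-group bound (Lemma~\ref{covspin}) in cases 1), 2), and a bounded-size full group times a torus elsewhere. For $G_H$ you use the integer-spectrum or two-qubit estimate where available and the general estimate otherwise. Cases 1), 2), 4), 5), 8), 9) go through as you describe. In case 5) your fallback to Theorem~\ref{entropy} iii) is more careful than the paper, which asserts integer spectrum for $J_{\gamma,\la}$; the resulting $2^n/n$ is still exponential.

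\textbf{The gap is your decomposition in case 7).} The generator $Z_4Z_5\in\cZ(2)$ anticommutes with $X_1X_2X_3X_4$: their supports meet only at qubit $4$, where one carries $Z$ and the other $X$. So the torus generated by $Z_kZ_{k+1}$, $k\ge 4$, does not commute with the block on qubits $1$--$4$, and Proposition~\ref{basic} iv), which requires commuting sectors, does not apply. The containment itself is false. Since $[X_1X_2X_3X_4,Z_4Z_5]\simeq X_1X_2X_3Y_4Z_5$, the group $G(S)$ contains $e^{i\tau X_1X_2X_3Y_4Z_5}$. For $\tau\notin\frac{\pi}{2}\zz$ this is not of the form $(u\ten I)t$ with $u\in U(2^4)$ on qubits $1$--$4$ and $t$ in that torus. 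Comparing the $z_5=0$ and $z_5=1$ blocks would force $e^{2i\tau X_1X_2X_3Y_4}$ to be diagonal. The fact that every element beyond the first four ``acts only through $Z$'s'' is not the relevant criterion; you need the two generator sets to commute.

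\textbf{The fix} is the cut the paper uses: move $Z_4Z_5$ into the local block. Then $\{X_1,Z_1,X_1X_2X_3X_4,Z_1Z_2Z_3Z_4,Z_1Z_2,Z_2Z_3,Z_3Z_4,Z_4Z_5\}$ lives on qubits $1$--$5$, and $\{Z_5Z_6,\dots,Z_{n-1}Z_n\}$ genuinely commutes with it. Hence $G(S)\subset U(2^5)\times\mathbb{T}^{n-5}$ with commuting factors. Theorem~\ref{Szarek} for $U(32)$ together with Proposition~\ref{basic} gives $d=O(n)$ with $a$ at most polynomial. Theorem~\ref{entropy} iii) then yields $l_H(\eps)\gtrsim 2^n/n$. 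The $\mathfrak{so}$-type PauLie fallback you mention is not needed and does not match the structure of this $S$.
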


\begin{proof} For 1), 2) we use the Jordan-Wigner transform; more precisely, Lemma \ref{covspin} iii) for $m=2n$ or $m=2(n+1)$. This leads to a constant $n\ln n$ coming from local  Lie group $G_0$ generated by $S_0$. On the other hand, the Hamiltonian can be considered in $\mz_4$ embedding isometrically into $\mz_{2^n}$. Therefore, we obtain a lower bound $l_H(\eps)\gl c(\eps) \frac{4^n}{n^2\ln n}$. 

For 4) and 5) we have a very small local Lie group $G_0$, either contained in $\mathbb{T}^n$ or $SU(4)\times \mathbb{T}^{n-2}$. The Hamiltonian has integer spectrum and $\|H\|\le n$, hence $l_H(\eps)\gl c_{\eps}\frac{4^n}{n}$. 

In 8) and 9) we have again small local Lie algebras. Indeed, we can work with the Lie algebra $g_{0,1}=\langle \mathfrak{su}(2),Z_1Z_2\rangle$ which commutes with $g_{0,2}=\langle Z_{2}Z_3,...,Z_{n-1}Z_n\rangle$. Therefore the Lie group is contained in the product 
 \[ G_{0,1}\times G_{0,2} \subset SU(4)\times \mathbb{T}^{\lfloor n/2 \rfloor} \pl. \]
This will give a term of order $n$. However, since we have no additional benefit for the Hamiltonian, we deduce a lower bound $c_{\eps}\frac{2^n}{n}$. 

In number 7) we split the generators into two commuting sets
\[  S_{lloc}=\{X_1,Z_1,X_1X_2X_3X_4, Z_1Z_2Z_3Z_4,Z_1Z_2,Z_2Z_3,Z_4Z_5\}  , 
T=\{Z_5Z_6,Z_6Z_7,...,Z_{n-1}Z_n\}\pl .\] 
In our PauLie map we can replace $Z_5Z_6$ by $Z_6$. In the new picture, we find a product group.  $SU(2^5)\times \mathbb{T}^{n-6}$ and ransport the covering number estimates back. Therefore we can combine Proposition \ref{covspin}ii) and iv) with Szareks results Theorem  \ref{Szarek} and find 
 \[ N(G,\eps)\kl (\frac{C}{\eps})^{2^{5+1}+2(n-6)} \pl. \]
We use Lemma \ref{basic} and obtain $a=1$ and $d=54+2n$. Again, we have to accept the general Hamiltonian estimate and get $c(\eps)\frac{2^n}{n}$ as a lower bound. 
\end{proof}

\begin{rem} The case 5) with $H=\sum_j X_jX_{j+1}$, i.e. $\gamma=1$, $\la=0$ is still accessible with our methods. Indeed for even $n$ we generate a copy of $G=SU(2^{n/2})\times SU(2^{n/2})$. Szarek's argument is easily seen to remain valid for a product of two groups. Therefore, we can use 
 \[ (\frac{c}{\eps})^{2(4^{n/2}-1)} \kl N(G,\eps) \]
in the proof of Theorem \ref{entropy}. The good news is that we have $a=1$ and $d=n/2$ and $\|H\|=n$ for our local estimate and Hamiltonian. Thus, we get a lower bound of $l_H(\eps)\gl c(\eps) \frac{2^n}{n}$ to approximate every unitary in the smaller group $G$ (not $SU(2^n)$). We note that the difference between 5) and 6) corresponds exactly to the predicted phase transition at $\gamma=1$ fixed  and  $\la=0$ versus $\la\neq 0$ from operator algebra theory, see Appendix \ref{app:operator_algebra}. 
On the other hand 7) and 8) correspond to phase transitions in both parameters.
\end{rem}

\begin{rem} It remains completely open whether these lower bounds are sharp. An upper bound for $e^{tV}$, with $V$ is a Pauli string, can be obtained by applying our algorithmic formula for iterated commutators together with Rashevskii's technique. Even for small $t^{1/k}$, the number of terms in these formula grows as $l_{j+1}=2+2l_j$ and hence at least exponentially ($2^k$) in the number $k$ of iterated commutators. This can then be combined with Trotterization, see \ref{finite}.  This method will require at least $2^{n^3}$ many H\"ormander generators and the algorithm will tell us how many of them are from the magic Hamiltonian.      
\end{rem}

\section{Conclusion and Discussion}\label{sec:conclusion}
In this paper we have shown that Hamiltonian magic is possible: A small set of local generators can be combined with well-known Hamiltonians such as the Ising model to generate the full Lie algebra, and ultimately obtain control of the full unitary group. Both procedures are algorithmic, and in our example the Lie algebra generation is of order poly$(n)$. 
As discussed in Appendix \ref{app:operator_algebra}, these choices of parameters for the Ising model miraculously coincide with the phase transitions for the number of extreme points in the ground state for the string-net model.  However, full control over the unitary group may not be cheap, as shown by the exponential lower bounds for the alternating length.  

What does this imply for experimental quantum platforms? Let's compare our magic Hamiltonian-based regime for computation with the standard model of quantum computation. In the standard regime, one has sufficient power (perhaps as a result of successful error-correction schemes) to already accurately address each individual qubit without recourse to a magic Hamiltonian. It seems clear that the natural way to create a desired unitary here is just to use the canonical universal gate set approach. We remind the reader, however, that in physical systems these gates are in fact attained by evolution with a full Hamiltonian, which in practice contains additional undesirable interactions. Our results show how even small pulses can be spread by this additional interaction to create errors across the entire system. It would be interesting to explicitly reformulate our results in the language of noise channels and error propagation.

In our alternative magic Hamiltonian-based regime, one instead only has full control over a small number of qubits in an experimental system, plus some more limited ability to directly affect the remainder of the system, for example only by Pauli $Z$ operations which all commute. What our results show is that, with the added power of a magic Hamiltonian, one can still do arbitrary computations on the full system even without the ability to directly control most of the qubits in it. The cost is of course that the time (number of Hamiltonian applications) to create a unitary will typically be exponential. Nevertheless, if one was previously under the impression that computations on these systems were not possible at all, the ability to do computations, and implement long-range entanglement, in exponential time is still interesting.
More generally, one could imagine a full spectrum of tradeoffs between the amount of control over the system and the time needed to perform a computation, and depending on the application it might not always be desirable to live at an extreme.

\bibliographystyle{unsrt}
\bibliography{References.bib}

\appendix
\section{Explicit mapping from \texorpdfstring{$S_i$}{S} to \texorpdfstring{$T_i$}{T}}\label{app:explicit_map_S_T}
In Prop.~\ref{prop:double-double}, we use the properties of PauLie maps to find the resulting Lie Algebra without needing an explicit map on all Pauli strings. Here we do give the explicit map mapping the generators in $S_1$ to $T_1$ and in $S_2$ to $T_2$, given by
\begin{equation}
\begin{aligned}
    S_1 &= \{X_1 X_2, Y_2 Y_3, \cdots Y_{n-2}Y_{n-1},  X_{n-1} X_n\} \\
    S_2 &= \{Y_1 Y_2, X_2 X_3, \cdots X_{n-2}X_{n-1},  Y_{n-1} Y_n\} \\
    T_1 &= \{X_1, Y_1 Y_2, X_2, \cdots Y_{n/2-1} Y_{n/2} X_{n/2} \} \\
    T_2 &= \{Y_{n/2+1}, X_{n/2+1}X_{n/2+2}, Y_{n/2+2}, \cdots  X_{n-1} X_n, Y_n \}.
\end{aligned}
\end{equation}
One can extend this map to all Pauli strings while preserving the commutation relations. The explicit map for the Pauli strings with one non-identity component is given by
\begin{equation}
\begin{aligned}
    \sigma(X_k) &= \begin{cases}
        X_1 X_2 \cdots X_{k/2} \otimes X_{n/2 + k/2} & \text{if $k$ even} \\
        X_{(k+1)/2}\cdots X_{n/2} \otimes X_{n/2 + (k+1)/2} & \text{if $k$ odd}
    \end{cases}\\
    \sigma(Y_k) &= \begin{cases}
        Y_{k/2} \otimes Y_{n/2+1} Y_{n/2+2} \cdots Y_{n/2+ k/2} & \text{if $k$ even} \\
        Y_{(k+1)/2} \otimes Y_{n/2+ (k+1)/2} \cdots Y_{n} & \text{if $k$ odd}
    \end{cases} \\
    \sigma(Z_k) &= \begin{cases}
        X_1 X_2 \cdots X_{k/2-1} Z_{k/2}  \otimes Y_{n/2+1} Y_{n/2+2} \cdots Y_{n/2+ k/2-1} Z_{n/2+k/2} & \text{if $k$ even} \\
        Z_{(k+1)/2} X_{(k+1)/2+1}\cdots X_{n/2} \otimes Z_{n/2+(k+1)/2} Y_{n/2+ (k+1)/2+1} \cdots Y_{n} & \text{if $k$ odd},
    \end{cases} 
\end{aligned}
\end{equation}
where the $\otimes$ is added between the two subgroups for clarity.
This map preserves the commutation relations between the single operators $X_k, Y_k$, and $Z_k$ and the operators in $S_1$ and $S_2$. Since $\cX, \cY, \cZ, \cX(2),\cY(2)$ together generate the full Lie algebra, one can find the action of the map on any Pauli string by writing it as a nested commutators of these generators, applying $\sigma$ to all those generators, and then working out all the commutators.

\section{Pseudo-algorithm}\label{app:alg-computation}

For completeness and clarity, we append here in Alg.~\ref{alg:implementation} a pseudo-code version in Alg.~\ref{alg:implementation} of the explicit algorithm to find the nested commutators of a given Pauli string according to Thm.~\ref{thm:commutators_algorithmically}.
\begin{algorithm}[htbp]\label{alg:implementation}
\SetAlgoLined
\SetInd{0.5em}{0.5em}
\SetKwData{Left}{left}\SetKwData{This}{this}\SetKwData{Up}{up}
\SetKwFunction{Union}{Union}\SetKwFunction{FindCompress}{FindCompress}
\SetKwInOut{Input}{input}\SetKwInOut{Output}{output}
\Input{$\alpha_1 \alpha_2 \cdots \alpha_n$, $\alpha_1 \neq I$}
\Output{List of generator elements $V$ to commute with in sequence.}
\BlankLine
$V = [Z_1]$     \Comment{Start with $Z_1$} \\
$k \gets n$ \\
\While{$k>2$}{
\tcp{Make $Z_1 I\cdots I \alpha_k \alpha_{k+1} \cdots \alpha_n$}
\If{$\alpha_k=Z$}{
    $V \gets [X_1, Z_1, Z_1Z_2, X_1] +V$  \Comment{Makes $Z_1Z_2$}\\
    \For{$l=2$ \KwTo $k-1$}{
    $V \gets [X_l X_{l+1}, Z_l, Z_{l+1}, X_l X_{l+1}]+V$ \Comment{Makes $Z_1Z_k$}
    }
}
\ElseIf{$\alpha_k \in \{X, Y\}$}{
    \For{$l=2$ \KwTo $k$}{ 
    $V \gets [X_1, Z_1, Z_1Z_2, X_1] + V$  \\
    \For{$m=2$ \KwTo $l-1$}{ 
        $V \gets [X_l X_{l+1}, Z_l, Z_{l+1}, X_l X_{l+1}] + V$ \Comment{Makes $Z_1 Z_2 \cdots Z_k$}
        }
    }
    $V \gets [X_1 X_2, Z_1 Z_2, Z_1, X_1]+V$ \\
    \For{$l=2$ \KwTo $k-2$}{
        $V \gets [Z_{l+1}, X_l X_{l+1}] + V$
    }
    $V \gets [X_{k-1} X_k] + V$  \Comment{Makes $Z_1 Y_k$}\\
    \If{$\alpha_k=X$}{
        $V \gets [Z_k]+V$ \Comment{Makes $Z_1 X_k$}
    }
}
$k \gets k-1$
}
Append appropriate generators $S$ to make $\alpha_1 \alpha_2$ from $Z_1 I_2$.\\
$V \gets S+V$\\
return $V$;
\caption{Explicit sequence of commutators}\label{algo:explicit_commutators}
\end{algorithm}

\section{Motivation from operator algebra theory}\label{app:operator_algebra}

Phase transitions for string-net models provide profound insights for quantum computation and the ability to simulate large number of qubits. The theory of quasi local algebras provides very precise formulations on what phase transitions can mean in infinite dimension see in particular \cite{ArEw,EvLew}. Our focus on the $XY$ model is particularly inspired by the results of Araki and Matsui. A quasi local algebra is the limit of local algebras
 \[ A_{\La} \lel \prod_{p\in I} A_p \pl ,\pl \mathcal{A} \lel \lim_{\La\to \infty} A_{\La} \]
where $A_p$ is a collection of finite dimensional matrix algebras indexed by an integer lattice. The quasi-local algebra $\A$ is the natural direct limit of the local algebras. In the $C^*$-sense phase transition are expressed as properties of irreducible representations of $\mathcal{A}$. Of particular interest are so-called GNS representation $A_{\phi}=\pi_{\phi}(\mathcal{A})''$ given by a state $\phi:\mathcal{A}\to \mathbb{B}(H_{\phi})$. Here $H_{\phi}$ is the Hilbert space obtained from the inner product
 \[ (x,y)_{\phi} \lel \phi(x^*y) \pl .\]
$H_{\phi}$ admits a cyclic vector $\xi_{\phi}\cong 1$ such that $(\phi(x)=(\xi_{\phi},x\xi_{\phi})$. Associated with a given Hamiltonian different states have been constructed. 
An interaction Hamiltonian 
 \[ H \lel \sum_{J\subset \zz^d} H_z  \]
can usually be approximated by the local Hamiltonian   
 \[ H_{\La} \lel \sum_{J\subset \La} H_{J} \pl.  \]
One possibility is to take a cluster point of the corresponding Gibbs states
 \[ \phi_{\beta,\La}(x) \lel \frac{tr(e^{-\beta H_{\La}}x)}{tr(e^{-\beta H_{\La}}x)} \pl. \] 
Under mild conditions \cite{BR1,BR2} on the interaction Hamiltonian such a cluster point $\phi$ still satisfies the KMS condition 
\[ \phi(\al_t(x)y) \lel \phi(y\al_{-i\beta+t}(x)) \]
for the automorphism group 
 \[ \al_t(x) \lel \lim_{\La \to \infty}  e^{iH_{\la}}xe^{-itH_{\La}} \pl . \]
It is important and non-trivial to show that $\al_t$ extends to an automorphism of the closure $\mathcal{A}$.   
In \cite{ArakiB} ergodic properties, and more precisely the return to equilibrium is studied for the KMS state of the $XY$ model 
 \[ J_{\gamma,\la} \lel \sum_{j} (1+\gamma)X_{j}X_{j+1}+(1-\gamma)Y_jy_{j+1}+ \la \sum_j Z_j\]
on the one dimensional lattice. The result by Araki and Matsui \cite{AM} for the ground states are even more explicit. For this we have to define the derivation 
 \[ \delta(x) \lel i[H,x] \]
which is well-defined for elements in $A_N=A_{[-N.N]}$ and every $N$. Clearly this is an interaction model with uniformly bounded bounded terms because we are in 1D. A state is called ground state if
 \[ i\phi(x^*\delta(x))\gl 0 \]
for all $x\in \cup_N A_N$. 
\begin{theorem}[Araki-Matsui] The number of extremal ground states is
 \begin{enumerate}
     \item[i)] $1$ if $\gamma=0$ and $|\la|<2$;
     \item[ii)] $2$ if $\gamma\neq 0$ and $(\la,\gamma)\neq (0,\pm 1)$;
     \item[iii)] $\infty$ if $(\gamma,\lambda)=(\pm 1,0)$.
 \end{enumerate}
\end{theorem}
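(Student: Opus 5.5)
The plan is to follow Araki's strategy for the $XY$ chain: reduce the spin-chain ground-state problem to a quasi-free fermion problem on the CAR algebra, solve that problem explicitly by a Bogoliubov transformation, and then transport the answer back to the spin algebra. The relevant cancellation is that the Jordan--Wigner transform of the previous section (with $S_j=Z_1\cdots Z_{j-1}$) does not literally exist on the two-sided infinite lattice. It has to be replaced by Araki's construction. Introduce the automorphism $\Theta$ of the quasi-local spin algebra $\A$ given formally by conjugation with $\prod_{j\in\zz} Z_j$, and the one-sided flip $\Theta_-$ acting only on sites $j\le 0$. Set $\hat{\A}=\A\rtimes_{\Theta_-}\zz_2$. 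Inside $\hat{\A}$ one finds Majorana operators $\gamma_{2j-1},\gamma_{2j}$ generating a CAR algebra $\A^{CAR}$, with the same even part as $\A$, namely $\A^{\Theta}$. This is the infinite-volume analogue of the PauLie bookkeeping of Lemma~\ref{lem:PauLie_homomorphism}.

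\textbf{Step 1.} Rewrite the derivation $\delta(x)=i[J_{\gamma,\la},x]$ as a quadratic (quasi-free) derivation on $\A^{CAR}$:
\[ \delta(x)=i\Big[\tfrac{i}{4}\sum_{k,l}h_{kl}\gamma_k\gamma_l,\,x\Big]. \]
Here $h$ is a translation-invariant real antisymmetric one-particle operator on $\ell^2(\zz)\otimes\cz^2$. After Fourier transform, $h$ has a $2\times 2$ matrix symbol $\hat h(p)$ with eigenvalues $\pm\epsilon(p)$, where, up to normalization,
\[ \epsilon(p)=\big((\la-2\cos p)^2+4\gamma^2\sin^2p\big)^{1/2}. \]

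\textbf{Step 2.} Classify the ground states of the CAR system. A standard argument, positivity of $-i\phi(x^*\delta(x))$ applied to $x=$ creation operators of the Bogoliubov modes, shows the following.
\begin{itemize}
\item If $0$ is not an eigenvalue of $h$, the unique CAR ground state is the quasi-free state with two-point function given by the spectral projection $E_+=\chi_{(0,\infty)}(h)$.
\item If $\ker h\neq\{0\}$, any state on the CAR algebra of $\ker h$ can be tensored with the quasi-free state on the complement, giving infinitely many extremal ground states.
\end{itemize}
For the translation-invariant part, $\ker h=\{0\}$ whenever $\epsilon$ vanishes only on a null set. At $(\gamma,\la)=(\pm1,0)$, however, the model is the commuting classical Ising chain $2\sum_j X_jX_{j+1}$. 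There I would argue directly in the spin picture: every kink (domain-wall) product state is a local ground state, and these states are pairwise disjoint. This gives case iii).

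\textbf{Step 3.} Transfer back to $\A$. This is the step I expect to be the main obstacle. A ground state $\omega$ of $\A$ restricts to the even part $\A^\Theta=(\A^{CAR})^\Theta$, so it is determined by the unique quasi-free CAR ground state $\omega_{E_+}$ together with a choice of extension. The number of extremal extensions is $1$ or $2$, according to whether $\omega_{E_+}\circ\Theta_-$ and $\omega_{E_+}$ are disjoint or equivalent on the even CAR algebra. By the Powers--Størmer/Araki criterion, this is decided by whether $E_+-\Theta_-E_+\Theta_-$ is Hilbert--Schmidt, and then by the parity of the $\zz_2$-index
\[ \dim\big(E_+\wedge(1-\Theta_-E_+\Theta_-)\big)\bmod 2, \]
computed as the winding number of the symbol $\hat h(p)/\epsilon(p)$. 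The winding number is nonzero in the ordered regime $\gamma\neq0$, $|\la|<2$, producing the two symmetry-broken states of case ii). It vanishes for $\gamma=0$ or $|\la|$ large, giving uniqueness. Hence the hard part is the index computation together with the careful handling of the gapless points where $\epsilon$ vanishes: at $\gamma=0$ one must check that the Hilbert--Schmidt criterion still gives equivalence, so the ground state stays unique. At those points I would first try approximating by gapped symbols and using norm continuity of the index.
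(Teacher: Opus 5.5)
The paper gives no proof of this statement. It quotes Araki--Matsui \cite{AM} as motivation, and your architecture is indeed their route: Araki's two-sided Jordan--Wigner map inside $\A\rtimes_{\Theta_-}\zz_2$, the quasi-free ground state $\omega_{E_+}$, and the Hilbert--Schmidt/$\zz_2$-index criterion for extending from the even part. Item ii) as printed is misquoted, though. In this normalization it needs $|\la|<2$, and for $|\la|\ge 2$ the ground state is unique for every $\gamma$. Your own Step 3 gives uniqueness for large $|\la|$, so you should say that ii) is false as written, and also treat $|\la|=2$, rather than claim it. Beyond this, Step 3 has two genuine problems.

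\textbf{First problem: the reduction to $\omega_{E_+}$.} Step 3 asserts that a spin ground state is determined by $\omega_{E_+}$ plus a choice of extension, i.e.\ $\omega|_{\A^\Theta}=\omega_{E_+}|_{\A^\Theta}$. This is unjustified and in general false. The inequality $-i\omega(x^*\delta(x))\ge 0$ is only available for $x\in\A$. The odd part of $\A$ is $T\cdot(\A^{CAR})_-$, where $T$ implements $\Theta_-$, and $T$ does not commute with the dynamics because it flips the sign of the bond across the cut. So you only get the CAR ground-state inequality for \emph{even} $x$, and even local operators can never remove a single Bogoliubov quasiparticle, only move it.

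Your own case iii) exposes this. At $(\gamma,\la)=(\pm1,0)$ one has $\epsilon\equiv 2$, so $\ker h=\{0\}$ and $\omega_{E_+}$ is the unique CAR ground state; Step 3 would then give exactly two spin ground states. Yet the kink states are spin ground states whose even parts are one-quasiparticle states, not $\omega_{E_+}|_{\A^\Theta}$.

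The missing argument is to test the ground-state inequality with even bilinears in the quasiparticle modes. This should show that a spin ground state carries at most one quasiparticle, located at the minimum of $\epsilon$. A normalizable such state exists only if $\epsilon$ is constant, i.e.\ exactly at $(\pm1,0)$. It is this flat band, not anything about $\ker h$, that makes the Ising point special. Separately, kinks at different positions differ on finitely many sites, so they are unitarily equivalent, not pairwise disjoint; what you need is only that they are distinct pure ground states.

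\textbf{Second problem: the parity criterion is inverted.} Extensions from $\A^\Theta$ to $\A$ are governed by $\mathrm{Ad}(X_1)|_{\A^\Theta}$. Since $X_1=T\gamma_1$, this automorphism is $\Theta_-\circ\mathrm{Ad}(\gamma_1)$, not $\Theta_-$. If $E_+-\Theta_-E_+\Theta_-$ is Hilbert--Schmidt, then $\omega_{E_+}\circ\Theta_-$ is a vector state $\Omega'$ in the GNS space of $\omega_{E_+}$ with parity $(-1)^{\dim(E_+\wedge(1-\Theta_-E_+\Theta_-))}$. The relevant vector is $\gamma_1\Omega'$, which has the opposite parity. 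Hence there are two pure extensions iff the index is \emph{odd}. Your criterion, equivalence of $\omega_{E_+}$ and $\omega_{E_+}\circ\Theta_-$ on the even algebra, amounts instead to an even index. For a site-diagonal $E$ (the $|\la|\to\infty$ product state) one has $\Theta_-E\Theta_-=E$, and your criterion predicts two ground states where there is one.

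\textbf{The gapless case $\gamma=0$, $|\la|<2$.} Uniqueness there comes from the \emph{failure} of the Hilbert--Schmidt condition, not from equivalence. Write $\Theta_-=1-2P_-$ on the one-particle space, with $P_-$ the projection onto sites $\le 0$. Here $E_+$ is a Fermi-sea projection whose symbol jumps where $\cos p=\la/2$. So $P_-E_+(1-P_-)$ has entries decaying only like $1/|m-n|$ and is not Hilbert--Schmidt. The two states are therefore disjoint and the even extension is pure. The index is undefined at these gapless points and is not norm-continuous. Worse, approximating by gapped symbols with small $\gamma\neq0$ lands in the odd-index ordered phase, which would predict two ground states --- the wrong answer.
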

The Jordan-Wigner transform, also used in section 2, is a crucial tool in this result together with the notion of quasi-equivalence corresponding to central projections of $\mathcal{A}^{**}$ supporting GNS representations. Certainly, the case ii) could indicated different phases for ground states, in the $C^*$-sense. In our computational phase transitions the role values $\gamma=\{-1,0,1\}$ give rise to `computational phase transitions'
and the same applies to $\la =0$, depending of our choice of local resource.

\end{document}